\title{The largest eigenvalue of real symmetric, Hermitian and Hermitian self-dual random matrix models with rank one external source, part I}
\author{Dong Wang\footnote{Department of Mathematics, University of Michigan, Ann Arbor, MI, 48105, USA \newline email: \href{mailto:dowang@umich.edu}{\texttt{dowang@umich.edu}}}}
\newcommand{\pdf}{p.d.f.}
\newcommand{\ie}{i.e.\@}
\newcommand{\eg}{e.g.}
\newcommand{\cf}{cf.}
\newcommand{\Brezin}{Br\'{e}zin}
\newcommand{\Prob}{\mathbb{P}}
\newcommand{\realR}{\mathbb{R}}
\newcommand{\compC}{\mathbb{C}}
\newcommand{\quatH}{\mathbb{H}}
\newcommand{\acc}{\mathbf{a}_c}
\newcommand{\redge}{\mathbf{e}}
\newcommand{\J}{\mathcal{J}_V}
\newcommand{\gfn}{\mathbf{g}}
\newcommand{\Gfn}{\mathbf{G}}
\newcommand{\Hfn}{\mathbf{H}}
\newcommand{\Av}{\mathcal{A}_V}
\newcommand{\ximax}[1][n]{\xi_{\max}{(#1)}}
\newcommand{\Cont}{\mathcal{C}}
\newcommand{\constZ}{\mathbf{Z}}
\newcommand{\M}{\mathcal{M}}
\newcommand{\tildeP}{\tilde{P}}
\newcommand{\bfepsilon}{\tilde{\epsilon}}
\newcommand{\bfC}{\mathbf{C}}
\newcommand{\erf}{\Phi}
\newcommand{\A}{\mathbf{A}}
\newcommand{\Virag}{Vir\'{a}g}
\DeclareMathOperator{\Tr}{Tr}
\DeclareMathOperator{\diag}{diag}
\DeclareMathOperator{\supp}{supp}
\DeclareMathOperator{\E}{E}
\DeclareMathOperator{\dist}{dist}
\DeclareMathOperator{\pv}{p.v.}
\newtheorem{prop}{Proposition}[section]
\newtheorem{thm}{Theorem}[section]
\theoremstyle{definition}
\newtheorem{condition}{Condition}
\theoremstyle{remark}
\newtheorem{rmk}{Remark}[section]
\begin{document}

\maketitle

\begin{abstract}
  We consider the limiting location and limiting distribution of the largest eigenvalue in real symmetric ($\beta = 1$), Hermitian ($\beta = 2$), and Hermitian self-dual ($\beta = 4$) random matrix models with rank $1$ external source. They are analyzed in a uniform way by a contour integral representation of the joint probability density function of eigenvalues. Assuming the ``one-band'' condition and certain regularities of the potential function, we obtain the limiting location of the largest eigenvalue when the nonzero eigenvalue of the external source matrix is not the critical value, and further obtain the limiting distribution of the largest eigenvalue when the nonzero eigenvalue of the external source matrix is greater than the critical value. When the nonzero eigenvalue of the external source matrix is less than or equal to the critical value, the limiting distribution of the largest eigenvalue will be analyzed in a subsequent paper. In this paper we also give a definition of the external source model for all $\beta > 0$.
\end{abstract}

\section{Introduction and statement of results}

\subsection{Introduction}

In this paper we will be concerned with the distribution of the largest eigenvalue $\ximax$ in the following ensembles of matrices $\{ M \}$:
\begin{itemize}
\item 
  The set of $n \times n$ real symmetric matrices, with the probability distribution function (\pdf)
  \begin{equation} \label{eq:pdf_of_matrix_beta=1}
    p_{n,1}(M)dM := \frac{1}{\tilde{C}_{n,1}} e^{-n\Tr(V_1(M)-\A_{n,1}M)} dM.
  \end{equation}
\item
  The set of $n \times n$ Hermitian matrices, with the \pdf
  \begin{equation} \label{eq:pdf_of_matrix_beta=2}
    p_{n,2}(M)dM := \frac{1}{\tilde{C}_{n,2}} e^{-n\Tr(V_2(M)-\A_{n,2}M)} dM.
  \end{equation}
\item
  The set of $2n \times 2n$ self-dual Hermitian matrices, with the \pdf
  \begin{equation} \label{eq:pdf_of_matrix_beta=4}
    \hat{p}_{n,4}(M)dM := \frac{1}{\tilde{C}_{n,4}} e^{-n\Tr(\hat{V}_4(M)-\hat{\A}_{n,4}M)} dM.
  \end{equation}
\end{itemize}
Here for each $\beta = 1,2,4$, $\tilde{C}_{n,\beta}$ is the normalization constant, $V_{\beta}(x)$ (or $\hat{V}_{\beta}(x)$) is a real-valued function which grows fast enough, $\A_{n,\beta}$ (or $\hat{\A}_{n,\beta}$) is a fixed $n \times n$ real symmetric matrix, $n \times n$ Hermitian matrix and $2n \times 2n$ self-dual Hermitian matrix respectively. The function $V_{\beta}(x)$ (or $\hat{V}_{\beta}(x)$) is called the potential function and $\A_{n,\beta}$ (or $\hat{\A}_{n,\beta}$) is called the external source. The rank of the external source $\A_{n,\beta}$ (or $\hat{\A}_{n,\beta}$) is defined to be the number of nonzero eigenvalues of $\A_{n,\beta}$ if $\beta = 1,2$, or half of the nonzero eigenvalues of $\hat{\A}_{n,\beta}$ if $\beta = 4$. These ensembles are called real symmetric, Hermitian and Hermitian self-dual random matrix models with external source respectively. Throughout this paper, we address the three types of ensembles as $\beta$-external source ensembles with $\beta = 1,2,4$ respectively.

Note that in these three external source models, the distributions of eigenvalues of $M$ are unchanged if $\A_{n,\beta}$ (or $\hat{\A}_{n,\beta}$) is changed into $Q\A_{n,\beta} Q^{-1}$ (or $Q\hat{\A}_{n,\beta} Q^{-1}$), where $Q$ is in the orthogonal group $O(n)$, unitary group $U(n)$ and compact symplectic group $Sp(n)$ for $\beta = 1,2,4$ respectively. Since we are only concerned with the distribution of eigenvalues of $M$, we assume $\A_{n,\beta}$ (or $\hat{\A}_{n,\beta}$) to be diagonal without loss of generality. To make our presentation uniform for all values of $\beta$, we let $V(x)$ be a fixed function and 
\begin{equation} \label{eq:defination_of_matrix_A}
\A_n := \diag(a_1, \dots, a_n) 
\end{equation}
be an $n \times n$ diagonal matrix. We assume that $V_{\beta}(x)$ (or $\hat{V}_{\beta}(x)$) and $\A_{n,\beta}$ (or $\hat{\A}_{n,\beta}$) are defined from $V(x)$ and $\A_n$ such that
\begin{gather}
  V_1(x) = \frac{1}{2}V(x), \quad V_2(x)=\hat{V}_4(x) = V(x), \label{eq:relation_of_V_beta_to_V} \\
  \A_{n,1} = \frac{1}{2}\A_n, \quad \A_{n,2} = \A_n, \label{eq:relation_of_A_beta_to_A} \\
  \hat{\A}_{n,4} = \diag(a_1,a_1, a_2,a_2, \dots, a_n,a_n) \label{eq:diag_formula_of_hat_A_n4}
\end{gather}

Writing a $2n \times 2n$ self-dual Hermitian matrix into $2 \times 2$ blocks $\left( \begin{smallmatrix} a_{st} & b_{st} \\ c_{st} & d_{st} \end{smallmatrix} \right)^n_{s,t=1}$, we can express it as a quaternionic Hermitian matrix $(q_{st})^n_{s,t=1}$ whose $s,t$-entry comes from the $s,t$-block by
\begin{equation}
  \begin{pmatrix}
    a+bi & c+di \\
    -c+di & a-bi
  \end{pmatrix}
  = a + bi + cj + dk.
\end{equation}
In the quaternion form, the \pdf\ of the Hermitian self-dual external source model is
\begin{equation} \label{eq:quaternion_form_of_pdf_of_beta=4}
  p_{n,4}(M)dM := \frac{1}{\tilde{C}_{n,4}}
  e^{-n\Re\Tr(V_4(M)-\A_{n,4}M)} dM,
\end{equation}
where
\begin{equation} \label{eq:diag_formula_of_A_n4}
  V_4(x) = 2\hat{V}_4(x) = 2V(x), \quad \A_{n,4} = 2\A_n
\end{equation}
with  $\A_n$ defined in \eqref{eq:defination_of_matrix_A}. The $\A_{n,4}$ defined in \eqref{eq:diag_formula_of_A_n4} corresponds to the $\hat{\A}_{n,4}$ defined in \eqref{eq:diag_formula_of_hat_A_n4}. In Appendix \ref{sec:external_source_model_with_odd_n_and_general_beta_external_source_model} we use the quaternion form \pdf\ \eqref{eq:quaternion_form_of_pdf_of_beta=4} of the $4$-external source model to streamline the derivations for all $\beta$.

In this paper we concentrate on the rank $1$ case, \ie, 
\begin{equation} \label{eq:formula_of_rank_1_A}
  \A_n = \diag(a, \underbrace{0, \dots, 0}_{n-1}).
\end{equation}

\bigskip

Random matrices are powerful tools to simulate Hamiltonians of complex systems. Different types of random matrices, namely the real symmetric (aka orthogonal), the Hermitian (aka unitary) and the Hermitian self-dual (aka symplectic) ensembles are used for physical systems with different properties of time-reversal invariance \cite{Mehta04}. For random matrix models without external source in all the three types of ensembles, \ie, rank $0$ $\beta$-external source models, the distribution of the largest eigenvalue has been studied extensively for all the three $\beta$. If $\beta = 2$, for all real analytic potentials $V_2(x)$ under mild regularity conditions, the largest eigenvalue $\ximax$ approaches $\redge$, the right-end point of the equilibrium measure of $V_2(x)$ (see \eqref{eq:defn_of_redge} below),  with probability $1$ as $n \to \infty$, and the limiting distribution is the GUE Tracy-Widom distribution. See \eg\ \cite{Deift-Kriecherbauer-McLaughlin-Venakides-Zhou99} and \cite{Deift-Gioev07a}. If $\beta = 1$, for real analytic potentials $V_1(x)$ satisfying the ``one-band'' condition (\cf\ Condition \ref{condition:2} in Subsection \ref{subsection: Assumptions_on_Function_V(x)} below) and mild regularity conditions, the largest eigenvalue with probability $1$ approaches $\redge$ as $n \to \infty$, and the limiting distribution is the GOE Tracy-Widom distribution. See \cite{Shcherbina09}. If $\beta = 4$, similar result can be obtained and the limiting distribution is the GSE Tracy-Widom distribution. See \cite{Deift-Gioev07a}.

The random matrix model with external source was proposed by \Brezin\ and Hikami \cite{Brezin-Hikami96}, \cite{Brezin-Hikami98} to simulate complex systems with both random part and deterministic part. Although in all three types of random matrix ensembles the random matrix model with external source can be defined, due to technical reasons, only the Hermitian ($\beta = 2$) type has been studied for general potential functions. See \eg\ \cite{Bleher-Delvaux-Kuijlaars10} and references therein.

In \cite{Baik-Wang10a}, the Hermitian random matrix model with rank $1$ external source was studied for all real analytic potentials $V_2(x)$ under mild regularity conditions. For convex potentials, the universality of phase transition was proved. Let $V_2(x)$ be defined by $V(x)$ as in \eqref{eq:relation_of_V_beta_to_V}. In the rank $1$ $2$-external source model, with probability $1$, as $n \to \infty$
\begin{equation}\label{eq:location_of_largest_eigenvalue_convex_potential}
  \ximax \to
  \begin{cases}
    \redge & \textnormal{if $a \leq \frac{1}{2}V'(\redge)$,} \\
    x_0(a) & \textnormal{if $a > \frac{1}{2}V'(\redge)$,}
  \end{cases}
\end{equation}
where $a$ is the unique nonzero eigenvalue of the external source $\A_{n,2} = \A_n$, and $x_0(a)$ is a continuous increasing function in $a \in (\frac{1}{2}V'(\redge), \infty)$ such that $x_0(a) \to \redge$ as $a \to \frac{1}{2}V'(\redge)$, see \eqref{eq:defn_of_redge} and \eqref{eq:definition_of_x_0}. If $a < \frac{1}{2}V'(\redge)$, the limiting distribution of $\ximax$ is the GUE Tracy-Widom distribution, and if $a > \frac{1}{2}V'(\redge)$ the limiting distribution is Gaussian. For the double scaling $a = \frac{1}{2}V'(\redge) + \frac{\alpha}{n^{1/3}}$, the limiting distribution is the generalized Tracy-Widom distribution. If the potential is not convex, then new phenomena may occur. The ``critical value'' may be less than $\frac{1}{2}V'(\redge)$, and there may be ``secondary critical values''. The largest eigenvalue $\ximax$ may converge to two or more points if $a$ takes such values. The results were also obtained by Bertola, Buckingham, Lee and Pierce in \cite{Bertola-Buckingham-Lee-Pierce11} and \cite{Bertola-Buckingham-Lee-Pierce11a} independently.

For real symmetric and Hermitian self-dual matrix models with external source, known results are limited to special potentials. Let $V_1(x)$ and $\hat{V}_4(x)$ be defined by $V(x)$ as in \eqref{eq:relation_of_V_beta_to_V}, the rank $1$ $1$-external source model with Gaussian potential ($V(x) = x^2$ on the real line) and Laguerre potential ($V(x) = x - c\log(x)$ on half of real line) are studied in \eg\ \cite{Baik-Silverstein06}, \cite{Paul07}, \cite{Feral-Peche07} and \cite{Capitaine-Donati_Martin-Feral09}. The limiting location of the largest eigenvalue is given by formula \eqref{eq:location_of_largest_eigenvalue_convex_potential}, the same as in the corresponding rank $1$ $2$-external source model, where $a$ is the nonzero eigenvalue of $\A_n$ and $\A_{n,1}$ is defined by \eqref{eq:relation_of_A_beta_to_A}. If $a > \frac{1}{2}V'(\redge)$, then the limiting distribution of $\ximax$ is Gaussian, with variance twice of that in the corresponding Hermitian ($\beta = 2$) external source model. If $a < \frac{1}{2}V'(\redge)$, the limiting distribution of $\ximax$ is the GOE Tracy-Widom distribution. The rank $1$ $4$-external source model with Laguerre potential is studied in \cite{Wang08}, where the limiting location of the largest eigenvalue is found to be given by formula \eqref{eq:location_of_largest_eigenvalue_convex_potential}, the same as in the corresponding rank $1$ $2$-external source model, where $a$ is the nonzero eigenvalue of $\A_n$ and $\hat{\A}_{n,4}$ is defined by \eqref{eq:diag_formula_of_hat_A_n4}. If $a > \frac{1}{2}V'(\redge)$, then the limiting distribution of $\ximax$ is Gaussian, with variance half of that in the corresponding Hermitian ($\beta = 2$) external source model. If $a < \frac{1}{2}V'(\redge)$, the limiting distribution of $\ximax$ is the GSE Tracy-Widom distribution. In \cite{Wang08} the limiting distribution of $\ximax$ when $a = \frac{1}{2}V'(\redge)$ is also obtained.

In the recent preprint \cite{Bloemendal-Virag11}, Bloemendal and \Virag\ obtained the limiting distribution of the largest eigenvalue $\ximax$ when the potential is Gaussian or Laguerre, for all $\beta$ and for all $a$. When $a$ is at or near $\frac{1}{2}V'(\redge)$, they described the limiting distribution function of $\ximax$ via the unique solution to a PDE. The recent preprint \cite{Mo11} by Mo indicates a new approach to study the limiting distribution of $\ximax$ in the rank $1$ $1$-external source model with Laguerre potential when $a$ is at or near $\frac{1}{2}V'(\redge)$, see also \cite{Mo11a}. The contour integral formula in \cite[Theorem 1]{Mo11} is equivalent to that of Proposition \ref{prop:contour_rep_of_pdf_of_largest_eigenvalue} in this paper with $\beta = 1$ and Laguerre potential  (\cf\ Remark \ref{rmk:relation_to_Mo}). In \cite{Mo11}, Mo further simplified the integrand in the contour integral formula, (see \cite[Theorem 3]{Mo11},) and he applied it in the asymptotic analysis in \cite{Mo11a} to obtain a result similar to that in \cite{Bloemendal-Virag11}. In this paper, we take a different approach to apply Proposition \ref{prop:contour_rep_of_pdf_of_largest_eigenvalue} in asymptotic analysis. The reader may also compare our paper with the paper \cite{Benaych_Georges-Nadakuditi11} by Benaych-Georges and Nadakuditi, where they considered a different kind of low rank perturbations of large random matrices.

In this paper, we consider the rank $1$ $\beta$-external source models with general potential $V_{\beta}(x)$ (or $\hat{V}_{\beta}(x)$) which are defined by $V(x)$. The conditions satisfied by $V(x)$ will be given in Subsection \ref{subsection: Assumptions_on_Function_V(x)}. We find that the ``critical value'' is independent of $\beta$, and for all $\beta = 1,2,4$ find the limiting location of the largest eigenvalue $\ximax$ when $a$, the nonzero eigenvalue of $\A_n$, is not equal to the critical value. When $a$ is greater than the critical value, we also find the limiting distribution of $\ximax$.

Besides the asymptotic results summarized above, in Appendix \ref{sec:external_source_model_with_odd_n_and_general_beta_external_source_model} we also have an algebraic result: the definition of the $\beta$-external source model with general $\beta > 0$. Here we note that the analytic method presented in this paper can be used to study the rank $1$ $\beta$-external source model with general $\beta$.

\subsection{Assumptions on $V(x)$} \label{subsection: Assumptions_on_Function_V(x)}

Throughout this paper, we assume four conditions on $V(x)$, the function in \eqref{eq:relation_of_V_beta_to_V} and \eqref{eq:diag_formula_of_A_n4}. The first is 
\begin{condition} \label{condition:1}
$V(x)$ is a polynomial of degree $2l$ and with positive leading coefficient.
\end{condition}
To state the other three conditions, we need a few definitions. Recall that if $V(x)$ is regarded as a potential function on $\realR$ itself, there is an equilibrium measure $\mu$ associated to $V(x)$ defined by a certain variational problem. See \eqref{eq:variational_condition_in_support} and \eqref{eq:variational_condition_out_of_support}, and the general references \cite{Saff-Totik97} and \cite{Deift-Kriecherbauer-McLaughlin98}. Since $V(x)$ is a polynomial, $\mu$ is supported on a finite union of intervals. In this paper we assume that $V(x)$ satisfies the ``one-band'' condition:
\begin{condition} \label{condition:2}
The equilibrium measure $\mu$ associated to $V(x)$ is supported on a single interval $J = [b_1, b_2]$.
\end{condition}
For the function $V(x)$ satisfying Conditions \ref{condition:1} and \ref{condition:2}, the equilibrium measure $\mu$ has the form
\begin{equation} \label{eq:formula_of_equilibrium_measure}
  d\mu :=
  \Psi(x) \chi_{J}dx = \frac{1}{2\pi}
  \sqrt{(b_2-x)(x-b_1)}h(x) \chi_{J}dx,
\end{equation}
where $\chi_{J}$ is the indicator function and $h(x)$ is a polynomial of degree $2l-2$. The next condition assumed on $V(x)$ is
\begin{condition} \label{condition:3}
The function $h(x)$ in the formula \eqref{eq:formula_of_equilibrium_measure} of the equilibrium measure $\mu$ of $V(x)$ has only non-real zeros.
\end{condition}
The equilibrium measure $d\mu = \Psi(x)dx$ is characterized by the conditions
\begin{align}
  2\int_J \log \lvert x-s \rvert \Psi(s)ds -V(x) = \ell \quad & \textnormal{for $x \in J$,}  \label{eq:variational_condition_in_support} \\
  2\int_J \log \lvert x-s \rvert \Psi(s)ds -V(x) \leq \ell \quad & \textnormal{for $x \in \realR \setminus J$.}  \label{eq:variational_condition_out_of_support}
\end{align}
The last condition assumed on $V(x)$ is
\begin{condition} \label{condition:4}
  The inequality \eqref{eq:variational_condition_out_of_support} is strict.
\end{condition}

\begin{rmk}
  Conditions \ref{condition:1}--\ref{condition:3} are assumed to apply Proposition \ref{prop:Johansson} in our paper, and they are not used anywhere else in this paper. If Proposition \ref{prop:Johansson} can be proved under weaker conditions, \eg\ the conditions assumed in \cite[Theorem 1]{Kriecherbauer-Shcherbina11} \footnote{Mariya Shcherbina informed the author that Proposition \ref{prop:Johansson} can be proved under the consitions assumed in \cite[Theorem 1]{Kriecherbauer-Shcherbina11} through private communication.}, these conditions can be weakened accordingly. 
\end{rmk}
\begin{rmk}
  Functions $V$ satisfying all Conditions \ref{condition:1}--\ref{condition:4} also satisfy the assumptions of $V$ in \cite[Formulas (6)--(8)]{Baik-Wang10a}. Thus all the results in \cite{Baik-Wang10a} on $V$ can be applied in this paper.
\end{rmk}
\begin{rmk}
  If $V(x)$ is a convex polynomial with positive leading coefficient, $V(x)$ satisfies Conditions \ref{condition:1}--\ref{condition:3} by \cite[Proposition 3.1]{Johansson98}, and it is straightforward to verify that $V(x)$ satisfies Condition \ref{condition:4}.
\end{rmk}

\subsection{Preliminary notations} \label{subsec:preliminary_notations}

To state the results in this paper, we need a few more notations. We follow the notational convention in \cite{Baik-Wang10a} to denote the right edge of the support of the equilibrium measure
\begin{equation} \label{eq:defn_of_redge}
  \textnormal{$\redge := b_2$, the right edge of $J = [b_1, b_2]$, the support of the equilibrium measure $\mu$.}
\end{equation}
The so called $\gfn$-function is defined by
\begin{equation} \label{eq:defn_of_g_function}
\gfn(z) := \int_J \log(z-s) \Psi(s) ds, \quad \textnormal{for $z \in \compC \setminus (-\infty, \redge)$.}
\end{equation}
For $a \in (0, \frac{1}{2}V'(\redge))$, define $c(a)$ as the unique point in $(\redge, \infty)$ such that
\begin{equation}
\gfn'(c(a)) = \int_J \frac{d\mu(x)}{c(a)-x} = a.
\end{equation}
The properties of $\gfn(x)$ used in this paper is summarized below (see \cite[Formula (30)]{Baik-Wang10a}).
\begin{equation}
  \begin{gathered}
    \gfn'(x) > 0, \quad \gfn''(x) < 0 \quad \textnormal{for $x \in (\redge, \infty)$,} \\
    \gfn(\redge) = \frac{V(\redge) + \ell}{2}, \quad \gfn'(\redge) = \frac{V'(\redge)}{2}, \quad \lim_{x \to \infty} \gfn'(x) = 0.
  \end{gathered}
\end{equation}
For $a \geq \frac{1}{2}V'(\redge)$, define $c(a) := \redge$. We also define two auxiliary functions
\begin{align}
\Gfn(z) = \Gfn(z;a) := & \gfn(z) - V(z) + az, \label{eq:defn_of_Gfn} \\
\Hfn(z) = \Hfn(z;a) := & -\gfn(z) + az + \ell, \label{eq:defn_of_Hfn}
\end{align}
for $z \in \compC \setminus (-\infty, \redge)$. Condition \ref{condition:4} of $V$ and the condition \eqref{eq:variational_condition_in_support} imply that for any $a$
\begin{gather}
  \Gfn(\redge;a) = \Hfn(\redge;a) = -\frac{1}{2}V(\redge) + a\redge + \frac{\ell}{2}, \\
  \Gfn(x;a) < \Hfn(x;a) \quad \textnormal{for $x \in (\redge, \infty)$.}
\end{gather}
The convexity of $\gfn(x)$ on $(\redge, \infty)$ yields that for $u > c(a)$,
\begin{equation} \label{eq:positivity_of_H'(u;a)}
  \Hfn'(u;a) = a - \int \frac{d\mu(x)}{u-x} > 0.
\end{equation}

Define the set 
\begin{equation} \label{eq:defn_of_A_v}
  \Av := \{ a \in (0, \infty) \mid  \textnormal{there exists $\bar{x} \in (c(a),\infty)$}  \textnormal{such that $\Gfn(\bar{x};a) > \Hfn(c(a);a)$} \}.
\end{equation} 
It is proved in \cite[Lemma 1.2]{Baik-Wang10a} that $\Av$ is an open, semi-infinite interval. From $\Av$ we define
\begin{equation} \label{eq:definition_of_acc}
\acc = \acc(V) := \inf \Av.
\end{equation}
It is also proved in \cite[Lemma 1.2]{Baik-Wang10a} that $\acc > 0$.

Let 
\begin{equation} \label{defn_of_G_max}
\Gfn_{\max}(a) := \max_{x \in[c(a),\infty)} \Gfn(x;a). 
\end{equation}
Define the discrete set
\begin{equation} \label{eq:definition_of_J_V}
  \begin{split}
    \J := \{ a \in [\acc, \infty) \mid & \textnormal{$\Gfn_{\max}(a)$ is attained at more than one point} \}.
  \end{split}
\end{equation}
For $a > \acc$ and $a \not\in \J$, there is a unique $x_0(a) \in (c(a), \infty)$ such that (\cf\ \cite[Lemma 1.3]{Baik-Wang10a})
\begin{equation} \label{eq:definition_of_x_0}
\Gfn_{\max}(a) = \Gfn(x_0(a);a).
\end{equation}
For $a > \acc$ and $a \in \J$, there are $r \geq 2$ and  $c(a) < x_1(a) < x_2(a) < \dots < x_r(a)$ such that
\begin{equation} \label{eq:definition_of_x_1_to_x_r}
  \Gfn_{\max}(a) = \Gfn(x_1(a);a) = \dots = \Gfn(x_r(a);a).
\end{equation}
We define the set of secondary critical values as $\J \setminus \{ \acc \}$ (\cf\ \cite[Definition 1.3]{Baik-Wang10a} ).

\begin{rmk}
  If the potential $V$ is convex for $x \geq \redge$, $\J = \emptyset$. See \cite[Remark 1.2]{Baik-Wang10a}.
\end{rmk}

\subsection{Statement of main results}

Let $V(x)$ be a function that satisfies Conditions \ref{condition:1}--\ref{condition:4}. For any $n$ and $\beta = 1,2,4$, let the $n$-dimensional $\beta$-external source models be defined by \pdf{s} \eqref{eq:pdf_of_matrix_beta=1}, \eqref{eq:pdf_of_matrix_beta=2} and \eqref{eq:pdf_of_matrix_beta=4} respectively, with potentials $V_{\beta}(x)$ (or $\hat{V}_{\beta}(x)$) given by \eqref{eq:relation_of_V_beta_to_V} and external sources $\A_{n,\beta}$ (or $\hat{\A}_{n,\beta}$) given by  \eqref{eq:defination_of_matrix_A}, \eqref{eq:relation_of_A_beta_to_A} and \eqref{eq:diag_formula_of_hat_A_n4}. We assume that $\A_n$ has only one nonzero eigenvalue $a$, as in \eqref{eq:formula_of_rank_1_A}. In each $n$-dimensional $\beta$-external source model, let $\ximax$ be the largest eigenvalue of the random matrix. The theorems below are stated uniformly for all $\beta$-external source models ($\beta = 1,2,4$). In the case $\beta = 1$, we assume that the dimension $n$ is even. For $\beta = 1$ and $n$ is odd, the theorems below also hold, and we discuss it briefly in Appendix \ref{sec:external_source_model_with_odd_n_and_general_beta_external_source_model}. 
First we show the limiting location of the largest eigenvalue.
\begin{thm} \label{thm:limiting_position}
  The following hold for each fixed $a$ as $n \to \infty$:
  \begin{enumerate}[label=(\alph*)]
  \item \label{enu:thm:limiting_position:1}
    If $0 < a < \acc$, or $a = \acc = \frac{1}{2}V'(\redge)$ and $\acc \notin \mathcal{J}_V$, $\xi_{\max}(n) \to \redge$ with probability $1$.

  \item \label{enu:thm:limiting_position:2} 
    If $a > \acc$ and $\acc \not\in \J$, then $\xi_{\max}(n) \to x_0(a)$ with probability $1$, where $x_0(a)$ is defined in \eqref{eq:definition_of_x_0}.

  \item \label{enu:thm:limiting_position:3} 
    If $a > \acc$ and $a \in \J$, then there exist $r \geq 2$ and $x_1(a)$, \dots, $x_r(a)$ defined in \eqref{eq:definition_of_x_1_to_x_r}. Under the assumption that $\Gfn''(x_j(a)) \neq 0$ for all $j=1, \dots, r$, then $\xi_{\max}(n)$ converges to $x_j(a)$ with nonzero probability $p_{j,\beta}(0)$ for $j=1, \dots, r$. Here $p_{j,\beta}(0)$ are defined in \eqref{eq:defn_of_p_i_two_normal_points} and $\sum^r_{j=1} p_{j,\beta}(0) = 1$.
  \end{enumerate}
\end{thm}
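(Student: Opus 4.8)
The plan is to base the argument on the exact contour-integral representation of the probability density function of $\xi_{\max}(n)$ given by Proposition~\ref{prop:contour_rep_of_pdf_of_largest_eigenvalue}, which holds uniformly for $\beta=1,2,4$, and on Proposition~\ref{prop:Johansson}, which supplies the large-$n$ behaviour of the rank-$0$ $\beta$-ensemble near its edge $\redge$. First I would deform the contour in that representation so that the phase governing the large-$n$ asymptotics becomes $\Gfn(\cdot\,;a)$ of \eqref{eq:defn_of_Gfn} on the ``outlier'' range $x\in[c(a),\infty)$ and $\Hfn(\cdot\,;a)$ of \eqref{eq:defn_of_Hfn} on the ``edge'' range near $\redge$; a saddle-point/Laplace evaluation then gives, for $t$ in a fixed compact subset of $(\redge,\infty)$, $p_{\xi_{\max}(n),\beta}(t)=e^{n(\Gfn(t;a)-\Gfn_{\max}(a))+o(n)}$ up to an algebraically bounded, $\beta$-dependent prefactor (with $\Gfn_{\max}(a)$ as in \eqref{defn_of_G_max}), and, for $t$ near $\redge$, the edge-rescaled density of the rank-$0$ ensemble from Proposition~\ref{prop:Johansson} times a factor of exponential order $e^{n(\Hfn(c(a);a)-\max(\Gfn_{\max}(a),\Hfn(c(a);a)))+o(n)}$. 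Thus whether $\xi_{\max}(n)$ stays at the edge or escapes to an outlier is decided by the sign of $\Gfn_{\max}(a)-\Hfn(c(a);a)$, which by \eqref{eq:defn_of_A_v} and \cite[Lemma~1.2]{Baik-Wang10a} is positive exactly when $a>\acc$ (for $\beta=2$ this is essentially the analysis of \cite{Baik-Wang10a}).

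Granting this, the three parts follow from the geometry of $\Gfn(\cdot\,;a)$ and $\Hfn(\cdot\,;a)$ recorded in the excerpt and in \cite[Lemmas~1.2 and~1.3]{Baik-Wang10a}. For part~\ref{enu:thm:limiting_position:1}: if $0<a<\acc$ then $\Gfn_{\max}(a)<\Hfn(c(a);a)\le\Gfn(\redge;a)$ (the last inequality because $\Hfn(\redge;a)=\Gfn(\redge;a)$ and $\Hfn(\cdot\,;a)$ is non-increasing on $[\redge,c(a)]$ by \eqref{eq:positivity_of_H'(u;a)} and the convexity of $\gfn$), so the density has total mass $\le e^{-c_\epsilon n}$ on $[\redge+\epsilon,\infty)$; together with Proposition~\ref{prop:Johansson}, which also forces $\xi_{\max}(n)\ge\redge-\epsilon$ eventually, this yields $\xi_{\max}(n)\to\redge$. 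In the borderline case $a=\acc=\frac{1}{2}V'(\redge)$ with $\acc\notin\J$ one has $c(a)=\redge$, hence $\Gfn(x;a)\le\Hfn(\redge;a)=\Gfn(\redge;a)$ for $x>\redge$, so $\Gfn_{\max}(a)=\Gfn(\redge;a)$, and since $\acc\notin\J$ the maximum is attained only at $\redge$; thus $\Gfn(x;a)<\Gfn(\redge;a)$ for all $x>\redge$ and the same exponential-tail estimate gives $\xi_{\max}(n)\to\redge$. For part~\ref{enu:thm:limiting_position:2}: $a>\acc$ with $\acc\notin\J$ gives, by \cite[Lemma~1.3]{Baik-Wang10a}, a unique maximizer $x_0(a)\in(c(a),\infty)$ with $\Gfn_{\max}(a)=\Gfn(x_0(a);a)>\Hfn(c(a);a)$, and the Laplace analysis concentrates the density at $x_0(a)$ with mass $\le e^{-c_\epsilon n}$ outside any $\epsilon$-neighbourhood. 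For part~\ref{enu:thm:limiting_position:3}: $a>\acc$ with $a\in\J$ gives $r\ge2$ maximizers $c(a)<x_1(a)<\dots<x_r(a)$ all attaining $\Gfn_{\max}(a)$; under the hypothesis $\Gfn''(x_j(a))\neq0$ the quadratic Laplace expansion at $x_j(a)$ contributes a weight proportional to $\lvert\Gfn''(x_j(a))\rvert^{-1/2}$ times a local $\beta$-dependent factor, so that in the limit the density's mass splits into lumps whose sizes are, by definition \eqref{eq:defn_of_p_i_two_normal_points}, the numbers $p_{j,\beta}(0)$, which therefore sum to $1$.

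The almost-sure statements in parts~\ref{enu:thm:limiting_position:1} and~\ref{enu:thm:limiting_position:2} would then follow by Borel--Cantelli, since the exponential tail bounds make $\sum_n\Prob(\lvert\xi_{\max}(n)-x^*\rvert>\epsilon)$ convergent, with $x^*=\redge$ resp.\ $x^*=x_0(a)$; the bound $\xi_{\max}(n)\ge\redge-\epsilon$ eventually, used here, follows from Proposition~\ref{prop:Johansson} for the rank-$0$ ensemble together with the representation of Proposition~\ref{prop:contour_rep_of_pdf_of_largest_eigenvalue} (the rank-one tilt does not alter the limiting spectral distribution). In part~\ref{enu:thm:limiting_position:3} the assertion is the distributional one, that the mass the density assigns to a fixed small neighbourhood of $x_j(a)$ tends to $p_{j,\beta}(0)$.

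I expect the main obstacle to be the steepest-descent evaluation of the representation, carried out uniformly in $t$ and simultaneously for $\beta=1,2,4$. For $\beta=2$ the integrand is essentially a single exponential and the computation is routine; for $\beta=1$ and $\beta=4$, however, the integrand in Proposition~\ref{prop:contour_rep_of_pdf_of_largest_eigenvalue} carries the Pfaffian and $2\times 2$-block structure of the orthogonal and symplectic ensembles together with an extra integration, so I must locate the dominant part of the (multiple) contour, verify that the remaining pieces have strictly smaller exponential order, and---in part~\ref{enu:thm:limiting_position:3}---match the local Gaussian contributions near each $x_j(a)$ precisely with the $p_{j,\beta}(0)$ of \eqref{eq:defn_of_p_i_two_normal_points}. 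The second delicate point is the transition region near $\redge$, where the integral must be glued to the edge asymptotics of Proposition~\ref{prop:Johansson} while keeping every error term exponentially small, which is exactly what upgrades convergence in probability to the almost-sure convergence claimed in parts~\ref{enu:thm:limiting_position:1} and~\ref{enu:thm:limiting_position:2}.
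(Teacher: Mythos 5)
Your overall strategy coincides with the paper's: use the contour-integral representation of Proposition~\ref{prop:contour_rep_of_pdf_of_largest_eigenvalue}, evaluate it by steepest descent so that the exponential rate of $f_{\ximax}(u)$ is governed by $\Gfn(u;a)$ for $u>c(a)$ and by $\Hfn(c(a);a)$ for $u$ near the edge, compare the rates via the sign of $\Gfn_{\max}(a)-\Hfn(c(a);a)$, and in part~\ref{enu:thm:limiting_position:3} carry out quadratic Laplace expansions at the $r$ maximizers to extract the weights $p_{j,\beta}(0)$. This is indeed what the paper does (Sections~\ref{sec:the_pdf_of_the_largest_eigenvalue}--\ref{sec:proof_of_theorems}). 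Two points in your write-up, however, are genuinely wrong, and both reflect a misreading of the technical input.

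First, the integrand in Proposition~\ref{prop:contour_rep_of_pdf_of_largest_eigenvalue} does \emph{not} carry any Pfaffian or $2\times2$-block structure for $\beta=1$ or $\beta=4$, and there is no ``extra integration'' beyond the single contour. The representation is a single contour integral of $Z_{n-1,\beta}(u,w)\,e^{\frac{\beta}{2}anw}/(w-u)^{\beta/2}$, where $Z_{n-1,\beta}(u,w)$ defined in \eqref{eq:definition_of_Z_n-1_beta} is simply an expectation against the $(n-1)$-point rank-zero $\beta$-ensemble. This is precisely the point of the Jack-polynomial derivation in Appendix~\ref{sec:external_source_model_with_odd_n_and_general_beta_external_source_model}: it produces an integrand that is uniform in $\beta$, and the asymptotics of $Z_{n-1,\beta}$ are supplied (via Johansson's Proposition~\ref{prop:Johansson}) by Proposition~\ref{cor:outlier_micro}, not by an orthogonal/symplectic Pfaffian analysis. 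Planning a Pfaffian steepest descent here is planning to solve a much harder problem than the one the representation actually poses.

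Second, your Borel--Cantelli step is spurious. The random variables $\ximax(n)$ for different $n$ are not defined on a common probability space, so $\Prob(\lvert\ximax(n)-x^*\rvert>\epsilon \text{ i.o.})$ is not even well posed without an explicit coupling, which neither you nor the paper construct. What the paper actually proves is convergence in probability, namely $\Prob(\ximax\in[\redge-\epsilon,\redge+\epsilon])\to1$ for every $\epsilon$ (see \eqref{eq:final_for_proof_of_thm_1.1a_first}); the phrase ``with probability 1'' in Theorem~\ref{thm:limiting_position} is to be read in that sense. Relatedly, your assertion that the total mass on $[\redge+\epsilon,\infty)$ is $\le e^{-c_\epsilon n}$ assumes control of the overall normalization $\bfC_{n,\beta}$, which the paper does not supply; instead the paper bounds all regions \emph{relative} to a reference interval near $\redge$ (compare \eqref{eq:sub_Prob_essential} with \eqref{eq:sub_Prob_other}, \eqref{eq:sub_Prob_closer}) and passes to the conditional probability \eqref{eq:sub_conditional_probability_convergence}, thereby sidestepping the unknown constant. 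These two repairs, replacing the Pfaffian subplot by the uniform-in-$\beta$ Proposition~\ref{cor:outlier_micro} and replacing the Borel--Cantelli claim by the relative-mass comparison, would bring your proposal into line with the paper's proof.
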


\begin{rmk}
  If $a < 0$, Theorem \ref{thm:limiting_position}\ref{enu:thm:limiting_position:1} still holds, and the method of proof is similar to that in the $0 < a < \acc$ case. Since when $a < 0$ there is no interesting phase transition phenomenon for the distribution of the largest eigenvalue (while there is a similar one of the smallest eigenvalue) and the proof is long and parallel to the $a > 0$ case, we skip further discussions about the $a < 0$ case.
\end{rmk}

If $a > \acc$, we have the limiting distribution of the largest eigenvalue. If $a$ is not at or near secondary critical values, we have the following result that strengthens Theorem \ref{thm:limiting_position}\ref{enu:thm:limiting_position:2}.
\begin{thm} \label{thm:limiting_distr_nondegenerate}
  The following hold for $a > \acc$ and $a \not\in \J$ as $n \to \infty$.
  \begin{enumerate}[label=(\alph*)]
  \item \label{enu:thm:limiting_distr_nondegenerate:1} If $\Gfn''(x_0(a);a) \neq 0$, then for any $T \in \realR$
    \begin{equation}
      \lim_{n \to \infty} \Prob \left( \xi_{\max}(n) < x_0(a) + \frac{T}{\sqrt{-\frac{\beta}{2}\Gfn''(x_0(a);a)n}} \right) = \erf(T),
    \end{equation}
    where $\erf(T) := \frac{1}{\sqrt{2\pi}} \int^T_{-\infty}  e^{-\frac{1}{2}\xi^2} d\xi$ denotes the cumulative distribution function of standard normal distribution.

  \item \label{enu:thm:limiting_distr_nondegenerate:2} If
    $\Gfn^{(2k)}(x_0(a);a) \neq 0$ and $\Gfn^{(j)}(x_0(a);a)= 0$ for $j = 1, \dots, 2k-1$ where $k > 1$, then for any $T \in \realR$
    \begin{equation}
      \lim_{n \to \infty} \Prob \left( \xi_{\max}(n) < x_0(a) + \left( \frac{-\frac{\beta}{2}\Gfn^{(2k)}(x_0(a);a)n}{(2k)!} \right)^{-1/(2k)} T \right) = \frac{\int^{T}_{-\infty} e^{-\xi^{2k}} d\xi}{\int^{\infty}_{-\infty}  e^{-\xi^{2k}} d\xi}.
    \end{equation}
  \end{enumerate}
\end{thm}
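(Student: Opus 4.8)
The plan is to start from the contour integral representation of the probability density function of $\ximax$ provided by Proposition~\ref{prop:contour_rep_of_pdf_of_largest_eigenvalue} (the uniform formula valid for $\beta = 1,2,4$), and perform a steepest-descent / Laplace-type asymptotic analysis of the resulting integral. The key point, already visible in Theorem~\ref{thm:limiting_position}\ref{enu:thm:limiting_position:2}, is that the exponentially dominant contribution comes from the function $\Gfn(z;a)$, whose maximum on $[c(a),\infty)$ is attained at the single point $x_0(a)$ since $a > \acc$ and $a \notin \J$. So I would write $\Prob(\ximax(n) < x_0(a) + t_n T)$, where $t_n$ is the appropriate scaling (respectively $(-\tfrac{\beta}{2}\Gfn''(x_0(a);a)\,n)^{-1/2}$ in part~\ref{enu:thm:limiting_distr_nondegenerate:1} and $\bigl((2k)!\bigr)^{1/(2k)}(-\tfrac{\beta}{2}\Gfn^{(2k)}(x_0(a);a)\,n)^{-1/(2k)}$ in part~\ref{enu:thm:limiting_distr_nondegenerate:2}), as a ratio of two integrals: the numerator integrates the density over $(-\infty, x_0(a)+t_nT)$ and the denominator over all of $\realR$ (equal to $1$). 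After suitable contour deformation, both reduce to integrals of $e^{n\beta\Gfn(z;a)/2}$ (up to subexponential prefactors) against a contour passing through $x_0(a)$.

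Next I would localize. Outside a shrinking neighborhood of $x_0(a)$ the integrand is exponentially smaller by the strict maximum property of $\Gfn$ at $x_0(a)$ (using the hypotheses $a>\acc$, $a\notin\J$, together with the behavior of $\Gfn$ and $\Hfn$ recorded in Subsection~\ref{subsec:preliminary_notations}, in particular $\Gfn(x;a)<\Hfn(x;a)$ and the convexity/monotonicity of $\gfn$), so the whole integral is, to leading order, captured by a microscopic window of size $O(t_n)$ around $x_0(a)$. Inside that window I would Taylor-expand $\Gfn(z;a) = \Gfn(x_0(a);a) - \tfrac{1}{(2k)!}\lvert\Gfn^{(2k)}(x_0(a);a)\rvert\,(z-x_0(a))^{2k} + \cdots$ (with $k=1$ in part~\ref{enu:thm:limiting_distr_nondegenerate:1}), substitute $z = x_0(a) + t_n\xi$, and check that the exponent converges to $-\xi^{2k}$ while the prefactors and the $O(t_n^{2k+1})$ error terms in the exponent vanish uniformly on compact $\xi$-sets. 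The ratio of the truncated Gaussian-type integral $\int_{-\infty}^{T}e^{-\xi^{2k}}d\xi$ to the full integral $\int_{-\infty}^{\infty}e^{-\xi^{2k}}d\xi$ then gives the stated limit; in the case $k=1$ this is exactly $\erf(T)$ after the change of variable normalizing the variance, which explains the $\sqrt{-\tfrac{\beta}{2}\Gfn''n}$ scaling and the factor $\beta/2$ distinguishing the three ensembles.

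I expect the main obstacle to be controlling the non-leading parts of the integrand uniformly — that is, showing that the pieces of the contour away from $x_0(a)$, and the auxiliary factors appearing in the exact formula of Proposition~\ref{prop:contour_rep_of_pdf_of_largest_eigenvalue} (which for $\beta=1,4$ involve extra integrations or Pfaffian-type structure beyond the clean $\beta=2$ case), are genuinely negligible and do not contribute even polynomially, so that the ratio of numerator to denominator converges to the ratio of the two model integrals rather than to something off by a constant. This requires a careful choice of deformed contour on which $\Re\Gfn$ has a strict global maximum at $x_0(a)$, plus tail estimates controlled by Condition~\ref{condition:4} (strictness of the variational inequality) and the growth of $V$. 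Handling the double-scaling-type error terms in the exponent for general even $2k$ in part~\ref{enu:thm:limiting_distr_nondegenerate:2}, and making the $\beta$-dependence of the prefactor harmless, will be the other technical point, but it is routine once the localization and the dominant-contour construction are in place. Throughout I would lean on the results of \cite{Baik-Wang10a} (which apply since $V$ satisfies Conditions~\ref{condition:1}--\ref{condition:4}) for the analytic properties of $\Gfn$, $\Hfn$, $x_0(a)$, and the structure of the contour integral.
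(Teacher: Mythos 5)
Your high-level plan is essentially the paper's: compute the pointwise asymptotics of the density $f_{\ximax}(u)$ from the contour integral of Proposition~\ref{prop:contour_rep_of_pdf_of_largest_eigenvalue}, localize around $x_0(a)$ where $\Gfn(\cdot;a)$ has its unique maximum on $[c(a),\infty)$, Taylor-expand to order $2k$, and compare the truncated with the full integral to get $\erf(T)$ (for $k=1$) or the generalized quotient (for $k>1$). Two points in the proposal, however, are off and would waste effort if pursued literally.

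First, your anticipated obstacle about ``Pfaffian-type structure'' or ``extra integrations'' for $\beta=1,4$ is a non-issue. Proposition~\ref{prop:contour_rep_of_pdf_of_largest_eigenvalue} already gives one and the same single contour integral for all three values of $\beta$ — this uniformity, obtained via Jack polynomials rather than Pfaffian/correlation-kernel machinery, is precisely the algebraic input the paper is built around. There is no $\beta$-dependent auxiliary structure to control; only the scalar parameter $\beta$ enters, through $Z_{n-1,\beta}$ and the exponent $\beta/2$.

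Second, the proposal conflates two distinct levels of integration. The paper first performs a steepest-descent analysis in the \emph{inner} complex variable $w$ (Section~\ref{sec:the_pdf_of_the_largest_eigenvalue}, relying on Proposition~\ref{cor:outlier_micro}, which in turn rests on Johansson's result Proposition~\ref{prop:Johansson}), yielding
\begin{equation*}
  f_{\ximax}(u) = \bfC_{n,\beta}\, e^{\frac{\beta}{2}n\Gfn(u;a)}\bigl(\M_\beta(u)+o(1)\bigr)
\end{equation*}
uniformly for $u\in[c(a)+\epsilon,\redge+\epsilon^{-1}]$, together with exponentially smaller bounds elsewhere (using Condition~\ref{condition:4} and the growth of $V$ for the far left and far right, exactly as you anticipate). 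The subsequent integration over $u$ to form the CDF is a \emph{real-line} Laplace integral, not a deformed complex contour passing through $x_0(a)$: the density is peaked at $x_0(a)$ on $\realR$ and the standard Laplace method applied there gives the $(-\tfrac{\beta}{2}\Gfn''n)^{-1/2}$ (resp.\ $n^{-1/(2k)}$) scale and the limiting ratio of integrals. The prefactor $\M_\beta(x_0(a))$ and the unknown constant $\bfC_{n,\beta}$ cancel automatically when one normalizes by the total mass in the window; no explicit evaluation of $\M_\beta$ is needed in the single-peak case (it only matters for Theorems~\ref{thm:two_normal_max_points} and \ref{thm:one_abnormal_max_points}, where it enters the weights $p_{j,\beta}$). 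With these two clarifications, your outline matches the paper's proof.
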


If $a>\acc$ is at or near a secondary critical value, we have the following result that shows the double scaling case and strengthens Theorem \ref{thm:limiting_position}\ref{enu:thm:limiting_position:3}.
\begin{thm} \label{thm:two_normal_max_points}
  Suppose that $a_0 > \acc$ and $a_0 \in \J$. Assume that $\Gfn(x;a_0)$ attains its maximum at $r \geq 2$ points $x_1(a_0) < x_2(a_0) < \dots < x_r(a_0)$ in $(c(a_0), \infty)$, and $\Gfn''(x_i(a_0);a_0) \neq 0$ for all $i = 1, 2, \dots, r$, then for
  \begin{equation}
    a = a_0 + \frac{\alpha}{n},
  \end{equation}
  where $\alpha$ is in a compact subset of $\realR$, we have
  \begin{equation}
    \lim_{n \to \infty} \Prob \left( \xi_{\max}(n) < x_i(a_0) + \frac{T}{\sqrt{-\frac{\beta}{2}\Gfn''(x_i(a_0))n}} \right) = \left( \sum^{i-1}_{j=1}p_{j,\beta}(\alpha) \right) + p_{i,\beta}(\alpha)\erf(T),
  \end{equation}
  where $p_{i,\beta}(\alpha)$ ($i=1, \dots, r$) are defined in \eqref{eq:defn_of_p_i_two_normal_points} and $\sum^r_{j=1} p_{j,\beta}(\alpha) = 1$. Furthermore, $p_{r,\beta}(\alpha) \to 1$ as $\alpha \to \infty$ and $p_{1,\beta}(\alpha) \to 1$ as $\alpha \to -\infty$.
\end{thm}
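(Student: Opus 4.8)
The plan is to run a steepest-descent analysis on the contour integral representation of $\Prob(\xi_{\max}(n) < t)$ supplied by Proposition~\ref{prop:contour_rep_of_pdf_of_largest_eigenvalue}. After dividing out the normalization, the integrand is a product of an exponentially large factor governed by $\Gfn(z;a)$ — on the scale $\frac{\beta}{2}n$, consistent with the scaling in Theorem~\ref{thm:limiting_distr_nondegenerate} — a $\beta$-dependent algebraic prefactor, and a $t$-dependent factor which, measured at the natural fluctuation scale $z - z_0 = O(n^{-1/2})$ around a point $z_0$, degenerates to the indicator $\mathbf{1}[z < t]$. The first step is to substitute $a = a_0 + \alpha/n$ and exploit that $\Gfn(z;a)$ is affine in $a$: thus $\frac{\beta}{2}n\,\Gfn(z;a_0+\alpha/n) = \frac{\beta}{2}n\,\Gfn(z;a_0) + \frac{\beta}{2}\alpha z$ exactly, so the perturbation contributes only the bounded tilt $e^{\frac{\beta}{2}\alpha z}$ and leaves the saddle structure of $\Gfn(\cdot;a_0)$ untouched.

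Next I would localize. By hypothesis $\Gfn(\cdot;a_0)$ attains its maximum over $[c(a_0),\infty)$ exactly at $x_1(a_0) < \dots < x_r(a_0)$, and $\Gfn''(x_i(a_0)) \neq 0$ forces $\Gfn''(x_i(a_0)) < 0$ at each of these maxima. I would deform the contour of Proposition~\ref{prop:contour_rep_of_pdf_of_largest_eigenvalue} to a path that passes through each $x_i(a_0)$ along the (vertical) direction of steepest descent and along which $\Re\Gfn(z;a_0)$ stays strictly below $\Gfn_{\max}(a_0)$ away from the $x_i(a_0)$; this confines the asymptotic contribution to small neighborhoods of the $x_i(a_0)$. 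Near $x_i(a_0)$ I substitute $z = x_i(a_0) + s\big(-\tfrac{\beta}{2}\Gfn''(x_i(a_0))\,n\big)^{-1/2}$, turning the exponential into $e^{\frac{\beta}{2}n\,\Gfn_{\max}(a_0)}\,e^{\frac{\beta}{2}\alpha x_i(a_0)}\,e^{-s^2/2}\,(1+o(1))$ and the prefactor into its value at $x_i(a_0)$; with $t = x_i(a_0) + T\big(-\tfrac{\beta}{2}\Gfn''(x_i(a_0))\,n\big)^{-1/2}$ the $t$-dependent factor becomes $\mathbf{1}[s < T]$, while near $x_j(a_0)$ for $j \neq i$ the same $t$ lies a macroscopic distance away, so that factor is $1 + o(1)$ if $x_j(a_0) < x_i(a_0)$ and $o(1)$ if $x_j(a_0) > x_i(a_0)$.

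Assembling the Gaussian integrals and forming the ratio that represents $\Prob(\xi_{\max}(n) < t)$, the common factor $e^{\frac{\beta}{2}n\,\Gfn_{\max}(a_0)}$ cancels and one is left with a weighted average in which the contribution of $x_j(a_0)$ carries weight proportional to $w_j(\alpha) := e^{\frac{\beta}{2}\alpha x_j(a_0)}\,\big(-\Gfn''(x_j(a_0))\big)^{-1/2}$ times the prefactor at $x_j(a_0)$, and equals $1$ for $j < i$, $\erf(T)$ for $j = i$, and $0$ for $j > i$; normalizing the weights produces precisely the $p_{j,\beta}(\alpha)$ of \eqref{eq:defn_of_p_i_two_normal_points}, with $\sum_{j=1}^r p_{j,\beta}(\alpha) = 1$ automatic. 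The limits $p_{r,\beta}(\alpha)\to 1$ as $\alpha\to +\infty$ and $p_{1,\beta}(\alpha)\to 1$ as $\alpha\to -\infty$ then follow from the exponential tilt: $w_j(\alpha)/w_r(\alpha)$ and $w_j(\alpha)/w_1(\alpha)$ are, up to bounded factors, $e^{\frac{\beta}{2}\alpha(x_j(a_0)-x_r(a_0))}$ and $e^{\frac{\beta}{2}\alpha(x_j(a_0)-x_1(a_0))}$, which vanish in the respective limits for $j \neq r$, $j \neq 1$.

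The main obstacle is the global steepest-descent estimate: one must choose an admissible contour for Proposition~\ref{prop:contour_rep_of_pdf_of_largest_eigenvalue} on which $\Re\Gfn(z;a_0) \le \Gfn_{\max}(a_0)$ with equality only at the points $x_i(a_0)$, and control both the algebraic prefactor and the $t$-dependent factor uniformly along it — in particular in the regions near $\redge$ and near $c(a_0)$, where $\gfn$, hence $\Gfn$, loses analyticity. This is where most of the work resides, and where the three cases $\beta = 1,2,4$ must be handled separately, because the prefactor and, for $\beta = 1$ and $\beta = 4$, the precise form of the representation differ. Once this localization is established (as it must be for Theorem~\ref{thm:limiting_distr_nondegenerate}), the multi-point bookkeeping together with the $\alpha/n$ tilt is essentially a repackaging of the nondegenerate argument.
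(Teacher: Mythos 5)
Your proposal is correct and follows essentially the same route as the paper: substitute $a = a_0 + \alpha/n$ and use the affine dependence of $\Gfn$ on $a$ to extract the tilt $e^{\frac{\beta}{2}\alpha x_i(a_0)}$, localize near each maximizer $x_i(a_0)$ via Gaussian approximation (the paper's equations \eqref{eq:Taylor_expansion_two_normal_points}--\eqref{eq:super_Prob_around_x_i_two_normal_point_partial}), combine contributions with the tail estimates from Subsections \ref{subsec:e<u<c(a)}--\ref{subsec:other_u} and \ref{subsec:a>acc_and_a_notin_J_V}, and normalize to obtain $p_{j,\beta}(\alpha)$ as in \eqref{eq:defn_of_B_i_two_normal_points}--\eqref{eq:defn_of_p_i_two_normal_points}. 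The only presentational difference is that you collapse the two stages of the paper's analysis — the $w$-contour asymptotics of Proposition \ref{cor:outlier_micro} yielding $f_{\ximax}(u) = \bfC_{n,\beta}e^{\frac{\beta}{2}n\Gfn(u;a)}(\M_\beta(u)+o(1))$, followed by Laplace's method in the real variable $u$ — into a single ``steepest-descent'' picture, and you phrase Proposition \ref{prop:contour_rep_of_pdf_of_largest_eigenvalue} as a representation of the CDF when it in fact gives the pdf; neither imprecision affects the soundness of the argument.
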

\begin{rmk}
  The phenomenon of Theorem \ref{thm:two_normal_max_points} occurs for some quartic potential $V$ that satisfies Conditions \ref{condition:1}--\ref{condition:4}. For example, $V(x) = 0.02093x^4 - 0.16736x^3 + 0.37448x^2 + 0.11418x$.
\end{rmk}
In the case that $a > \acc$, $a \in \J$ and $\Gfn''(x_j(a);a) = 0$ at at least one maximizer $x_j(a)$ of $\Gfn(x;a)$ in $(c(a), \infty)$, we show hereafter an example when the number of maximizers of $\Gfn(x;a)$ in $(c(a), \infty)$ is $r = 2$. The result for general case is similar.
\begin{thm} \label{thm:one_abnormal_max_points}
  Suppose that $a_0 > \acc$ and $a_0 \in \J$. Assume that $\Gfn(x;a_0)$ attains its maximum at two points $x_1(a_0) < x_2(a_0)$ in $(c(a_0), \infty)$, with $\Gfn''(x_1(a_0);a_0) \neq 0$, $\Gfn^{(2k)}(x_2(a_0);a_0) \neq 0$ and $\Gfn^{(j)}(x_2(a_0);a_0) = 0$ for $j = 1, \dots, 2k-1$. Then for
  \begin{equation} \label{eq:defn_of_q_and_a}
    a = a_0 -q_{\beta}\frac{\log n}{n} + \frac{\alpha}{n}, \quad \textnormal{where} \quad q_{\beta} := \frac{2}{\beta}\frac{\frac{1}{2} - \frac{1}{2k}}{x_2(a_0) - x_1(a_0)},
  \end{equation}
  and $\alpha$ is in a compact subset of $\realR$, we have
  \begin{align}
    \lim_{n \to \infty} \Prob \left( \xi_{\max}(n) < x_1(a_0) + \frac{T}{\sqrt{-\frac{\beta}{2}\Gfn''(x_1(a_0))n}} \right) = & \tilde{p}_{1,\beta}(\alpha)\erf(T), \\
    \lim_{n \to \infty} \Prob \left( \xi_{\max}(n) < x_2(a_0) + \left( \frac{-\frac{\beta}{2} \Gfn^{(2k)}(x_2(a_0);a_0)n}{(2k)!} \right)^{-1/(2k)}T \right) = & \tilde{p}_{1,\beta}(\alpha) + \tilde{p}_{2,\beta}(\alpha) \frac{\int^T_{-\infty} e^{-x^{2k}} dx}{\int^{\infty}_{-\infty} e^{-x^{2k}} dx},
  \end{align}
  where $\tilde{p}_{1,\beta}(\alpha)$ and $\tilde{p}_{2,\beta}(\alpha)$ are defined in \eqref{eq:defn_of_tilde_p}, and $\tilde{p}_{1,\beta}(\alpha) + \tilde{p}_{2,\beta}(\alpha) = 1$. Furthermore, $\tilde{p}_{2,\beta}(\alpha) \to 1$ as $\alpha \to \infty$ and $\tilde{p}_{1,\beta}(\alpha) \to 1$ as $\alpha \to -\infty$.
\end{thm}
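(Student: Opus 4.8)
The plan is to reduce the statement to a steepest-descent analysis of the contour-integral representation of $\Prob(\ximax < t)$ furnished by Proposition~\ref{prop:contour_rep_of_pdf_of_largest_eigenvalue}. That representation expresses $\Prob(\ximax < t)$, uniformly in $\beta = 1,2,4$, as the distribution function of the largest eigenvalue of the associated source-free $\beta$-ensemble at $t$, times a ratio of two contour integrals whose integrands carry the exponential factor $e^{\frac{\beta}{2}n\Gfn(z;a)}$ (recall \eqref{eq:defn_of_Gfn}) together with a prefactor that varies on the macroscopic scale and, in the numerator, encodes the constraint that the contour effectively run only to the left of $t$, while in the denominator it does not. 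Since both $x_1(a_0)$ and $x_2(a_0)$ lie strictly to the right of $\redge$, for every $t$ in the two scaling windows of the statement the source-free distribution-function factor equals $1$ up to an error exponentially small in $n$, by the large-deviation input Proposition~\ref{prop:Johansson}; so everything comes down to the asymptotics of the contour-integral ratio under the double scaling \eqref{eq:defn_of_q_and_a}.

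I would then substitute $a = a_0 - q_\beta \frac{\log n}{n} + \frac{\alpha}{n}$, deform the contour onto the steepest-descent path of $\Re\Gfn(\cdot;a)$, and localize. Because $\Gfn(z;a) = \Gfn(z;a_0) + (a-a_0)z$ depends on $a$ affinely, the only maximizers of $\Re\Gfn(\cdot;a)$ that matter are $O(\log n/n)$-close to $x_1(a_0)$ and $x_2(a_0)$; a Morse-type analysis handles the neighborhood of the persistently nondegenerate point $x_1(a_0)$, and a degree-$2k$ unfolding handles the neighborhood of $x_2(a_0)$, where $\Gfn'(\cdot;a_0)$ has a zero of order $2k-1$ (one checks that the $O(\log n/n)$ shift of $a$ is invisible at the relevant Laplace scale $n^{-1/(2k)}$). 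A Gaussian Laplace integral at the first saddle produces a factor of order $n^{-1/2}$, and after the change of variables $t = x_1(a_0) + T\bigl(-\frac{\beta}{2}\Gfn''(x_1(a_0);a_0)n\bigr)^{-1/2}$ the restriction ``left of $t$'' turns the first-saddle contribution into a quantity proportional to $\erf(T)\,e^{\frac{\beta}{2}n\Gfn(x_1(a_0);a)}\,n^{-1/2}$ (with $\erf(T)$ replaced by $1$ in the denominator). A degree-$2k$ Laplace integral at the second saddle produces instead a factor of order $n^{-1/(2k)}$, and with $t = x_2(a_0) + T\bigl(-\frac{\beta}{2}\Gfn^{(2k)}(x_2(a_0);a_0)n/(2k)!\bigr)^{-1/(2k)}$ the ``left of $t$'' restriction turns it into a quantity proportional to $\frac{\int_{-\infty}^T e^{-x^{2k}}\,dx}{\int_{-\infty}^{\infty}e^{-x^{2k}}\,dx}\,e^{\frac{\beta}{2}n\Gfn(x_2(a_0);a)}\,n^{-1/(2k)}$.

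It then remains to compare the two saddles. Since $\Gfn(x_1(a_0);a_0) = \Gfn(x_2(a_0);a_0) = \Gfn_{\max}(a_0)$ by \eqref{defn_of_G_max} and \eqref{eq:definition_of_x_1_to_x_r}, one has $\frac{\beta}{2}n\bigl[\Gfn(x_2(a_0);a) - \Gfn(x_1(a_0);a)\bigr] = \frac{\beta}{2}\bigl(x_2(a_0)-x_1(a_0)\bigr)\bigl(-q_\beta\log n + \alpha\bigr) + o(1)$, and $q_\beta$ in \eqref{eq:defn_of_q_and_a} is chosen exactly so that $\frac{\beta}{2}\bigl(x_2(a_0)-x_1(a_0)\bigr)q_\beta = \frac12 - \frac1{2k}$; hence $e^{\frac{\beta}{2}n[\Gfn(x_2(a_0);a)-\Gfn(x_1(a_0);a)]}$ is of order $n^{-(1/2-1/(2k))}\,e^{\frac{\beta}{2}(x_2(a_0)-x_1(a_0))\alpha}$, which exactly cancels the ratio $n^{-1/(2k)}/n^{-1/2} = n^{1/2-1/(2k)}$ of the two Laplace widths. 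Thus the first- and second-saddle contributions are of the same order in $n$, with an $\alpha$-dependent relative weight proportional to $e^{\frac{\beta}{2}(x_2(a_0)-x_1(a_0))\alpha}$; dividing numerator by denominator, the common exponential and power of $n$ cancel. For $t$ in the $x_1(a_0)$-window the $x_2(a_0)$-saddle lies entirely to the right of $t$ and contributes only to the denominator, leaving $\tilde p_{1,\beta}(\alpha)\,\erf(T)$; for $t$ in the $x_2(a_0)$-window the $x_1(a_0)$-saddle lies entirely to the left of $t$ and contributes in full, leaving $\tilde p_{1,\beta}(\alpha) + \tilde p_{2,\beta}(\alpha)\frac{\int_{-\infty}^T e^{-x^{2k}}\,dx}{\int_{-\infty}^{\infty}e^{-x^{2k}}\,dx}$, where $\tilde p_{1,\beta}(\alpha)$ and $\tilde p_{2,\beta}(\alpha)$ are the normalized weights of \eqref{eq:defn_of_tilde_p}; the limits $\tilde p_{2,\beta}(\alpha)\to1$ as $\alpha\to+\infty$ and $\tilde p_{1,\beta}(\alpha)\to1$ as $\alpha\to-\infty$ are then immediate from the factor $e^{\frac{\beta}{2}(x_2(a_0)-x_1(a_0))\alpha}$.

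The main obstacle is the local analysis near the degenerate saddle $x_2(a_0)$: one must globally deform the contour so that it passes through the $x_1(a_0)$- and $x_2(a_0)$-regions with the correct steepest-descent directions while staying clear of the branch cut $(-\infty,\redge]$, verify that the $O(\log n/n)$ perturbation of $a$ does not disturb the order-$2k$ unfolding at the scale $n^{-1/(2k)}$, and make both the Laplace expansion and the identification of the $t$-dependent part of the integrand with a partial (generalized) Gaussian integral uniform in $\alpha$ on compact sets and in $T\in\realR$. This is where the hypotheses $\Gfn''(x_1(a_0);a_0)\neq0$ and $\Gfn^{(j)}(x_2(a_0);a_0)=0$ for $1\le j\le 2k-1$, $\Gfn^{(2k)}(x_2(a_0);a_0)\neq0$, are used; once these are in place, the bookkeeping parallels the proofs of Theorems~\ref{thm:limiting_distr_nondegenerate} and~\ref{thm:two_normal_max_points}.
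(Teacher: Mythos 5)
Your proposal captures the paper's argument in all essentials: the contour-integral representation of Proposition~\ref{prop:contour_rep_of_pdf_of_largest_eigenvalue}, a Laplace/steepest-descent analysis localized at the two maximizers $x_1(a_0)$ and $x_2(a_0)$ of $\Gfn(\cdot;a)$, the observation that the constant $q_\beta$ in \eqref{eq:defn_of_q_and_a} is chosen precisely so that the nondegenerate saddle's $n^{-1/2}$-scale contribution and the degenerate saddle's $n^{-1/(2k)}$-scale contribution are of the same order (via $\tfrac{\beta}{2}(x_2(a_0)-x_1(a_0))q_\beta = \tfrac12-\tfrac1{2k}$), and the emergence of $\tilde{p}_{1,\beta}(\alpha)$, $\tilde{p}_{2,\beta}(\alpha)$ as the normalized weights. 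One caveat on framing: Proposition~\ref{prop:contour_rep_of_pdf_of_largest_eigenvalue} gives a single $w$-contour integral for the density $f_{\ximax}(u)$, not a ratio of contour integrals for $\Prob(\ximax<t)$ (the ratio-of-partition-functions picture you describe is closer to the $\beta=2$ formulas in \cite{Baik-Wang10a}); the paper first extracts $f_{\ximax}(u)=\bfC_{n,\beta}\,e^{\frac{\beta}{2}n\Gfn(u;a)}(\M_\beta(u)+o(1))$ from the $w$-integral and then applies Laplace's method to the $u$-integral over each scaling window, but the balancing computation and the Gaussian and degree-$2k$ partial integrals producing $\erf(T)$ and $\int_{-\infty}^T e^{-x^{2k}}\,dx / \int_{-\infty}^\infty e^{-x^{2k}}\,dx$ are exactly what the paper does.
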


\begin{rmk}
  When $\beta = 2$, the probabilities $p_{j,2}(\alpha)$ and $\tilde{p}_{j,2}(\alpha)$ should agree with the $p^{(j)}_{1,n}(\alpha)$ in \cite[Formula (52)]{Baik-Wang10a} and the $p^{(j)}_{1,n}(\alpha)$ in \cite[Formula (63)]{Baik-Wang10a} respectively. It is not obvious that they are the same, and we give the proof in Appendix \ref{sec:simplification_of_M_2(x)}.
\end{rmk}

The limiting distribution of the largest eigenvalue when $a \leq \acc$, as well as the limiting location of the largest eigenvalue when $a$ is at or near $\acc < \frac{1}{2}V'(\redge)$, will be analyzed in a subsequent paper.

\bigskip

The paper is organized as follows. In Section \ref{sec:the_pdf_of_the_largest_eigenvalue}, we calculate the limiting \pdf\ of the largest eigenvalue in the rank $1$ $\beta$-external source model as $n \to \infty$, based on Proposition \ref{cor:outlier_micro}. In Section \ref{sec:proof_of_theorems}, we prove Theorems \ref{thm:limiting_position}, \ref{thm:limiting_distr_nondegenerate} \ref{thm:two_normal_max_points} and \ref{thm:one_abnormal_max_points}. The proof of Proposition \ref{cor:outlier_micro} is in Section \ref{sec:proof_of_corollary_of_Johansson}. 

The starting point of the asymptotic analysis in this paper is Proposition \ref{prop:contour_rep_of_pdf_of_largest_eigenvalue}, the contour integral formula of the largest eigenvalue $\ximax$. Since its proof is combinatorial, we postpone it to Appendix \ref{sec:external_source_model_with_odd_n_and_general_beta_external_source_model}. In this appendix we also propose a definition of the $\beta$-external source model for any $\beta > 0$. In Appendix \ref{sec:simplification_of_M_2(x)} we show that the results in this paper agree with those in \cite{Baik-Wang10a} when $\beta = 2$.

\section{The \pdf\ of the largest eigenvalue} \label{sec:the_pdf_of_the_largest_eigenvalue}

In this section we compute $f_{\ximax}$, the \pdf\ of the largest eigenvalue $\ximax$ in the $n$-dimensional $\beta$-external source model with rank $1$, as $n \to \infty$. For $\beta = 1$, $n$ is assumed to be even. We also assume that the only nonzero eigenvalue of the external source matrix $\A_n$ is $a > 0$. Recall that $J$ is the support of the equilibrium measure $\mu$ associated to $V(x)$, $\redge$ is the right end of $J$, and $c(a)$ is defined in Subsection \ref{subsec:preliminary_notations}. In this section we compute/estimate $f_{\ximax}(u)$ for all real $u$. To be concrete, let $\bfepsilon$ be a small enough positive constant. In Subsection \ref{subsec:e<u<c(a)}, assuming that $c(a) > \redge$, we compute $f_{\ximax}(u)$ for $u \in [\redge+\bfepsilon, c(a)-\bfepsilon]$ up to a constant factor $\bfC_{n,\beta}$. In Subsection \ref{subsec:u>c(a)}, we compute $f_{\ximax}(u)$ for $u \in [c(a)+\bfepsilon, \redge+\bfepsilon^{-1}]$ up to the constant factor $\bfC_{n,\beta}$. For $u \in (\redge+\bfepsilon^{-1}, \infty)$, $u \in (-\infty, \redge+\bfepsilon)$ and $u \in (c(a)-\bfepsilon, c(a)+\bfepsilon)$ in case $c(a) > \redge$, we give an estimate of $f_{\ximax}(u)$ in Subsection \ref{subsec:other_u}. Note that throughout this section, $u$ is always a real number.

To facilitate the computation of $f_{\ximax}(u)$, we define some notations. For any $m$, define the probability measure on $\realR^m$
\begin{equation} \label{eq:defn_of_measure_mu_{n-1,beta}}
  d\mu_{m,\beta}(x_1, \dots, x_m) := \frac{1}{\constZ_{m,\beta}}
  \lvert \Delta(x_1, \dots, x_m) \rvert^{\beta} \prod^m_{j=1}
  e^{-\frac{\beta}{2}mV(x_j)} dx_1 \dots dx_m,
\end{equation}
where $\constZ_{m,\beta}$ is the normalization constant. Suppose $F(x_1, \dots, x_m)$ is an integrable function with respect to the measure $\mu_{m,\beta}$ defined in \eqref{eq:defn_of_measure_mu_{n-1,beta}}, define the expectation of $F$ with respect to $\mu_{m,\beta}$ by
\begin{equation} \label{eq:defn_of_E_n-1,beta}
\E_{m,\beta}(F(x_1, \dots, x_m)) :=  \int_{\realR^m} F(x_1, \dots, x_m) d\mu_{m,\beta}(x_1 \dots x_m).
\end{equation}
For $u \in \realR$ and $w \in \compC \setminus (-\infty, u)$, define the functions in $u$ and $w$
\begin{align}
  Z_{m,\beta}(u,w) := & \E_{m,\beta} \left( P_{m,\beta}(x_1, \dots, x_m; u,w) \prod^m_{j=1} \chi_{(-\infty,u)}(x_j) \right), \label{eq:definition_of_Z_n-1_beta} \\
  \hat{Z}_{m,\beta}(u,w) := & \E_{m,\beta} \left( \lvert P_{m,\beta}(x_1, \dots, x_m; u,w) \rvert \prod^m_{j=1} \chi_{(-\infty,u)}(x_j) \right), \label{eq:defn_of_Z_hat_(uwc)}
\end{align}
where
\begin{equation} \label{eq:definition_of_P_n-1_beta}
P_{m,\beta}(x_1, \dots, x_m;u,w) := \prod^m_{j=1} \frac{e^{-\frac{\beta}{2}V(x_j)}(u-x_j)^{\beta}}{(w-x_j)^{\beta/2}},
\end{equation}
and we take the principal branch of $(w-x_j)^{\beta/2}$ for $w \in \compC \setminus (-\infty, x_j)$. For $x < u$ and $w \in \compC \setminus (-\infty, u)$, define the function in $x$ with parameter $u$ and $w$
\begin{equation} \label{eq:def_of_tilde_p(xu)}
p(x;u,w) := -V(x)+2\log(u-x) -\log\lvert w-x \rvert.
\end{equation}
We have
\begin{equation} \label{eq:alternative_defn_of_Z_hat_(uwc)}
  \hat{Z}_{m,\beta}(u,w) = \E_{m,\beta} \left( e^{\frac{\beta}{2} \sum^m_{j=1} p(x_j;u,w)} \prod^m_{j=1} \chi_{(-\infty,u)}(x_j) \right).
\end{equation}
If $w=u$ we denote
\begin{equation} \label{eq:def_of_p(xu)}
p(x;u) := p(x;u,u) = -V(x)+\log(u-x).
\end{equation}
Then we can state the technical tool in the asymptotic analysis of this section:
\begin{prop} \label{cor:outlier_micro}
  Let $u > \redge$ and $w \in \compC \setminus (-\infty, u]$.

  \begin{enumerate}[label=(\alph*)]
  \item \label{enu:cor:outlier_micro:a} 
    Suppose
    \begin{equation} \label{eq:condition_enu:cor:outlier_micro:a}
      w = u+\frac{z}{n}, \quad \textnormal{where $z = s+it$ is in a compact subset of $\compC \setminus (-\infty,0]$,}
    \end{equation}
    we have
    \begin{equation} \label{eq:enu:cor:outlier_micro:a}
      Z_{n-1,\beta}(u,w) = e^{-\frac{\beta z}{2} \int
        \frac{d\mu(x)}{u-x}} R_{\beta}(u) \exp \left[ \frac{\beta}{2}n
        \int p(x;u)d\mu(x) \right] (1+o(1)),
    \end{equation}
    where $R_{\beta}(u)$ is defined in \eqref{eq:defn_of_R_beta(u)}.

  \item \label{enu:cor:outlier_micro:c}

    Suppose
    \begin{equation} \label{eq:condition_enu:cor:outlier_micro:c}
      w = w_0 + i\frac{t}{\sqrt{n}}, \quad \textnormal{where $w_0 > u$ and $t$ is in a compact subset of $\realR$,}
    \end{equation}
    we have
    \begin{equation} \label{eq:enu:cor:outlier_micro:c}
      Z_{n-1,\beta}(u,w) = e^{-\frac{\beta t^2}{4} \int \frac{d\mu(x)}{(w_0-x)^2} -i\frac{\beta t}{2} \sqrt{n} \int \frac{d\mu(x)}{w_0-x}} R_{\beta}(u,w_0) \exp \left[ \frac{\beta}{2}n \int p(x;u,w_0) d\mu(x) \right] (1+o(1)),
    \end{equation}
    where $R_{\beta}(u,w_0)$ is defined in \eqref{eq:defn_of_R_beta(u,w)}.

  \item \label{enu:cor:outlier_micro:b} 
    Let $\epsilon$ be a small positive constant. For all $w$ such that $\dist(w, (-\infty, \redge]) \geq \epsilon$ and $ \lvert w \rvert \leq \epsilon^{-1}$,
    \begin{equation} \label{eq:enu:cor:outlier_micro:b}
      \lvert Z_{n-1,\beta}(u,w) \lvert \leq \hat{Z}_{n-1,\beta}(u,w) = \exp \left[ \frac{\beta}{2}n \int p(x;u,w) d\mu(x) \right] O(1),
    \end{equation}
    where the factor $O(1)$ is bounded uniformly in $w$.
  \end{enumerate}
\end{prop}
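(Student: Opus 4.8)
The plan is to realise $Z_{n-1,\beta}(u,w)$ and $\hat{Z}_{n-1,\beta}(u,w)$ as ratios of partition functions of one‑cut $\beta$‑ensembles and to apply Proposition~\ref{prop:Johansson}. By \eqref{eq:alternative_defn_of_Z_hat_(uwc)},
\[
  \hat{Z}_{n-1,\beta}(u,w) = \frac{1}{\constZ_{n-1,\beta}} \int_{(-\infty,u)^{n-1}} \lvert \Delta(x_1,\dots,x_{n-1}) \rvert^{\beta} \prod_{j=1}^{n-1} e^{-\frac{\beta}{2}(n-1)\tilde{V}_n(x_j)}\, dx_j ,
\]
where $\tilde{V}_n(x):=\frac{n}{n-1}V(x)-\frac{2}{n-1}\log(u-x)+\frac{1}{n-1}\log\lvert w-x\rvert$. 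On a fixed complex neighbourhood of $J$ the function $\tilde{V}_n$ is an $O(1/n)$ perturbation of $V$ whose only logarithmic singularities sit at $x=u>\redge$ and, when $w\in\realR$, at $x=w$; under the hypotheses of each part these singularities stay at distance bounded below from $J$. Since $\tilde{V}_n(x)\to+\infty$ as $x\to u^-$, the cutoff $\chi_{(-\infty,u)}$ is automatically enforced by the potential, the equilibrium measure of $\tilde{V}_n$ stays supported on a compact subset of $(-\infty,u)$ converging to $\mu$, and the one‑cut property of $V$ (Condition~\ref{condition:2}) is preserved. Thus $\hat{Z}_{n-1,\beta}(u,w)$ is, up to the factor $\constZ_{n-1,\beta}^{-1}$, the partition function of a one‑cut $\beta$‑ensemble covered by Proposition~\ref{prop:Johansson}. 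Moreover $\lvert Z_{n-1,\beta}(u,w)\rvert\le\hat{Z}_{n-1,\beta}(u,w)$ by the triangle inequality, using $\lvert(w-x)^{-\beta/2}\rvert=\lvert w-x\rvert^{-\beta/2}$.

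Part~\ref{enu:cor:outlier_micro:b} follows immediately: applying Proposition~\ref{prop:Johansson} to $\hat{Z}_{n-1,\beta}(u,w)$ with perturbing function $p(\cdot;u,w)$ gives $\hat{Z}_{n-1,\beta}(u,w)=\exp[\frac{\beta}{2}n\int p(x;u,w)\,d\mu]\,O(1)$; as $w$ ranges over $\{\dist(w,(-\infty,\redge])\ge\epsilon,\ \lvert w\rvert\le\epsilon^{-1}\}$ (and $u$ over the relevant compact interval) the functions $p(\cdot;u,w)$ form a compact family of admissible perturbations, so the error term of Proposition~\ref{prop:Johansson} can be taken uniform, making the $O(1)$ uniform. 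The bound on $\lvert Z_{n-1,\beta}(u,w)\rvert$ then follows.

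For parts~\ref{enu:cor:outlier_micro:a} and~\ref{enu:cor:outlier_micro:c} I would peel off an explicit sub‑leading factor. Set $w_\ast=u$ in case~\ref{enu:cor:outlier_micro:a} and $w_\ast=w_0$ in case~\ref{enu:cor:outlier_micro:c}; then $P_{n-1,\beta}(x;u,w)=P_{n-1,\beta}(x;u,w_\ast)\,\Pi_n(x)$ with $\Pi_n(x):=\prod_j\bigl(\frac{w_\ast-x_j}{w-x_j}\bigr)^{\beta/2}$, and, since $P_{n-1,\beta}(x;u,w_\ast)>0$ for $x_j<u$, the expectation $\E_{n-1,\beta}(P_{n-1,\beta}(x;u,w_\ast)\prod_j\chi_{(-\infty,u)}(x_j))$ equals $\hat{Z}_{n-1,\beta}(u,w_\ast)$, whose precise asymptotics $R_\beta(u)\exp[\frac{\beta}{2}n\int p(x;u)\,d\mu]$, respectively $R_\beta(u,w_0)\exp[\frac{\beta}{2}n\int p(x;u,w_0)\,d\mu]$, is furnished by Proposition~\ref{prop:Johansson}. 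On the event $\{\max_j x_j<u-\delta\}$, whose complement has probability $O(e^{-cn})$ by Condition~\ref{condition:4} and on which $\Pi_n$ is uniformly bounded, a Taylor expansion of $\log\Pi_n(x)=\frac{\beta}{2}\sum_j[\log(w_\ast-x_j)-\log(w-x_j)]$ around $w=w_\ast$ gives, with $w-w_\ast=z/n$,
\[
  \log\Pi_n(x) = -\frac{\beta z}{2}\cdot\frac{1}{n}\sum_{j=1}^{n-1}\frac{1}{u-x_j} + O(n^{-1})\qquad\text{(case \ref{enu:cor:outlier_micro:a})},
\]
and, with $w-w_\ast=it/\sqrt n$,
\[
  \log\Pi_n(x) = -\frac{i\beta t}{2\sqrt n}\sum_{j=1}^{n-1}\frac{1}{w_0-x_j} - \frac{\beta t^2}{4}\cdot\frac{1}{n}\sum_{j=1}^{n-1}\frac{1}{(w_0-x_j)^2} + O(n^{-1/2})\qquad\text{(case \ref{enu:cor:outlier_micro:c})}.
\]
Replacing the empirical averages of the smooth functions $(u-\cdot)^{-1}$ and $(w_0-\cdot)^{-2}$ by their integrals against $\mu$ (law of large numbers for the $p(\cdot;u,w_\ast)$‑tilted ensemble, together with the $O(1/n)$ bound on the displacement of its equilibrium measure from $\mu$), and using that the linear statistic $\sum_j(w_0-x_j)^{-1}$ has bounded variance, so that $\frac{1}{\sqrt n}\sum_j(w_0-x_j)^{-1}=\sqrt n\int\frac{d\mu(x)}{w_0-x}+o(1)$, one finds on the dominant event $\Pi_n(x)=e^{-\frac{\beta z}{2}\int\frac{d\mu(x)}{u-x}}(1+o(1))$ in case~\ref{enu:cor:outlier_micro:a}, respectively $\Pi_n(x)=e^{-\frac{\beta t^2}{4}\int\frac{d\mu(x)}{(w_0-x)^2}-i\frac{\beta t}{2}\sqrt n\int\frac{d\mu(x)}{w_0-x}}(1+o(1))$ in case~\ref{enu:cor:outlier_micro:c}. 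Pulling first the deterministic (possibly $\sqrt n$‑oscillating) prefactor and then the bounded convergent factor out of $\E_{n-1,\beta}(P_{n-1,\beta}(x;u,w_\ast)\Pi_n(x)\prod_j\chi_{(-\infty,u)}(x_j))$ — legitimate because $\Pi_n$ is uniformly bounded on the dominant event and the exceptional event is exponentially small with only polynomially large integrand — produces \eqref{eq:enu:cor:outlier_micro:a} and \eqref{eq:enu:cor:outlier_micro:c}.

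I expect the main obstacle to be twofold. First, the uniformity needed in part~\ref{enu:cor:outlier_micro:b}: Proposition~\ref{prop:Johansson} is presumably stated for a single fixed perturbation, so one must inspect its proof and verify that the error is controlled uniformly over the compact family $\{p(\cdot;u,w)\}$, including uniformity in the location of the singularity of $\log\lvert w-x\rvert$ — this is the bulk of Section~\ref{sec:proof_of_corollary_of_Johansson}. Second, in case~\ref{enu:cor:outlier_micro:c} the $1/\sqrt n$ scaling forces one to keep the quadratic term of the Taylor expansion and to show that the order‑one fluctuation of $\sum_j(w_0-x_j)^{-1}$ about its mean, once divided by $\sqrt n$, is negligible; this uses the sharp bounded‑variance estimate for smooth linear statistics of $\beta$‑ensembles rather than a mere weak law of large numbers. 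By comparison, handling the hard cutoff $\chi_{(-\infty,u)}$ and the $O(1/n)$ displacement of the tilted equilibrium measure from $\mu$ is routine given Condition~\ref{condition:4} and standard one‑cut $\beta$‑ensemble estimates.
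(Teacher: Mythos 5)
Your overall strategy — reduce to Johansson's Proposition~\ref{prop:Johansson}, peel off the oscillatory factor, and control it via the mean and fluctuations of linear statistics — matches the paper's at a high level, but the proposal has a genuine gap in the step it treats as routine: the direct application of Proposition~\ref{prop:Johansson} with perturbing function $p(\cdot\,;u,w)$.

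The function $p(x;u,w) = -V(x) + 2\log(u-x) - \log\lvert w-x\rvert$ has a logarithmic singularity at $x=u>\redge$ (and, in cases~\ref{enu:cor:outlier_micro:a} and~\ref{enu:cor:outlier_micro:c}, additional singularities at $\Re w$), so it fails condition~\ref{enu:Johansson_condition_3} (the Sobolev regularity $h\psi_{x_0}\in H^{s+\alpha}$, with $s=17/2$ for $\beta\neq 2$). You cannot feed $p$ directly into Proposition~\ref{prop:Johansson}, and the observation that the singularities ``stay at distance bounded below from $J$'' does not rescue admissibility, because $\psi_{x_0}$ has compact support containing those singularities for $x_0$ large. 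The paper addresses this with a sandwich: it constructs two \emph{smooth} auxiliary functions $f_1,f_2$ that coincide with $p$ on $(-\infty,\redge]$ (hence have identical Johansson asymptotics, since the leading and subleading terms are determined only by the restriction to $J$), bound $p$ from above and below on $(\redge,c)$, and satisfy $f_1-f_2\ge\log 2$ beyond some $x_0$. The chain $2\E(e^{\frac{\beta}{2}\sum f_2})-\E(e^{\frac{\beta}{2}\sum f_1})<\hat{Z}_{n-1,\beta}(u,w;c)<\E(e^{\frac{\beta}{2}\sum f_1})$ then transfers Johansson's asymptotics to $\hat{Z}$. You identify the singularity of $\log\lvert w-x\rvert$ as a uniformity concern for part~\ref{enu:cor:outlier_micro:b}, but you miss the more basic point that even the singularity at $x=u$ already makes $p$ inadmissible and forces this smoothing argument; without it the proof does not start.

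A second, less serious discrepancy: to handle the phase, the paper works with the intermediate cutoff $\bar{c}=(\redge+u)/2$ and the exact decomposition
\begin{equation*}
Z_{n-1,\beta}(u,w) = \hat{Z}_{n-1,\beta}(u,w;\bar{c})\left( \frac{Z_{n-1,\beta}(u,w;\bar{c})}{\hat{Z}_{n-1,\beta}(u,w;\bar{c})} + \frac{Z_{n-1,\beta}(u,w)-Z_{n-1,\beta}(u,w;\bar{c})}{\hat{Z}_{n-1,\beta}(u,w;\bar{c})} \right),
\end{equation*}
and then analyses the argument sum $S^w_{n-1,\beta}=\sum_j\arg(w-x_j)^{-\beta/2}$ via its moment-generating function, again using Proposition~\ref{prop:Johansson} applied to $rg_\beta(\cdot\,;w)+p(\cdot\,;u,w)$. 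Your device of conditioning on $\{\max_j x_j<u-\delta\}$, Taylor-expanding $\log\Pi_n$, and invoking a law of large numbers plus bounded variance for the linear statistics is the same idea in different clothing, and would work once the underlying Johansson estimates are established; but the LLN/CLT inputs you invoke for the tilted ensemble are themselves consequences of the very same Johansson expansion, so you would still need to set up the sandwich and the MGF machinery first rather than treating them as known. The paper's route via $S^w_{n-1,\beta}$ and its MGF is the cleaner way to close this loop.
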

This proposition is a corollary of a theorem of Johansson \cite[Theorem 2.4]{Johansson98}, and we put off its proof to Section \ref{sec:proof_of_corollary_of_Johansson}.

\bigskip

For the asymptotic analysis in this section, we define four types of contours: $\Sigma^x_{s_1,s_2}$, $\Pi^x_s$, $\Gamma^x_s$ and $I^x_s$, where $x$ is a real parameter and $s_1$, $s_2$ and $s$ are positive parameters. We assume $s_2 > s_1/\sqrt{2}$ for $\Sigma^x_{s_1,s_2}$ and allow $s = \infty$ in $\Gamma^x_s$. The contours $\Pi^x_s$ and $\Sigma^x_{s_1,s_2}$ will be used in Subsections \ref{subsec:e<u<c(a)} and \ref{subsec:u>c(a)} respectively. The contours $I^x_s$ and $\Gamma^x_s$ represent the local parts of $\Pi^x_s$ and $\Sigma^x_{s_1,s_2}$ around the point $x$ respectively, which will turn out to be the saddle point in the asymptotic analysis.
\begin{multline} \label{eq:parametrization_of_Sigma}
  \Sigma^x_{s_1,s_2} = \{ w(t) \mid t \in \realR \}, \quad \textnormal{where} \\
  w(t) =
  \begin{cases}
    x + e^{\frac{3\pi i}{4}}t & \textnormal{if $0 \leq t \leq s_1$,} \\
    x + e^{\frac{3\pi i}{4}}s_1 + i(t-s_1) & \textnormal{if $s_1 \leq t \leq s_2 + (1 - \frac{1}{\sqrt{2}})s_1$,} \\
    x - (\sqrt{2}-1)s_1 + (1+i)s_2 - t & \textnormal{if $t \geq s_2 + (1 - \frac{1}{\sqrt{2}})s_1$,} \\
    \overline{w(-t)} & \textnormal{if $t \leq 0$.}
  \end{cases}
\end{multline}
\begin{equation} \label{eq:parametrization_of_Pi}
  \Pi^x_s = \{ w(t) \mid t \in \realR \} \quad \textnormal{where} \quad w(t) =
  \begin{cases}
    x + it & \textnormal{if $0 \leq t \leq s$,} \\
    x + (1+i)s - t & \textnormal{if $t \geq s$,} \\
    \overline{w(-t)} & \textnormal{if $t \leq 0$.}
  \end{cases}
\end{equation}
\begin{align}
\Gamma^x_s = & \{ w(t) \mid t \in [-s,s] \} \quad \textnormal{where} \quad w(t) = x + (it - \lvert t \rvert)/\sqrt{2}. \\
I^x_s = & \{ w(t) \mid t \in [-s,s] \} \quad \textnormal{where} \quad w(t) = x + it.
\end{align}
See Figures \ref{figure_contour_Sigma}, \ref{figure_contour_Pi}, \ref{figure_contour_Gamma} and \ref{figure_contour_I} for these contours. For any real number $r$, we define
\begin{equation}
\Sigma^x_{s_1,s_2}(r) = \{ z \in \Sigma^x_{s_1,s_2} \mid \Re z \geq r \}, \quad \Pi^x_s(r) = \{ z \in \Pi^x_s \mid \Re z \geq r \}.
\end{equation}

\begin{figure}[htb]
\begin{minipage}[b]{0.45\textwidth}
\begin{centering}
\includegraphics{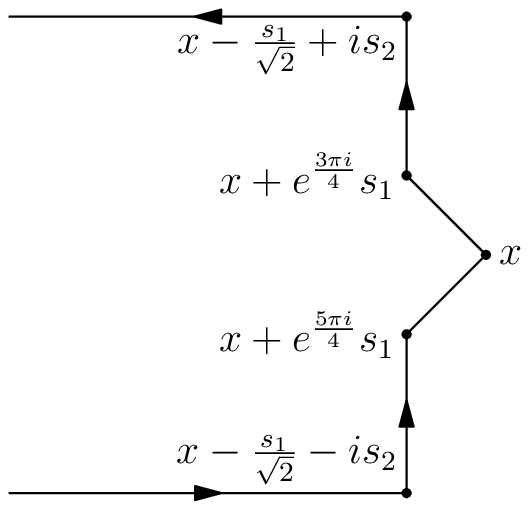}
\caption{The contour $\Sigma^x_{s_1,s_2}$} \label{figure_contour_Sigma}
\end{centering}
\end{minipage}
\begin{minipage}[b]{0.45\textwidth}
\begin{centering}
\includegraphics{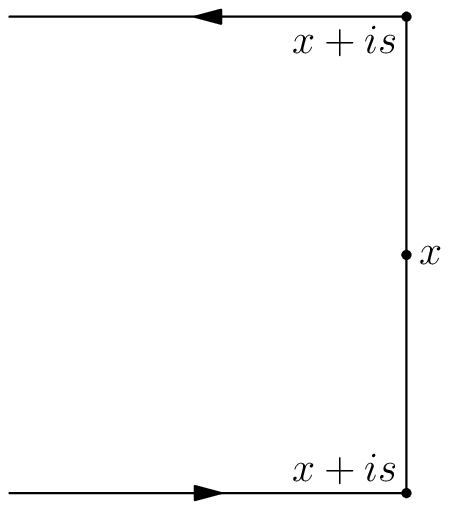}
\caption{The contour $\Pi^x_s$} \label{figure_contour_Pi}
\end{centering}
\end{minipage} \\
\begin{minipage}[b]{0.45\textwidth}
\begin{centering}
\includegraphics{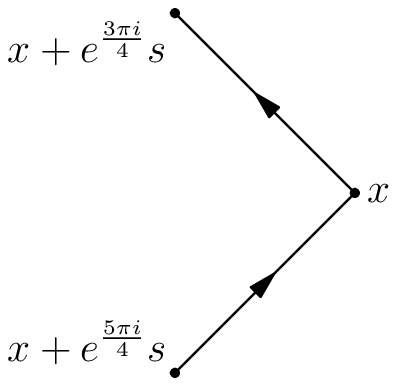}
\caption{The contour $\Gamma^x_s$} \label{figure_contour_Gamma}
\end{centering}
\end{minipage}
\begin{minipage}[b]{0.45\textwidth}
\begin{centering}
\includegraphics{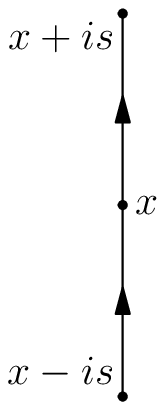}
\caption{The contour $I^x_s$} \label{figure_contour_I}
\end{centering}
\end{minipage}
\end{figure}

\bigskip

The asymptotic analysis in this section is based on the contour integral representation of the \pdf\ of the largest eigenvalue $\ximax$:
\begin{prop} \label{prop:contour_rep_of_pdf_of_largest_eigenvalue}
  Let $\ximax$ be the largest eigenvalue in the $n$-dimensional rank $1$ $\beta$-external source model for $\beta = 1,2,4$, where the potential $V_{\beta}(x)$ (or $\hat{V}_{\beta}(x)$) is defined by \eqref{eq:relation_of_V_beta_to_V} from $V(x)$, and the external source matrix $\A_{n,\beta}$ (or $\hat{\A}_{n,\beta}$) is defined by \eqref{eq:relation_of_A_beta_to_A} (or \eqref{eq:diag_formula_of_hat_A_n4}) from $\A_n$ in \eqref{eq:formula_of_rank_1_A} with $a > 0$. Then for any integer $n$ if $\beta = 2,4$ and for even integer $n$ if $\beta = 1$,
  \begin{equation} \label{eq:pdf_of_largest_eigen_supercritical}
    f_{\ximax}(u) = \frac{\hat{C}_{n,\beta}}{2\pi i} e^{-\frac{\beta}{2}nV(u)} \oint_{\Cont} Z_{n-1,\beta}(u,w)  \frac{e^{\frac{\beta}{2}anw}}{(w-u)^{\beta/2}} dw,
  \end{equation}
  where $\hat{C}_{n,\beta}$ is a constant, $Z_{n-1,\beta}(u,w)$ is defined in \eqref{eq:definition_of_Z_n-1_beta}, and $\Cont$ is either $\Sigma^x_{s_1,s_2}$ defined in \eqref{eq:parametrization_of_Sigma} or $\Pi^x_s$ defined in \eqref{eq:parametrization_of_Pi} with $x > u$.
\end{prop}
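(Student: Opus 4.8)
The plan is to pass to the joint eigenvalue density of $M$ and carry out the integration over the diagonalizing group explicitly, exploiting that $\A_{n,\beta}$ has rank one. Writing $M = UDU^{*}$ with $D = \diag(\lambda_1,\dots,\lambda_n)$, the density of the unordered eigenvalues is proportional to $\lvert\Delta(\lambda)\rvert^{\beta}\prod_{k=1}^{n}e^{-\frac{\beta}{2}nV(\lambda_k)}$ times a group integral which, the source having rank one, reduces to $\int e^{\frac{\beta}{2}an\sum_k\lambda_k\lvert v_k\rvert^{2}}\,dv$ with $v$ a Haar-random unit vector in $\realR^{n}$, $\compC^{n}$ or $\quatH^{n}$ for $\beta = 1,2,4$. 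The key point is that $(\lvert v_1\rvert^{2},\dots,\lvert v_n\rvert^{2})$ then has the Dirichlet law with all parameters equal to $\beta/2$; the prefactors $\frac{\beta}{2}an$ and $\frac{\beta}{2}n$ come out of \eqref{eq:relation_of_V_beta_to_V}, \eqref{eq:relation_of_A_beta_to_A} and \eqref{eq:diag_formula_of_A_n4} together with the fact that for $\beta = 4$ each $\lambda_k$ is a double eigenvalue of the $2n\times 2n$ matrix.

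Next I would evaluate the Dirichlet expectation $\E[e^{\sum_k t_k\lvert v_k\rvert^{2}}]$ with $t_k = \frac{\beta}{2}an\lambda_k$ by a Laplace transform in the radial variable of the simplex: the Laplace transform in $\rho$ of $\int_{\{y_k\ge 0,\ \sum_k y_k=\rho\}}e^{\sum_k t_k y_k}\prod_k y_k^{\beta/2-1}\,dy$ factors as $\bigl(\Gamma(\beta/2)\bigr)^{n}\prod_k(s-t_k)^{-\beta/2}$ for $\Re s>\max_k t_k$, so inverting and setting $\rho=1$ gives $\E[e^{\sum_k t_k\lvert v_k\rvert^{2}}]=\frac{\Gamma(n\beta/2)}{2\pi i}\int_{\Re s=\sigma}e^{s}\prod_k(s-t_k)^{-\beta/2}\,ds$ for any $\sigma>\max_k t_k$, and after the rescaling $s=\frac{\beta}{2}anw$ this becomes $\mathrm{const}\cdot\frac{1}{2\pi i}\int_{\Re w=x}e^{\frac{\beta}{2}anw}\prod_k(w-\lambda_k)^{-\beta/2}\,dw$ for any $x>\max_k\lambda_k$. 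Substituting this into the eigenvalue density and using $f_{\ximax}(u)=n\int_{(-\infty,u)^{n-1}}[\text{density}](u,\lambda_2,\dots,\lambda_n)\,d\lambda$ — legitimate because then all eigenvalues lie in $(-\infty,u]$, so one may take $x>u$ — I would factor $\lvert\Delta(u,\lambda_2,\dots,\lambda_n)\rvert^{\beta}=\prod_{j\ge 2}(u-\lambda_j)^{\beta}\lvert\Delta(\lambda_2,\dots,\lambda_n)\rvert^{\beta}$ and $\prod_k(w-\lambda_k)^{-\beta/2}=(w-u)^{-\beta/2}\prod_{j\ge 2}(w-\lambda_j)^{-\beta/2}$, interchange the $w$- and $\lambda$-integrals (Fubini, using the fast decay of $e^{-\frac{\beta}{2}nV}$ and that $w$ stays at positive distance from $(-\infty,u]$), and recognize the remaining $(n-1)$-fold $\lambda$-integral — after matching the surviving factor $\prod_j e^{-\frac{\beta}{2}V(x_j)}(u-x_j)^{\beta}(w-x_j)^{-\beta/2}$ with $P_{n-1,\beta}(x;u,w)$ — as $\constZ_{n-1,\beta}\,Z_{n-1,\beta}(u,w)$. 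This produces \eqref{eq:pdf_of_largest_eigen_supercritical} with $\Cont$ the vertical line $\Re w=x$, $x>u$, and $\hat{C}_{n,\beta}$ absorbing all the $u$- and $w$-independent constants.

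It then remains to deform the vertical line to $\Sigma^x_{s_1,s_2}$ or $\Pi^x_s$. The integrand $w\mapsto Z_{n-1,\beta}(u,w)e^{\frac{\beta}{2}anw}(w-u)^{-\beta/2}$ is analytic on $\compC\setminus(-\infty,u]$ — all branch points of the factors $(w-x_j)^{-\beta/2}$ and $(w-u)^{-\beta/2}$ lie on $(-\infty,u]$ — and is $O(\lvert w\rvert^{-n\beta/2})\cdot e^{\frac{\beta}{2}an\Re w}$ at infinity; closing each of the three contours off to the left (the closing segments contribute nothing since $e^{\frac{\beta}{2}an\Re w}\to 0$) gives curves that each wind once around the cut $(-\infty,u]$, hence homotopic in $\compC\setminus(-\infty,u]$, so the integrals agree (the arcs at infinity vanish because $n\beta/2>1$, which holds for $n$ even when $\beta=1$). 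I expect the real difficulty to lie not in this analytic step but in the first one: making the passage from the matrix ensembles \eqref{eq:pdf_of_matrix_beta=1}--\eqref{eq:quaternion_form_of_pdf_of_beta=4} to the eigenvalue density rigorous and uniform in $\beta$ — the Jacobian $\lvert\Delta\rvert^{\beta}$ of the spectral decomposition, the pushforward of Haar measure on $O(n)$, $U(n)$, $Sp(n)$ to the uniform measure on the sphere, and especially the quaternionic bookkeeping for $\beta=4$ and the parity issue for $\beta=1$ with odd $n$. For $\beta=2$ this reduction is classical (the rank-one case of the Harish-Chandra--Itzykson--Zuber formula); the genuinely combinatorial part is for $\beta=1,4$, and in the appendix it is arranged so as to also define the model for every $\beta>0$ by taking the resulting eigenvalue density as the definition.
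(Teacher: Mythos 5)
Your proof is correct, and it takes a genuinely different route from the paper. The paper expands the group integral $\int_{G_\beta(n)}e^{n\Re\Tr(\A_{n,\beta}Q\Lambda_n Q^{-1})}dQ$ in Jack polynomials $C^{(2/\beta)}_\kappa$, uses the rank-one specialization (Stanley's vanishing result) to collapse the sum to one-row partitions $(k)$, invokes the generating function $\sum_k \cdots C^{(2/\beta)}_{(k)}(\lambda) = \prod_j(1-a\lambda_j)^{-\beta/2}$, and then extracts each term by a Cauchy integral before resumming; the resummation produces $e^{\frac{\beta}{2}anw}$ exactly when $\frac{\beta}{2}n\in\intZ$ (hence the restriction to even $n$ for $\beta=1$) and a Kummer function $M(1,\xi,\cdot)$ otherwise. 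You instead observe that the group integral is $\E_v\bigl[e^{\frac{\beta}{2}an\sum_k\lambda_k|v_k|^2}\bigr]$ for a Haar-random unit vector $v$, identify $(|v_k|^2)_k$ as $\mathrm{Dirichlet}(\beta/2,\dots,\beta/2)$, and compute the expectation by a Laplace transform in the radial variable of the simplex, obtaining the Bromwich-line representation with $e^{\frac{\beta}{2}anw}$ directly. This is simpler and more elementary for rank one, and — a real bonus — it produces the same exponential formula with a $\Pi$- or $\Sigma$-type contour for $\beta=1$ and \emph{odd} $n\geq 3$ as well, since the open Bromwich contour never needs to close into a single-valued loop around the branch points (only the decay $|w|^{-n\beta/2}$ and $e^{\frac{\beta}{2}an\Re w}\to 0$ as $\Re w\to -\infty$ are needed); the parity restriction in the Proposition is thus an artifact of the paper's closed-contour $\oint_\infty$ formulation, not of the statement itself. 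What your approach does not give you, and the paper's does, is the series in general Jack polynomials \eqref{eq:general_integral_formula_of_Jack} — which handles arbitrary-rank sources and, crucially for the appendix's second goal, serves as the \emph{definition} of the $\beta$-external source model for all $\beta>0$, where there is no random unit vector to condition on. One small caution: in the marginal cases $n\beta/2=1$ (e.g.\ $\beta=1$, $n=2$) the Bromwich integral converges only conditionally, so the inversion and subsequent contour shift require a principal-value reading; and the Fubini step is justified by the bound $|w-\lambda_j|\geq\max(x'-u,|\Im w|)$ on the vertical line rather than by decay of the $w$-integrand alone, as you write — but these are routine refinements, not gaps.
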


The proof of Proposition \ref{prop:contour_rep_of_pdf_of_largest_eigenvalue} is in Appendix \ref{sec:external_source_model_with_odd_n_and_general_beta_external_source_model}.

\subsection{Computation of $f_{\ximax}(u)$ when $u \in [\redge+\bfepsilon, c(a)-\bfepsilon]$} \label{subsec:e<u<c(a)}

Assuming $\redge+\bfepsilon \leq u \leq c(a)-\bfepsilon$, we use the contour integral formula \eqref{eq:pdf_of_largest_eigen_supercritical} of $f_{\ximax}(u)$ and take the contour $\Cont$ in \eqref{eq:pdf_of_largest_eigen_supercritical} as $\Pi^{c(a)}_{\pi_1}$, which is defined in \eqref{eq:parametrization_of_Pi}. Here $\pi_1$ is a large enough parameter such that the inequality \eqref{eq:inequality_of_second_decreasing_horizontal_real} holds. For $w \in \Pi^{c(a)}_{\pi_1}$, we parametrize it by $w=w(t)$ as in \eqref{eq:parametrization_of_Pi} with $x=c(a)$ and $s = \pi_1$.

Let $L$ be a positive number. For sufficiently large $n$, the contour $I^{c(a)}_{L/\sqrt{n}}$ is part of $\Pi^{c(a)}_{\pi_1}$. By Proposition \ref{cor:outlier_micro}\ref{enu:cor:outlier_micro:c}, for $w = c(a) + it/\sqrt{n} \in I^{c(a)}_{L/\sqrt{n}}$, the integrand in the contour integral of \eqref{eq:pdf_of_largest_eigen_supercritical} satisfies
\begin{equation} \label{eq:asy_formula_of_integrand_around_c(a)}
  Z_{n-1,\beta}(u,w)  \frac{e^{\frac{\beta}{2}anw}}{(w-u)^{\beta/2}} = e^{-\frac{\beta t^2}{4} \int \frac{d\mu(x)}{(c(a)-x)^2}} e^{\frac{\beta}{2}n \left( \int p(x;u,c(a)) d\mu(x) + ac(a) \right)} \frac{R_{\beta}(u,c(a))}{(c(a)-u)^{\beta/2}} (1+o(1)).
\end{equation}
Using the asymptotic formula \eqref{eq:asy_formula_of_integrand_around_c(a)}, we have the result 
\begin{equation} \label{eq:estimate_of_integral_essential_b_2<u<c(a)}
  \frac{1}{2\pi i} \int_{I^{c(a)}_{L/\sqrt{n}}} Z_{n-1,\beta}(u,w)  \frac{e^{\frac{\beta}{2}anw}}{(w-u)^{\beta/2}} dw = \frac{1}{\sqrt{n}} e^{\frac{\beta}{2}n \left( \int p(x;u,c(a)) d\mu(x) + ac(a) \right)} \tilde{\M}_{\beta}(u) (1+\epsilon_1(L,n)),
\end{equation}
where
\begin{equation}
  \tilde{\M}_{\beta}(u) = \frac{R_{\beta}(u,c(a))}{\sqrt{\pi \beta \int \frac{d\mu(x)}{(c(a)-x)^2}} (c(a)-u)^{\beta/2}}
\end{equation}
and $\epsilon_1(L,n)$ is small if $L$ and $n$ are large. To be precise, for all $\epsilon > 0$, there is an $L_1 > 0$ such that for all $L > L_1$, $\lvert \epsilon_1(L,n) \rvert < \epsilon$ for $n$ large enough. 

For $w$ in a bounded subset of $\Pi^{c(a)}_{\pi_1} \setminus I^{c(a)}_{L/\sqrt{n}}$, we use Proposition \ref{cor:outlier_micro}\ref{enu:cor:outlier_micro:b} and estimate the integrand of the contour integral of \eqref{eq:pdf_of_largest_eigen_supercritical}
\begin{equation} \label{eq:estimate_of_absolute_value_of_integrand}
\left\lvert Z_{n-1,\beta}(u,w)  \frac{e^{\frac{\beta}{2}anw}}{(w-u)^{\beta/2}} \right\rvert = e^{\frac{\beta}{2}n \tildeP(u,w)} \frac{1}{\lvert w-u \rvert^{\beta/2}} O(1),
\end{equation}
where 
\begin{equation} \label{eq:defn_of_tildeP}
  \tildeP(u,w) = \int
  p(x;u,w) d\mu(x) + a\Re w,
\end{equation}
and the $O(1)$ factor is uniformly bounded. Below we show that $\tildeP(u,w(t))$ decreases fast enough as $t$ increases and $t \geq 0$. By the symmetry of the contour $\Sigma^{u+n^{-1}}_{\sigma_1, \sigma_2}$ about the real axis, we see that $\tildeP(u,w(t))$ decreases fast enough as $t$ decreases and $t \leq 0$.

If  $w(t) \in \Pi^{c(a)}_{\pi_1} \setminus I^{c(a)}_{L/\sqrt{n}}$ with $t \geq \pi_1$, \ie, $w$ is in the ray from $c(a)+i\pi_1$ to $-\infty$, we have
\begin{equation} \label{eq:derivative_of_tildeP_new_crrection}
  \frac{d \tildeP(u,w(t))}{dt} = \int \frac{c(a)-x+\pi_1-t}{\lvert w(t) -x \rvert^2} d\mu(x) - a.
\end{equation}
If $\pi_1$ is large enough, we have that for all $t \in [\pi_1, \infty)$ and all $x \in J$, there exists $c_1 > 0$ such that
\begin{equation} \label{eq:inequality_of_second_decreasing_horizontal_real}
  \frac{c(a)-x+\pi_1-t}{\lvert w(t) -x \rvert^2} < \frac{1}{c(a)-x} - c_1.
\end{equation}
Hence we have
\begin{equation} \label{eq:inequality_of_second_decreasing_horizontal}
  \frac{d \tildeP(u,w(t))}{dt} < \int \frac{d\mu(x)}{c(a)-x} - c_1 + a = -c_1.
\end{equation}
For $0 < t \leq \pi_1$, like \eqref{eq:derivative_of_tildeP_new_crrection} and \eqref{eq:inequality_of_second_decreasing_horizontal_real}, we have
\begin{equation} \label{eq:inequality_of_second_decreasing_vertical}
  \frac{d \tildeP(u,w(t))}{dt} = \int \frac{t}{\lvert w(t) - x \rvert^2} d\mu(x) < -c_2t < 0,
\end{equation}
where $c_2$ is a positive constant depending on $\pi_1$.

Let $\tilde{L}_1$ be a large enough positive number such that $-\tilde{L}_1 < c(a)$ and the inequality \eqref{eq:inequality_for_tildeL_1_small_u} holds. By Proposition \ref{cor:outlier_micro}\ref{enu:cor:outlier_micro:b} and inequalities \eqref{eq:inequality_of_second_decreasing_horizontal} and \eqref{eq:inequality_of_second_decreasing_vertical}, 
\begin{equation} \label{eq:integral_over_Pi_middle}
  \left\lvert \frac{1}{2\pi i} \int_{(\Pi^{c(a)}_{\pi_1}(-\tilde{L}_1) \setminus I^{c(a)}_{L/\sqrt{n}}) \cap \compC_+} Z_{n-1,\beta}(u,w)  \frac{e^{\frac{\beta}{2}anw}}{(w-u)^{\beta/2}} dw \right\rvert = \frac{1}{\sqrt{n}} \exp \left[ \frac{\beta}{2}n \tildeP(u, w(L/\sqrt{n})) \right] O(1),
\end{equation}
where the factor $O(1)$ is bounded uniformly in $L$. Substituting the Taylor expansion \eqref{eq:relation_between_tilde_p_and_tilde_p_w_0} of $p(x;u,w)$ into \eqref{eq:integral_over_Pi_middle}, we find 
\begin{equation} \label{eq:another_formula_of_tildeP}
e^{\frac{\beta}{2} n\tildeP(u, w(L/\sqrt{n}))} = e^{\frac{\beta}{2} n \left( \int p(x;u,c(a)) d\mu(x) + ac(a) \right)} \left( e^{-\frac{\beta L^2}{4} \int \frac{d\mu(x)}{(c(a)-x)^2}} + o(1) \right).
\end{equation}
We write like \eqref{eq:estimate_of_integral_essential_b_2<u<c(a)}
\begin{equation} \label{eq:estimate_of_integral_middle_b_2<u<c(a)}
\frac{1}{2\pi i} \int_{(\Pi^{c(a)}_{\pi_1}(-\tilde{L}_1) \setminus I^{c(a)}_{L/\sqrt{n}}) \cap \compC_+} Z_{n-1,\beta}(u,w)  \frac{e^{\frac{\beta}{2}anw}}{(w-u)^{\beta/2}} dw = \frac{1}{\sqrt{n}} e^{\frac{\beta}{2}n \left( \int p(x;u,c(a)) d\mu(x) + ac(a) \right)} \epsilon_2(L,n).
\end{equation}
By \eqref{eq:integral_over_Pi_middle} and \eqref{eq:another_formula_of_tildeP}, we find that $\epsilon_2(L,n)$ is small if $L$ and $n$ are large. To be precise, for any $\epsilon > 0$, there exists $L_2 > 0$ such that for all $L > L_2$, $\lvert \epsilon_2(L,n) \rvert < \epsilon$ for sufficiently large $n$.

If $w \in (\Pi^{c(a)}_{\pi_1} \setminus \Pi^{c(a)}_{\pi_1}(-\tilde{L}_1)) \cap \compC_+$, \ie, $w$ is in the ray from $c(a) + i\pi_1$ to $-\infty$ and $\Re w < -\tilde{L}_1$, since $\lvert w - \lambda_j \rvert \geq \pi_1$ for all $\lambda_j \in \realR$, by \eqref{eq:definition_of_P_n-1_beta} we have 
\begin{equation} \label{eq:estimate_of_P_n-1,beta_far_to_left}
  \lvert P_{n-1,\beta}(x_1, \dots, x_{n-1};u,w) \rvert \leq \pi^{-\frac{\beta}{2}(n-1)}_1 \prod^{n-1}_{j=1} e^{-\frac{\beta}{2}V(x_j)}(u-x_j)^{\beta}.
\end{equation}
Hence substituting \eqref{eq:estimate_of_P_n-1,beta_far_to_left} into \eqref{eq:definition_of_Z_n-1_beta}, we have
\begin{equation}
\lvert Z_{n-1,\beta}(u,w) \rvert \leq \sigma^{-\frac{\beta}{2}(n-1)}_2 \E_{n-1,\beta}(\hat{F}_{n-1,\beta}(x_1, \dots, x_{n-1};u)),
\end{equation}
where
\begin{equation} \label{eq:defn_of_hat_F}
  \hat{F}_{n-1,\beta}(x_1, \dots, x_{n-1};u) = 
  \begin{cases}
    e^{\frac{\beta}{2} \sum^{n-1}_{j=1} (-V(x_j)+2\log(u-x_j))} & \textnormal{if $\max_{1 \leq j < n} x_j < u$,} \\
    0 & \textnormal{otherwise.}
  \end{cases}
\end{equation}
Similar to \eqref{eq:average_pf_F_beta_by sandwich} and \eqref{eq:1st_support_of_cor:outlier_micro_2_c}, we have (see Remark \ref{rmk:evaluation_of_E(hat_F)})
\begin{gather}
  \E_{n-1,\beta}(\hat{F}_{n-1,\beta}(x_1, \dots, x_{n-1};u)) = e^{\frac{\beta}{2}n \int -V(x)+ 2\log(u-x) d\mu(x)} O(1), \\
  Z_{n-1,\beta}(u,w) = \sigma^{-\frac{\beta}{2}n}_2
  e^{\frac{\beta}{2}n \int -V(x)+ 2\log(u-x) d\mu(x)}
  O(1), \label{eq:rough_estimate_of_Z(uw)_unbounded}
\end{gather}
where in \eqref{eq:rough_estimate_of_Z(uw)_unbounded} $w \in (\Pi^{c(a)}_{\pi_1}\setminus \Pi^{c(a)}_{\pi_1}(-\tilde{L}_1)) \cap \compC_+$, and the $O(1)$ factor is bounded uniformly in $w$. Taking $s = -\Re w$, by substituting the estimate \eqref{eq:rough_estimate_of_Z(uw)_unbounded} of $Z_{n-1,\beta}(u,w)$ into \eqref{eq:pdf_of_largest_eigen_supercritical}, we have
\begin{equation} \label{eq:estimate_of_integral_far_away_u>c(a)_pre}
  \begin{split}
    & \left\lvert \frac{1}{2\pi i} \int_{(\Pi^{c(a)}_{\pi_1}\setminus \Pi^{c(a)}_{\pi_1}(-\tilde{L}_1)) \cap \compC_+} Z_{n-1,\beta}(u,w)  \frac{e^{\frac{\beta}{2}anw}}{(w-u)^{\beta/2}} dw \right\rvert \\
    \leq & \frac{1}{2\pi} \int^{\infty}_{\tilde{L}_1} \pi^{-\frac{\beta}{2}n}_1 e^{\frac{\beta}{2}n \int -V(x)+ 2\log(u-x) d\mu(x)} O(1) \frac{e^{-\frac{\beta}{2}ans}}{\lvert -s-u+i\pi_1 \rvert^{\beta/2}} ds \\
    = & e^{\frac{\beta}{2}n \left( \int -V(x)+ 2\log(u-x) d\mu(x) - \log(\pi_1) - a\tilde{L}_1 \right)} O(n^{-1}),
  \end{split}
\end{equation}
where the $O(n^{-1})$ factor is uniform in $\tilde{L}_1$. For large enough $\tilde{L}_1$,
\begin{equation} \label{eq:inequality_for_tildeL_1_small_u}
\int -V(x) + 2\log(u-x) d\mu(x) - a\tilde{L}_1 - \log \pi_1 < \int p(x;u,c(a)) d\mu(x) + ac(a).
\end{equation}
Substituting \eqref{eq:inequality_for_tildeL_1_small_u} into \eqref{eq:estimate_of_integral_far_away_b_2<u<c(a)}, we have 
\begin{equation} \label{eq:estimate_of_integral_far_away_b_2<u<c(a)_final}
  \frac{1}{2\pi i} \int_{(\Pi^{c(a)}_{\pi_1} \setminus (\Pi^{c(a)}_{\pi_1}(-\tilde{L}_1)) \cap \compC_+} Z_{n-1,\beta}(u,w)  \frac{e^{\frac{\beta}{2}anw}}{(w-u)^{\beta/2}} dw = \frac{1}{\sqrt{n}} e^{\frac{\beta}{2}n \left( \int p(x;u,c(a)) d\mu(x) + ac(a) \right)} o(1).
\end{equation}

The results of \eqref{eq:estimate_of_integral_middle_b_2<u<c(a)} and \eqref{eq:estimate_of_integral_far_away_b_2<u<c(a)_final} give an estimate of the contour integral \eqref{eq:pdf_of_largest_eigen_supercritical} on $(\Pi^{c(a)}_{\pi_1} \setminus I^{c(a)}_{L/\sqrt{n}}) \cap \compC_+$. The integral on $(\Pi^{c(a)}_{\pi_1} \setminus I^{c(a)}_{L/\sqrt{n}}) \cap \compC_-$ is similar, since both the contour and the integrand are symmetric about the real axis.

Therefore, by \eqref{eq:estimate_of_integral_essential_b_2<u<c(a)}, \eqref{eq:estimate_of_integral_middle_b_2<u<c(a)} and \eqref{eq:estimate_of_integral_far_away_b_2<u<c(a)_final}, we have that for $u \in [\redge + \bfepsilon, c(a) - \bfepsilon]$, 
\begin{equation} \label{eq:asy_of_f_ximax_near}
\begin{split}
  f_{\ximax}(u) = & \hat{C}_{n,\beta} e^{-\frac{\beta}{2}nV(u)} \frac{1}{\sqrt{n}} e^{\frac{\beta}{2}n \left( \int p(x;u,c(a)) d\mu(x) + ac(a) \right)} \left( \tilde{\M}_{\beta}(u) + o(1) \right) \\
  = & \hat{C}_{n,\beta} \frac{1}{\sqrt{n}} e^{-\frac{\beta}{2}n \int V(x) d\mu(x)} e^{\frac{\beta}{2}n (-V(u) + 2\gfn(u) - \ell)}e^{\frac{\beta}{2}n \Hfn(c(a);a)} \left( \tilde{\M}_{\beta}(u) + o(1) \right) \\
  = & \bfC_{n,\beta} n^{(1-\beta)/2} e^{\frac{\beta}{2}n(\Hfn(c(a);a) - V(u) + 2 \gfn(u) - \ell)} \left( \tilde{\M}_{\beta}(u) + o(1) \right),
\end{split}
\end{equation}
where $\gfn(u)$ is defined in \eqref{eq:defn_of_g_function}, $\Hfn(c(a);a)$ is defined in \eqref{eq:defn_of_Hfn}, and
\begin{equation} \label{eq:defn_of_bfC_newcorrection}
\bfC_{n,\beta} := \hat{C}_{n,\beta} n^{\beta/2 - 1} e^{-\frac{\beta}{2}n \int V(x) d\mu(x)}.
\end{equation}

\subsection{Computation of $f_{\ximax}(u)$ when $u \in [c(a)+\bfepsilon, \redge+\bfepsilon^{-1}]$} \label{subsec:u>c(a)}

Assuming $c(a)+\bfepsilon \leq u \leq \redge+\bfepsilon^{-1}$, we use the contour integral formula \eqref{eq:pdf_of_largest_eigen_supercritical} of $f_{\ximax}(u)$ and take the contour $\Cont$ in \eqref{eq:pdf_of_largest_eigen_supercritical} as $\Sigma^{u+n^{-1}}_{\sigma_1,\sigma_2}$, which is defined in \eqref{eq:parametrization_of_Sigma}. Here $\sigma_1$ is a small enough parameter and $\sigma_2$ is a large enough parameter, such that the inequalities \eqref{eq:derivative_on_Sigma_III} and \eqref{eq:decreasing_speed_of_w_not_far} hold. For $w \in \Sigma^{u+n^{-1}}_{\sigma_1,\sigma_2}$, we parametrize it by $w = w(t)$ as in \eqref{eq:parametrization_of_Sigma} with $x = u+n^{-1}$, $s_1 = \sigma_1$ and $s_2 = \sigma_2$.

Let $L$ be a positive number. For sufficiently large $n$, the contour $\Gamma^{u+n^{-1}}_{L/n}$ is part of $\Sigma^{u+n^{-1}}_{\sigma_1, \sigma_2}$. By Proposition \ref{cor:outlier_micro}\ref{enu:cor:outlier_micro:a}, for $w = u+z/n \in \Gamma^{u+n^{-1}}_{L/n}$ the integrand in the contour integral of \eqref{eq:pdf_of_largest_eigen_supercritical} satisfies
\begin{equation}
  Z_{n-1,\beta}(u,w)  \frac{e^{\frac{\beta}{2}anw}}{(w-u)^{\beta/2}} = 
  e^{\frac{\beta}{2}n \left( \int p(x;u) d\mu(x) + au \right)}
  R_{\beta}(u) \frac{e^{\frac{\beta z}{2} \left( a - \int
        \frac{d\mu(x)}{u-x} \right)}}{(z/n)^{\beta/2}} (1+o(1)).
\end{equation}
Hence
\begin{multline} \label{eq:estimate_of_integral_over_finite_Gamma_contour}
  \frac{1}{2\pi i} \int_{\Gamma^{u+n^{-1}}_{L/n}} Z_{n-1,\beta}(u,w)  \frac{e^{\frac{\beta}{2}anw}}{(w-u)^{\beta/2}} dw = \\
  n^{\beta/2 - 1} e^{\frac{\beta}{2}n \left( \int p(x;u) d\mu(x) + au
    \right)} R_{\beta}(u) \frac{1}{2\pi i} \int_{\Gamma^{1}_L}
  \frac{e^{\frac{\beta z}{2} \left( a - \int \frac{d\mu(x)}{u-x}
      \right)}}{z^{\beta/2}} dz (1+o(1)).
\end{multline}
Using Hankel's contour integral expression of Gamma function (See \cite[6.1.4]{Abramowitz-Stegun64}), we find
\begin{equation} \label{eq:model_integral_over_infinite_Gamma_contour}
\frac{1}{2\pi i} \int_{\Gamma^{1}_{\infty}} \frac{e^{\frac{\beta z}{2} \left( a - \int \frac{d\mu(x)}{u-x} \right)}}{z^{\beta/2}} dz = \left[ \frac{\beta}{2}\left( a - \int \frac{d\mu(x)}{u-x} \right) \right]^{\frac{\beta}{2}-1} \Gamma\left( \frac{\beta}{2} \right).
\end{equation}
Comparing the integral on the right-hand side of \eqref{eq:estimate_of_integral_over_finite_Gamma_contour} with the left-hand side of \eqref{eq:model_integral_over_infinite_Gamma_contour}, we write analogous to \eqref{eq:estimate_of_integral_essential_b_2<u<c(a)} that
\begin{equation} \label{eq:formula_define_epsilon_1(Ln)}
  \frac{1}{2\pi i} \int_{\Gamma^{u+n^{-1}}_{L/n}} Z_{n-1,\beta}(u,w)  \frac{e^{\frac{\beta}{2}anw}}{(w-u)^{\beta/2}} dw = n^{\beta/2 - 1} e^{\frac{\beta}{2}n \left( \int p(x;u) d\mu(x) + au \right)} \M_{\beta}(u) (1+\epsilon_3(L,n)),
\end{equation}
where 
\begin{equation} \label{eq:defn_of_M_beta}
\M_{\beta}(u) = R_{\beta}(u) \left[ \frac{\beta}{2}\left( a - \int \frac{d\mu(x)}{u-x} \right) \right]^{\frac{\beta}{2}-1} \Gamma\left( \frac{\beta}{2} \right),
\end{equation}
and the term $\epsilon_3(L,n)$ is small if $L$ and $n$ are large. To be precise, for any $\epsilon > 0$, there is an $L_3 > 0$ such that for all $L > L_3$, $\lvert \epsilon_3(L,n) \rvert < \epsilon$ for sufficiently large $n$.

Let $\tilde{L}_2$ be a large enough positive number such that $-\tilde{L}_2 < u-\sigma_1/\sqrt{2}$ and the inequality \eqref{eq:inequality_for_tildeL_1} holds. For $w \in \Sigma^{u+n^{-1}}_{\sigma_1,\sigma_2}(-\tilde{L}_2) \setminus \Gamma^{u+n^{-1}}_{L/n}$, we use Proposition \ref{cor:outlier_micro}\ref{enu:cor:outlier_micro:b} and find that \eqref{eq:estimate_of_absolute_value_of_integrand} still holds.

If $w = w(t) \in \Sigma^{u+n^{-1}}_{\sigma_1,\sigma_2}$ and $0 \leq t \leq \sigma_1$, \ie, $w$ is in the line segment between $u+n^{-1}$ and $u+n^{-1}+e^{\frac{3\pi i}{4}}\sigma_1$, we have
\begin{equation} \label{eq:derivative_of_intergrand_Sigma_I}
\frac{d \tildeP(u,w(t))}{dt} = \int \frac{(u-x+n^{-1})/\sqrt{2}-t}{\lvert w(t) -x \rvert^2} d\mu(x) - \frac{a}{\sqrt{2}}.
\end{equation}
By \eqref{eq:positivity_of_H'(u;a)}, we know that $\Hfn'(u;a)$ is a positive number. If $\sigma_1$ is small enough, for all $t \in [0, \sigma_1]$ and all $x \in J$
\begin{equation} \label{eq:derivative_on_Sigma_III}
\frac{(u-x+n^{-1})/\sqrt{2}-t}{\lvert w(t) - x \rvert^2} < \frac {1}{\sqrt{2}} \left( \frac{1}{u-x} + \left( 1-\frac{1}{\sqrt{2}} \right) \Hfn'(u;a) \right).
\end{equation}
Hence for $0 \leq t \leq \sigma_1$, substituting \eqref{eq:derivative_on_Sigma_III} in \eqref{eq:derivative_of_intergrand_Sigma_I}, we find with the help of \eqref{eq:positivity_of_H'(u;a)}
\begin{equation} \label{eq:quantitative_est_of_dervative_on_Sigma_1}
  \begin{split}
    \frac{d \tildeP(u,w(t))}{dt} < & \frac{1}{\sqrt{2}} \int \frac{d\mu(x)}{u-x} + \left( \frac{1}{\sqrt{2}}-\frac{1}{2} \right) \Hfn'(u;a) \int d\mu(x) - \frac{a}{\sqrt{2}} \\
    = & -\frac{1}{\sqrt{2}} \Hfn'(u;a) + \left(
      \frac{1}{\sqrt{2}}-\frac{1}{2} \right) \Hfn'(u;a) = -\frac{1}{2}
    \Hfn'(u;a).
  \end{split}
\end{equation}

If $w = w(t) \in \Sigma^{u+n^{-1}}_{\sigma_1,\sigma_2}$ and $t \geq \sigma_2 + (1-\sqrt{2}/2)\sigma_1$, \ie, $w$ is in the ray from $u+n^{-1}-\sigma_1/\sqrt{2}+i\sigma_2$ to $-\infty$, like \eqref{eq:derivative_of_tildeP_new_crrection} we have
\begin{equation} \label{eq:derivative_on_Sigma_III_far} \frac{d
    \tildeP(u,w(t))}{dt} = \int \frac{u-x+n^{-1} + \sigma_2 - (\sqrt{2}-1)\sigma_1 - t}{\lvert w(t) -x \rvert^2} d\mu(x) - a.
\end{equation}
If $\sigma_2$ is large enough, like \eqref{eq:inequality_of_second_decreasing_horizontal_real} we have that for all $t \in [\sigma_2 + (1-\sqrt{2}/2)\sigma_1, \infty)$ and all $x \in J$, if $n$ is large enough
\begin{equation} \label{eq:decreasing_speed_of_w_not_far}
  \frac{u-x+n^{-1} + \sigma_2 - (\sqrt{2}-1)\sigma_1 - t}{\lvert w(t) -x \rvert^2} < \frac{1}{u-x} + \frac{1}{2} \Hfn'(u;a).
\end{equation}
Substituting \eqref{eq:decreasing_speed_of_w_not_far} into \eqref{eq:derivative_on_Sigma_III_far}, we find that like \eqref{eq:decreasing_speed_of_w_not_far}, for $t \geq \sigma_2 + (1-\sqrt{2}/2)\sigma_1$
\begin{equation} \label{eq:quantitative_est_of_dervative_on_Sigma_3}
\frac{d \tildeP(u,w(t))}{dt} < \int \frac{d\mu(x)}{u-x} + \frac{1}{2}\Hfn'(u;a) -a = -\frac{1}{2}\Hfn'(u;a).
\end{equation}

If $w = w(t) \in \Sigma^{u+n^{-1}}_{\sigma_1,\sigma_2}$ and $\sigma_1 \leq t \leq \sigma_2 + (1-\sqrt{2}/2)\sigma_1$, \ie, $w$ is in the line segment between $u+n^{-1}+e^{\frac{3\pi i}{4}}\sigma_1$ and $u+n^{-1}-\sigma_1/\sqrt{2}+i\sigma_2$ we have
\begin{equation} \label{eq:est_of_dervative_on_Sigma_2}
\frac{d \tildeP(u,w(t))}{dt} = -\int \frac{d}{dt} \log \lvert w(t)-x \rvert d\mu(x) < 0.
\end{equation}

Thus by \eqref{eq:estimate_of_absolute_value_of_integrand}, \eqref{eq:quantitative_est_of_dervative_on_Sigma_1}, \eqref{eq:quantitative_est_of_dervative_on_Sigma_3} and \eqref{eq:est_of_dervative_on_Sigma_2}, we find that for a fixed $\tilde{L}_2$, similar to \eqref{eq:integral_over_Pi_middle},
\begin{equation} \label{eq:estimate_of_finite_integral_upto_tildeL}
  \begin{split}
    & \left\lvert \frac{1}{2\pi i} \int_{(\Sigma^{u+n^{-1}}_{\sigma_1,\sigma_2}(-\tilde{L}_2) \setminus \Gamma^{u+n^{-1}}_{L/n}) \cap \compC_+} Z_{n-1,\beta}(u,w)  \frac{e^{\frac{\beta}{2}anw}}{(w-u)^{\beta/2}} dw \right\rvert \\
    \leq & \frac{1}{2\pi} \int^{\tilde{L}+u+n^{-1} + \sigma_2 + (\sqrt{2}-1)\sigma_1}_{L/n} \left\lvert Z_{n-1,\beta}(u,w(t))  \frac{e^{\frac{\beta}{2}anw(t)}}{(w(t)-u)^{\beta/2}} \right\rvert dt \\
    = & \frac{1}{2\pi} \int^{\tilde{L}+u+n^{-1} + \sigma_2 +
      (\sqrt{2}-1)\sigma_1}_{L/n} e^{\frac{\beta}{2}n \tildeP(u,w(t))}
    \frac{1}{\lvert w(t)-u \rvert^{\beta/2}} O(1) dt \\
    = & n^{-1} \exp \left[ \frac{\beta}{2}n \tildeP(u, w(L/n)) \right] \frac{1}{\lvert w(L/n)-u \rvert^{\beta/2}} O(1) \\
    = & n^{\frac{\beta}{2}-1} \exp \left[ \frac{\beta}{2}n \tildeP(u,
      w(L/n)) \right] \frac{1}{\lvert 1 + e^{\frac{3\pi i}{4}}L
      \rvert^{\beta/2}} O(1),
  \end{split}
\end{equation}
where the last factor $O(1)$ is bounded uniformly in $L$. Substituting the Taylor expansion \eqref{eq:relation_between_p_and_tilde_p} of $p(x;u,w)$ into \eqref{eq:defn_of_tildeP}, we find that like \eqref{eq:another_formula_of_tildeP}
\begin{equation} \label{eq:another_formula_of_tildeP_u}
e^{\frac{\beta}{2} n\tildeP(u, w(L/n))} = e^{\frac{\beta}{2} n\left( \int p(x;u) d\mu(x) + au \right)} \left( e^{-\frac{\beta L}{2\sqrt{2}} \int \frac{d\mu(x)}{u-x} - \frac{aL}{\sqrt{2}}} + o(1) \right).
\end{equation}
Like \eqref{eq:estimate_of_integral_middle_b_2<u<c(a)}, we write
\begin{equation} \label{eq:estimate_of_cont_int_Sigma_middle}
  \frac{1}{2\pi i} \int_{(\Sigma^{u+n^{-1}}_{\sigma_1,\sigma_2}(-\tilde{L}_2) \setminus \Gamma^{u+n^{-1}}_{L/n}) \cap \compC_+} Z_{n-1,\beta}(u,w) \frac{e^{\frac{\beta}{2}anw}}{(w-u)^{\beta/2}} dw =  n^{\frac{\beta}{2}-1} e^{\frac{\beta}{2}n \left( \int p(x;u)d\mu(x) + au \right)} \epsilon_4(L,n).
\end{equation}
By \eqref{eq:estimate_of_finite_integral_upto_tildeL} and \eqref{eq:another_formula_of_tildeP_u}, we find that $\epsilon_4(L,n)$ is small if $L$ and $n$ are large. To be precise, for any $\epsilon > 0$, there is an $L_4 > 0$ such that for all $L > L_4$, $\lvert \epsilon_4(L,n) \rvert < \epsilon$ for sufficiently large $n$.

Like \eqref{eq:estimate_of_integral_far_away_u>c(a)_pre}, we have
\begin{multline} \label{eq:estimate_of_integral_far_away_b_2<u<c(a)}
  \left\lvert \frac{1}{2\pi i} \int_{(\Sigma^{u+n^{-1}}_{\sigma_1,\sigma_2} \setminus \Sigma^{u+n^{-1}}_{\sigma_1,\sigma_2}(-\tilde{L}_2)) \cap \compC_+} Z_{n-1,\beta}(u,w)  \frac{e^{\frac{\beta}{2}anw}}{(w-u)^{\beta/2}} dw \right\rvert = \\
  e^{\frac{\beta}{2}n \left( \int -V(x) + 2\log(u-x) d\mu(x) - \log(\sigma_2) - a\tilde{L}_2 \right)} O(n^{-1}),
\end{multline}
where the $O(n^{-1})$ factor is uniform in $\tilde{L}_2$. For large enough $\tilde{L}_2$,
\begin{equation} \label{eq:inequality_for_tildeL_1}
\int -V(x)+ 2\log(u-x) d\mu(x) - \log(\sigma_2) -a\tilde{L}_2 < \int p(x;u) d\mu(x) + au.
\end{equation}
Substituting \eqref{eq:inequality_for_tildeL_1} into \eqref{eq:estimate_of_integral_far_away_b_2<u<c(a)}, we obtain
\begin{equation} \label{eq:estimate_of_integral_far_away_u>c(a)}
  \frac{1}{2\pi i} \int_{(\Sigma^{u+n^{-1}}_{\sigma_1,\sigma_2} \setminus \Sigma^{u+n^{-1}}_{\sigma_1,\sigma_2}(-\tilde{L}_2)) \cap \compC_+} Z_{n-1,\beta}(u,w)  \frac{e^{\frac{\beta}{2}anw}}{(w-u)^{\beta/2}} dw = n^{\beta/2 - 1} e^{\frac{\beta}{2}n \left( \int p(x;u) d\mu(x) + au \right)} o(1).
\end{equation}

The results of \eqref{eq:estimate_of_cont_int_Sigma_middle} and \eqref{eq:estimate_of_integral_far_away_u>c(a)} give an estimate of the contour integral in \eqref{eq:pdf_of_largest_eigen_supercritical} on $(\Sigma^{u+n^{-1}}_{\sigma_1,\sigma_2} \setminus \Gamma^{u+n^{-1}}_{L/n}) \cap \compC_+$. The contour integral on $(\Sigma^{u+n^{-1}}_{\sigma_1,\sigma_2} \setminus \Gamma^{u+n^{-1}}_{L/n}) \cap \compC_-$ is similar, since both the contour and the integrand in \eqref{eq:pdf_of_largest_eigen_supercritical} are symmetric about the real axis.

Therefore, by \eqref{eq:pdf_of_largest_eigen_supercritical}, \eqref{eq:formula_define_epsilon_1(Ln)}, \eqref{eq:estimate_of_cont_int_Sigma_middle} and \eqref{eq:estimate_of_integral_far_away_u>c(a)}, we have that for $u \in [c(a)+\bfepsilon, \redge+\bfepsilon^{-1}]$,
\begin{equation} \label{eq:f_ximax_right_to_c(a)}
\begin{split}
f_{\ximax}(u) = & \hat{C}_{n,\beta} e^{-\frac{\beta}{2}nV(u)} n^{\beta/2 - 1} e^{\frac{\beta}{2}n \left( \int p(x;u) d\mu(x) + au \right)} (\M_{\beta}(u) + o(1)) \\
= & \bfC_{n,\beta} e^{\frac{\beta}{2}n\Gfn(u;a)}(\M_{\beta}(u) + o(1)),
\end{split}
\end{equation}
where $\Gfn(u;a)$ is defined in \eqref{eq:defn_of_Gfn}, $\M_{\beta}(u)$ is defined in \eqref{eq:defn_of_M_beta} and $\bfC_{n,\beta}$ is defined in \eqref{eq:defn_of_bfC_newcorrection}.

\subsection{Estimation of $f_{\ximax}(u)$ when $u \in (c(a)-\bfepsilon, c(a)+\bfepsilon)$, $u > \redge + \bfepsilon^{-1}$ or $u < \redge + \bfepsilon$} \label{subsec:other_u}

\subsubsection{$u \in [c(a)-\bfepsilon, c(a)+\bfepsilon]$ or $u \in [\redge-\bfepsilon, \redge+\bfepsilon]$}

In this subsubsection we use the inequality that if $u_1 < u_2$ and $w \in \compC \setminus (-\infty,u_2)$, then
\begin{equation} \label{eq:inequality_involving_u_1_u_2}
  Z_{n-1,\beta}(u_1,w) \leq \hat{Z}_{n-1,\beta}(u_1,w) \leq \hat{Z}_{n-1,\beta}(u_2,w).
\end{equation}
The inequality \eqref{eq:inequality_involving_u_1_u_2} is a straightforward consequence of the definitions \eqref{eq:definition_of_Z_n-1_beta} and \eqref{eq:defn_of_Z_hat_(uwc)} of $Z_{n-1,\beta}(u,w)$ and $\hat{Z}_{n-1,\beta}(u,w)$.

For $c(a)-\bfepsilon \leq u \leq c(a)+\bfepsilon$, we use the contour integral formula \eqref{eq:pdf_of_largest_eigen_supercritical} of $f_{\ximax}(u)$ and take the contour $\Cont$ in \eqref{eq:pdf_of_largest_eigen_supercritical} as $\Pi^{c(a)+2\bfepsilon}_{\pi_2}$, which is defined in \eqref{eq:parametrization_of_Pi}. Here $\pi_2$ is a large enough parameter such that the inequality \eqref{eq:inequality_of_decreasing_horizontal_u_around_c(a)} holds. For $w \in \Pi^{c(a)+2\bfepsilon}_{\pi_2}$, we parametrize it by $w=w(t)$ as in \eqref{eq:parametrization_of_Pi} with $x = c(a)+2\bfepsilon$ and $s = \pi_2$.
From \eqref{eq:inequality_involving_u_1_u_2} we have for all $u \in [c(a)-\bfepsilon, c(a)+\bfepsilon]$ and $\tilde{L}_3 > -(c(a)+2\bfepsilon)$
\begin{multline} \label{eq:basic_inequality_for_u_around_c(a)_and_b_2}
  \left\lvert \frac{1}{2\pi i} \int_{\Pi^{c(a)+2\bfepsilon}_{\pi_2}(-\tilde{L}_3) \cap \compC_+} Z_{n-1,\beta}(u,w) \frac{e^{\frac{\beta}{2}anw}}{(w-u)^{\beta/2}} dw \right\rvert \leq \\
  \frac{1}{2\pi} \int^{c(a)+2\bfepsilon + \tilde{L}_3 + \pi_2}_{0} \hat{Z}_{n-1,\beta}(c(a)+\bfepsilon, w(t)) \frac{e^{\frac{\beta}{2}an \Re(w(t))}}{\lvert w(t)-u \rvert^{\beta/2}} dt.
\end{multline}

For $w \in \Pi^{c(a)+2\bfepsilon}_{\pi_2}$, we use Proposition \ref{cor:outlier_micro}\ref{enu:cor:outlier_micro:b} and find like \eqref{eq:estimate_of_absolute_value_of_integrand}
\begin{equation} \label{eq:est_like_estimate_of_absolute_value_of_integrand}
  \hat{Z}_{n-1,\beta}(c(a)+\bfepsilon, w(t))
  \frac{e^{\frac{\beta}{2}an \Re(w(t))}}{\lvert w(t)-u
    \rvert^{\beta/2}} = e^{\frac{\beta}{2}n
    \tildeP(c(a)+\bfepsilon,w(t))} \frac{1}{\lvert w(t)-u
    \rvert^{\beta/2}} O(1).
\end{equation}
Like \eqref{eq:inequality_of_second_decreasing_horizontal}, we have that for $\pi_2$ large enough, for all $t > \pi_2$ and $x \in J$, there exists $c'_1 > 0$ such that (\cf\ \eqref{eq:inequality_of_second_decreasing_horizontal_real} and \eqref{eq:inequality_of_second_decreasing_horizontal})
\begin{equation} \label{eq:inequality_of_decreasing_horizontal_u_around_c(a)}
\frac{d \tildeP(c(a)+\bfepsilon,w(t))}{dt} < -c'_1.
\end{equation}
For all $0 < t \leq \pi_1$ and $x \in J$, we have like \eqref{eq:inequality_of_second_decreasing_vertical} and \eqref{eq:est_of_dervative_on_Sigma_2} that
\begin{equation} \label{eq:inequality_of_decreasing_vertical_u_around_c(a)}
\frac{d \tildeP(c(a)+\bfepsilon,w(t))}{dt}< 0.
\end{equation}

Let $\tilde{L}_3$ be a large enough positive number such that the inequality \eqref{eq:inequality_for_tildeL_1_not_so_small_u} holds. By \eqref{eq:est_like_estimate_of_absolute_value_of_integrand}, \eqref{eq:inequality_of_decreasing_horizontal_u_around_c(a)} and \eqref{eq:inequality_of_decreasing_vertical_u_around_c(a)}, we find like \eqref{eq:integral_over_Pi_middle}
\begin{multline} \label{eq:estimate_of_integral_not_far_middle_u}
  \frac{1}{2\pi} \int^{\tilde{L}_3+c(a)+2\bfepsilon+\pi_2}_0 \hat{Z}_{n-1,\beta}(c(a)+\bfepsilon, w(t)) \frac{e^{\frac{\beta}{2}an \Re(w(t))}}{\lvert w(t)-u \rvert^{\beta/2}} dt = \\
  e^{\frac{\beta}{2}n \left( \int p(x; c(a)+\bfepsilon, c(a)+2\bfepsilon) d\mu(x) + a(c(a)+2\bfepsilon) \right)} O(1).
\end{multline}
Like \eqref{eq:estimate_of_integral_far_away_u>c(a)_pre} and \eqref{eq:estimate_of_integral_far_away_b_2<u<c(a)}, we also have
\begin{multline} \label{eq:estimate_of_integral_far_middle_u}
  \left\lvert \frac{1}{2\pi i} \int_{(\Pi^{c(a)+2\bfepsilon}_{\pi_2} \setminus (\Pi^{c(a)+2\bfepsilon}_{\pi_2}(-\tilde{L}_3)) \cap \compC_+} Z_{n-1,\beta}(u,w) \frac{e^{\frac{\beta}{2}anw}}{(w-u)^{\beta/2}} dw \right\rvert = \\
  e^{\frac{\beta}{2}n \left( \int -V(x) + 2\log(u-x) d\mu(x) - \log(\pi_2) - a\tilde{L}_3 \right)} O(n^{-1}),
\end{multline}
where the $O(n^{-1})$ factor is uniform in $\tilde{L}_3$. For large enough $\tilde{L}_3$,
\begin{equation} \label{eq:inequality_for_tildeL_1_not_so_small_u}
  \int -V(x) + 2\log(u-x) d\mu(x) - \log \pi_2 - a\tilde{L}_3 < \int p(x;c(a)+\bfepsilon, c(a)+2\bfepsilon) d\mu(x) + a(c(a)+2\bfepsilon).
\end{equation}
Substituting \eqref{eq:inequality_for_tildeL_1_not_so_small_u} into \eqref{eq:estimate_of_integral_far_middle_u}, we have 
\begin{multline} \label{eq:estimate_of_integral_far_away_u_around_c(a)_final}
  \frac{1}{2\pi i} \int_{(\Pi^{c(a)+2\bfepsilon}_{\pi_2} \setminus (\Pi^{c(a)+2\bfepsilon}_{\pi_2}(-\tilde{L}_3)) \cap \compC_+} Z_{n-1,\beta}(u,w)  \frac{e^{\frac{\beta}{2}anw}}{(w-u)^{\beta/2}} dw = \\
   e^{\frac{\beta}{2}n \left( \int p(x; c(a)+\bfepsilon, c(a)+2\bfepsilon) d\mu(x) + a(c(a)+2\bfepsilon) \right)} o(1).
\end{multline}

The results of \eqref{eq:estimate_of_integral_not_far_middle_u} and \eqref{eq:estimate_of_integral_far_away_u_around_c(a)_final} give an estimate of the contour integral \eqref{eq:pdf_of_largest_eigen_supercritical} on $\Pi^{c(a)+2\bfepsilon}_{\pi_2} \cap \compC_+$. The integral on $\Pi^{c(a)+2\bfepsilon}_{\pi_2} \cap \compC_-$ is similar, since both the contour and the integrand are symmetric about the real axis. Thus we obtain
\begin{equation} \label{eq:f_ximax_left_to_c(a)}
  \begin{split}
    f_{\ximax}(u) = & \hat{C}_{n,\beta} e^{-\frac{\beta}{2}V(u)} e^{\frac{\beta}{2}n \left( \int p(x;c(a)+\bfepsilon,c(a)+2\bfepsilon) d\mu(x) +a(c(a)+2\bfepsilon) \right)} O(1) \\
    = & \bfC_{n,\beta} n^{1-\beta/2} e^{\frac{\beta}{2}n \Hfn(c(a)+2\bfepsilon;a)} e^{\frac{\beta}{2}n ((-V(c(a)+\bfepsilon) + 2\gfn(c(a)+\bfepsilon) - \ell) + (V(c(a)+\bfepsilon)-V(u)))} O(1).
  \end{split}
\end{equation}

For $\redge-\bfepsilon \leq u \leq \redge+\bfepsilon$ we use the contour integral formula \eqref{eq:pdf_of_largest_eigen_supercritical} of $f_{\ximax}(u)$ and take the contour $\Cont$ in \eqref{eq:pdf_of_largest_eigen_supercritical} as $\Pi^{c(a)+2\bfepsilon}_{\pi_3}$, the same as in the $c(a)-\bfepsilon \leq u \leq c(a)+\bfepsilon$ case. We also apply the inequality \eqref{eq:inequality_involving_u_1_u_2}, and like \eqref{eq:basic_inequality_for_u_around_c(a)_and_b_2} have for all $u \in [\redge-\bfepsilon, \redge+\bfepsilon]$ 
\begin{multline} \label{eq:basic_inequality_for_u_around_c(a)_and_b_2}
  \left\lvert \frac{1}{2\pi i} \int_{\Pi^{c(a))+2\bfepsilon}_{\pi_3}(-\tilde{L}_3) \cap \compC_+} Z_{n-1,\beta}(u,w) \frac{e^{\frac{\beta}{2}anw}}{(w-u)^{\beta/2}} dw \right\rvert \leq \\
  \frac{1}{2\pi} \int^{c(a)+2\bfepsilon +\tilde{L}_3 + \pi_2}_{0} \hat{Z}_{n-1,\beta}(\redge+\bfepsilon, w(t)) \frac{e^{\frac{\beta}{2}an \Re(w(t))}}{\lvert w(t)-u \rvert^{\beta/2}} dt.
\end{multline}
Then we can find estimates similar to \eqref{eq:estimate_of_integral_not_far_middle_u} and \eqref{eq:estimate_of_integral_far_away_u_around_c(a)_final}, and obtain the estimate of $f_{\ximax}(u)$ similar to \eqref{eq:f_ximax_left_to_c(a)}. We only state the result that for $u \in [\redge-\bfepsilon, \redge+\bfepsilon]$
\begin{equation} \label{eq:est_of_f_ximax_around_b_2}
  f_{\ximax}(u) = \bfC_{n,\beta} n^{1-\beta/2} e^{\frac{\beta}{2}n \Hfn(c(a)+2\bfepsilon;a)} e^{\frac{\beta}{2}n ((-V(\redge+\bfepsilon) + 2\gfn(\redge+\bfepsilon) - \ell) + (V(\redge+\bfepsilon)-V(u)))} O(1),
\end{equation}
and skip details.

\subsubsection{$u \leq \redge-\bfepsilon$ or $u \geq \redge+\bfepsilon^{-1}$}

In this subsubsection we first consider $f_{\ximax}(u)$ for $u \leq \redge-\bfepsilon$. We use the contour integral formula \eqref{eq:pdf_of_largest_eigen_supercritical} of $f_{\ximax}(u)$ and take the contour $\Cont$ as $\Pi^{\redge+\bfepsilon}_{\pi_4}$. Here $\pi_4$ is a large enough parameter such that the inequality \eqref{eq;condition_of_pi_1_u_tobe_far_left} holds.

We let $C_{\bfepsilon}$ be any positive number, and define
\begin{equation}
C_V = \max_{x < \redge} (-V(x) + \log(\redge-x)).
\end{equation}
Let $f_{\bfepsilon}(x)$ be a function on $\realR$ such that
\begin{enumerate}[label=(\arabic*)]
\item $f_{\bfepsilon}(x)$ satisfies conditions \ref{enu:Johansson_condition_1}--\ref{enu:Johansson_condition_3} mentioned in Proposition \ref{prop:Johansson}.

\item On $(-\infty,\redge-\bfepsilon]$
  \begin{equation} \label{eq:linear_growth_of_f_epsilon}
    f_{\bfepsilon}(x) = C_V
  \end{equation}
and $f_{\bfepsilon}(x)$ is decreasing on $(\redge-\bfepsilon, \infty)$.

\item 
  \begin{equation} \label{eq:bound_of_intergral_f_epsilon}
    \int f_{\bfepsilon}(x) d\mu(x) < -C_{\bfepsilon}.
  \end{equation} 
\end{enumerate}
For $w$ in the line segment of $\Pi^{\redge+\bfepsilon}_{\pi_4}$ from $\redge+\bfepsilon-i\pi_4$ to $\redge+\bfepsilon+i\pi_4$, for all $x < u$
\begin{equation} \label{eq:p_less_than_C_V_vertical_part}
  \begin{split}
    p(x;u,w) = & -V(x) + 2\log(u-x) - \log\lvert w-x \rvert \\
    < & -V(x) + 2\log(\redge-x) - \log(\redge-x) \leq C_V.
  \end{split}
\end{equation}
On the other hand, we assume that $\pi_4$ is large enough such that for all $x < \redge$,
\begin{equation} \label{eq;condition_of_pi_1_u_tobe_far_left}
-V(x) + 2\log(\redge-x) - \log \pi_4 < C_V.
\end{equation}
By \eqref{eq:linear_growth_of_f_epsilon} and \eqref{eq;condition_of_pi_1_u_tobe_far_left}, it is straightforward to check that for all $w$ in the two rays of $\Pi^{\redge+\bfepsilon}_{\pi_1}$, from $\redge+\bfepsilon+i\pi_4$ to $-\infty$ and from $-\infty$ to $\redge+\bfepsilon-i\pi_4$ respectively,  and for all $x < u$
\begin{equation} \label{eq:p_less_than_C_V_horizontal_part}
  p(x;u,w) \leq C_V.
\end{equation}
Then by \eqref{eq:p_less_than_C_V_vertical_part}, \eqref{eq:p_less_than_C_V_horizontal_part}, \eqref{eq:definition_of_Z_n-1_beta} and \eqref{eq:alternative_defn_of_Z_hat_(uwc)}, we have for all $w \in \Pi^{\redge+\bfepsilon}_{\pi_1}$
\begin{equation} \label{eq:estimate_of_Z_uw_for_u_far_left}
  \lvert Z_{n-1,\beta}(u,w) \rvert \leq \hat{Z}_{n-1,\beta}(u,w) \leq \E_{n-1,\beta} \left( e^{\frac{\beta}{2} \sum^{n-1}_{j=1} f_{\bfepsilon}(x_j)} \right).
\end{equation}
By Proposition \ref{prop:Johansson} and \eqref{eq:bound_of_intergral_f_epsilon},
\begin{equation} \label{eq:estimate_of_Z_uw_for_u_far_left_continued}
  \E_{n-1,\beta} \left( e^{\frac{\beta}{2} \sum^{n-1}_{j=1} f_{\bfepsilon}(x_j)} \right) = e^{-\frac{\beta}{2}n C_{\bfepsilon}} O(1).
\end{equation}
Using the estimate \eqref{eq:estimate_of_Z_uw_for_u_far_left} and \eqref{eq:estimate_of_Z_uw_for_u_far_left_continued} of $Z_{n-1,\beta}(u,w)$, we find by direct calculation
\begin{equation} \label{eq:rough_estimate_first}
  \left\lvert \frac{1}{2\pi i} \int_{\Pi^{\redge+\bfepsilon}_{\pi_4}} Z_{n-1,\beta}(u,w) \frac{e^{\frac{\beta}{2}naw}}{(w-u)^{\beta/2}} dw \right\rvert = e^{\frac{\beta}{2}n (a(\redge+\bfepsilon) - C_{\bfepsilon})} O(1).
\end{equation}
Thus by \eqref{eq:pdf_of_largest_eigen_supercritical}
\begin{equation} \label{eq:estimate_of_prob_ximax_far_left}
  \begin{split}
    f_{\ximax}(u) = & \hat{C}_{n,\beta} e^{-\frac{\beta}{2}nV(u)} e^{\frac{\beta}{2}n (a(\redge+\bfepsilon) - C_{\bfepsilon})} O(1) \\
    = & \bfC_{n,\beta} n^{1-\beta/2} e^{\frac{\beta}{2} n\left( - V(u) + a(\redge+\bfepsilon) -C_{\bfepsilon} + \int V(x) d\mu(x) \right)} O(1).
  \end{split}
\end{equation}
Note that $C_{\bfepsilon}$ can be any number, and the last $O(1)$ factor in \eqref{eq:estimate_of_prob_ximax_far_left} is bounded uniformly for all $u < \redge-\bfepsilon$.

\bigskip

Next we consider $f_{\ximax}(u)$ for $u \geq \redge+\bfepsilon^{-1}$. We use the contour integral formula \eqref{eq:pdf_of_largest_eigen_supercritical} of $f_{\ximax}(u)$ and take the contour $\Cont$ as $\Pi^{u+1}_1$, which is defined in \eqref{eq:parametrization_of_Pi}.

Let 
\begin{equation}
  V_{\min} = \min_{x \in \realR} V(x),
\end{equation}
and denote
\begin{equation}
  \tilde{C}_V := \int ( -V_{\min }- 2x) d\mu(x).
\end{equation}
For all $w \in \Pi^{u+1}_1$ and $x < u$,
\begin{equation} \label{eq:Z_hat_over_Z_in_average_expression}
  \begin{split}
    p(x;u,w) = & -V(x) + 2\log(u-x) - \log\lvert w-x \rvert \\
    \leq & -V(x) + 2\log(u-x) \\
    < & -V(x) + 2(u-x) \\
    \leq & -V_{\min} -2x + 2u.
  \end{split}
\end{equation} 
Thus similar to \eqref{eq:estimate_of_Z_uw_for_u_far_left}, we have by \eqref{eq:Z_hat_over_Z_in_average_expression} that for $w \in \Pi^{u+1}_1$
\begin{equation} \label{eq:estimate_of_Z_uw_u_far_right}
  \lvert Z_{n-1,\beta}(u,w) \rvert \leq \hat{Z}_{n-1,\beta}(u,w) \leq \E_{n-1,\beta} \left( e^{\sum^{n-1}_{j=1} (-V_{\min} - 2x_j +2u)} \right).
\end{equation}
By Proposition \ref{prop:Johansson},
\begin{equation} \label{eq:estimate_of_Z_uw_u_far_right_continued}
  \E_{n-1,\beta} \left( e^{\sum^{n-1}_{j=1} (-V_{\min} - 2x_j +2u)} \right) = e^{\frac{\beta}{2}n \int (-V_{\min} - 2x +2u) d\mu(x)} O(1) = e^{\frac{\beta}{2}n(\tilde{C}_V + 2u)} O(1).
\end{equation}
Using the estimates \eqref{eq:estimate_of_Z_uw_u_far_right} and \eqref{eq:estimate_of_Z_uw_u_far_right_continued} of $Z_{n-1,\beta}(u,w)$, we have like \eqref{eq:rough_estimate_first}
\begin{equation}
  \left\lvert \frac{1}{2\pi i} \int_{\Pi^{u+1}_1} Z_{n-1,\beta}(u,w) \frac{e^{\frac{\beta}{2}naw}}{(w-u)^{\beta/2}} dw \right\rvert = e^{\frac{\beta}{2}n ((2+a)u + a + \tilde{C}_V)} O(1).
\end{equation}
Thus by \eqref{eq:pdf_of_largest_eigen_supercritical}
\begin{equation} \label{eq:estimate_of_prob_ximax_far_right}
  \begin{split}
    f_{\ximax}(u) = & \hat{C}_{n,\beta} e^{-\frac{\beta}{2}V(u)} e^{\frac{\beta}{2}n ((2+a)u + a + \tilde{C}_V)} \\
    = & \bfC_{n,\beta} n^{1-\beta/2} e^{\frac{\beta}{2} n\left( -V(u) + (2+a)u + a + \tilde{C}_V + \int V(x) d\mu(x) \right)} O(1).
  \end{split}
\end{equation}
Note that the last $O(1)$ factor in \eqref{eq:estimate_of_prob_ximax_far_right} is bounded uniformly for all $u > \redge+\bfepsilon^{-1}$.

\section{Proofs of Theorems \ref{thm:limiting_position}, \ref{thm:limiting_distr_nondegenerate}, \ref{thm:two_normal_max_points} and \ref{thm:one_abnormal_max_points}} \label{sec:proof_of_theorems}

In this section we prove the main theorems in this paper. We divide the proofs into three subsections. In Subsection \ref{subsec:0<a<acc}, we consider the case that $0 < a < \acc$ and the case that $a = \acc = \frac{1}{2}V'(\redge)$ and $\acc \notin \mathcal{J}_V$, and prove Theorem \ref{thm:limiting_position}\ref{enu:thm:limiting_position:1}. In Subsection \ref{subsec:a>acc_and_a_notin_J_V}, we consider the case that $a > \acc$ and $a \not\in \mathcal{J}_V$, and prove Theorems \ref{thm:limiting_distr_nondegenerate} and \ref{thm:limiting_position}\ref{enu:thm:limiting_position:2}. In Subsection \ref{subsec:a>acc_and_a_in_J_V}, we consider the case that $a > \acc$ and $a \in \mathcal{J}_V$ and prove Theorems \ref{thm:two_normal_max_points}, \ref{thm:limiting_position}\ref{enu:thm:limiting_position:3} and \ref{thm:one_abnormal_max_points}. 

\subsection{Proof of Theorem \ref{thm:limiting_position}\ref{enu:thm:limiting_position:1} when $0 < a < \acc$, or $a = \acc = \frac{1}{2}V'(\redge)$ and $\acc \notin \mathcal{J}_V$} \label{subsec:0<a<acc}

First we consider the case that $0 < a < \acc$. Let $\epsilon$ be a small positive number, such that $\redge + \epsilon
< c(a)$ and $\redge+\epsilon^{-1} > c(a)$. Furthermore we assume that $\epsilon$ is small enough such that the inequalities \eqref{eq:inequality_of_epsilon_sub_condition_1}, \eqref{eq:inequality_of_epsilon_sub_condition_2} and \eqref{eq:added_last} hold. 

The condition $0 < a < \acc$ implies the inequality $\Hfn(c(a);a) > \Gfn_{\max}(a)$, see \eqref{eq:defn_of_A_v}--\eqref{eq:definition_of_x_1_to_x_r} and \cite[Lemma 1.2(d)]{Baik-Wang10a}. We assume that 
\begin{equation} \label{eq:inequality_of_epsilon_sub_condition_1}
  \Hfn(c(a);a) - \Gfn_{\max}(a) > \epsilon,
\end{equation}
where $\Gfn_{\max}(a)$ is defined in \eqref{defn_of_G_max}. Then by \eqref{eq:f_ximax_right_to_c(a)} with $\bfepsilon = \epsilon$, for all $u \in [c(a)+\epsilon, \redge+\epsilon^{-1}]$ we have 
\begin{equation} \label{eq:pdf_of_u_sub_outer_near}
f_{\ximax}(u) = \bfC_{n,\beta} e^{\frac{\beta}{2}n \Gfn_{\max}(a)} O(1) = \bfC_{n,\beta} e^{\frac{\beta}{2} n\Hfn(c(a);a)} O(e^{-\frac{\beta}{2}\epsilon n}).
\end{equation}
We assume that for $u, v \in [c(a)-\epsilon, c(a)+\epsilon]$
\begin{equation} \label{eq:inequality_of_epsilon_sub_condition_2}
  \Hfn(v+\epsilon;a) + (-V(u)+2\gfn(v)-\ell) < \Hfn(c(a);a) - \epsilon.
\end{equation}
Then by \eqref{eq:f_ximax_left_to_c(a)} with $\bfepsilon = \epsilon$, for $u \in [c(a)-\epsilon, c(a)+\epsilon]$
\begin{equation} \label{eq:pdf_of_u_sub_around_c(a)}
f_{\ximax}(u) = \bfC_{n,\beta} e^{\frac{\beta}{2} n\Hfn(c(a);a)} O(e^{-\frac{\beta}{2}\epsilon n}).
\end{equation}
We assume that for $u \geq \redge + \epsilon^{-1}$, 
\begin{equation} \label{eq:added_last}
-V(u) + (2+a)u + a + \tilde{C}_V + \int V(x) d\mu(x) < \Hfn(c(a);a) + (\redge + \epsilon^{-1} -u) - 1.
\end{equation}
Then by \eqref{eq:estimate_of_prob_ximax_far_right} with $\bfepsilon = \epsilon$, for $u \geq \redge + \epsilon^{-1}$ we have uniformly in $u$
\begin{equation} \label{eq:pdf_of_u_sub_outer_far}
f_{\ximax}(u) = \bfC_{n,\beta} e^{\frac{\beta}{2} n\Hfn(c(a);a)} O(e^{\frac{\beta}{2}n(\redge + \epsilon^{-1}-u-1)}).
\end{equation}
Let $\tilde{C}$ be large enough such that
\begin{equation}
a(\redge+\epsilon) -\tilde{C} - V(u) + \int V(x) d\mu(x) < \Hfn(c(a);a) + \redge -\epsilon -u -1.
\end{equation}
Then by \eqref{eq:estimate_of_prob_ximax_far_left} with $\bfepsilon = \epsilon$ and $C_{\bfepsilon} = \tilde{C}$, for $u \leq \redge -\epsilon$ we have uniformly in $u$ 
\begin{equation} \label{eq:pdf_of_u_sub_left}
f_{\ximax}(u) = \bfC_{n,\beta} e^{\frac{\beta}{2} n\Hfn(c(a);a)} O(e^{\frac{\beta}{2}n(\redge -\epsilon - u - 1)}).
\end{equation}
By \eqref{eq:pdf_of_u_sub_outer_near}, \eqref{eq:pdf_of_u_sub_around_c(a)}, \eqref{eq:pdf_of_u_sub_outer_far}, \eqref{eq:pdf_of_u_sub_left}, we find that
\begin{equation} \label{eq:sub_Prob_other}
\Prob(\ximax \geq c(a)-\epsilon \quad \textnormal{or} \quad \ximax \leq \redge - \epsilon) = \bfC_{n,\beta} e^{\frac{\beta}{2}n \Hfn(c(a);a)} O(e^{-\frac{\beta}{2}\epsilon n}).
\end{equation}

Let $\epsilon' < \epsilon/2$ be a small positive number such that
\begin{itemize}
\item $-V(u)+2\gfn(u)-\ell$ is decreasing on $[\redge, \redge+2\epsilon']$.
\item $-V(u)+2\gfn(u)-\ell$ attains its maximum on $[\redge+2\epsilon', c(a)]$ at $\redge+2\epsilon'$.
\item Let $\epsilon'' := -(-V(\redge+2\epsilon') + 2\gfn(\redge+2\epsilon')-\ell)$. Then $\epsilon'' < \epsilon$.
\end{itemize}
Then by \eqref{eq:asy_of_f_ximax_near} with $\bfepsilon = \epsilon'$, we have that 
\begin{equation} \label{eq:sub_Prob_closer}
\Prob(\ximax \in [\redge+2\epsilon', c(a)-\epsilon]) = \bfC_{n,\beta} e^{\frac{\beta}{2}n \Hfn(c(a);a)} O(e^{\frac{\beta}{2}\epsilon'' n}),
\end{equation}
and as $n \to \infty$
\begin{equation} \label{eq:sub_Prob_essential}
\frac{\Prob(\ximax \in [\redge+\epsilon', \redge+2\epsilon'])}{\bfC_{n,\beta} e^{\frac{\beta}{2}n \Hfn(c(a);a)} e^{\frac{\beta}{2}\epsilon'' n}} \to \infty.
\end{equation}

The probabilities \eqref{eq:sub_Prob_other}, \eqref{eq:sub_Prob_closer} and \eqref{eq:sub_Prob_essential} imply that the conditional probability
\begin{equation} \label{eq:sub_conditional_probability_convergence}
  \Prob(\ximax \in [\redge+\epsilon', \redge+2\epsilon'] \mid \ximax \in \realR \setminus (\redge-\epsilon, \redge+\epsilon')) \to 1.
\end{equation}
Since $[\redge+\epsilon', \redge+\epsilon'] \cup (\redge-\epsilon, \redge+2\epsilon') \in [\redge-\epsilon, \redge+\epsilon]$, \eqref{eq:sub_conditional_probability_convergence} implies that
\begin{equation} \label{eq:final_for_proof_of_thm_1.1a_first}
\Prob(\ximax \in [\redge-\epsilon, \redge+\epsilon]) \to 1.
\end{equation}
Taking $\epsilon$ arbitrarily small, we prove Theorem \ref{thm:limiting_position}\ref{enu:thm:limiting_position:1} when $0 < a < \acc$.

\bigskip

The case when $a = \acc = \frac{1}{2}V'(\redge)$ and $\acc \notin \mathcal{J}_V$ is similar. Let $\epsilon$ be a small enough positive number. Since $a = \frac{1}{2}V'(\redge)$, we have $c(a) = \redge$. Since $a = \acc \notin \mathcal{J}_V$, there exists $\epsilon' > 0$ depending on $\epsilon$ such that for all $u > \redge + \epsilon$, $\Gfn(u;a) < \Hfn(\redge;a) - \epsilon'$. Thus like \eqref{eq:pdf_of_u_sub_outer_near}, for $u \in [c(a)+\epsilon, \redge+\epsilon^{-1}]$ we have
\begin{equation}
  f_{\ximax}(u) = \bfC_{n,\beta} e^{\frac{\beta}{2} n\Hfn(\redge;a)} O(e^{-\frac{\beta}{2}\epsilon' n}).
\end{equation}
When $\epsilon$ is small enough, \eqref{eq:pdf_of_u_sub_outer_far} and \eqref{eq:pdf_of_u_sub_left} also hold. Then by arguments similar to \eqref{eq:sub_Prob_other}--\eqref{eq:final_for_proof_of_thm_1.1a_first}, we prove Theorem \ref{thm:limiting_position}\ref{enu:thm:limiting_position:1} when $a = \acc = \frac{1}{2}V'(\redge)$ and $\acc \notin \mathcal{J}_V$.

\subsection{Proof of Theorems \ref{thm:limiting_distr_nondegenerate} and \ref{thm:limiting_position}\ref{enu:thm:limiting_position:2} when $a > \acc$ and $a \not\in \mathcal{J}_V$} \label{subsec:a>acc_and_a_notin_J_V}

Let $\epsilon$ be a small enough positive constant such that the maximizer $x_0(a)$ of $\Gfn(x;a)$ in $[c(a), \infty)$ is less than $\redge + \epsilon^{-1}$, and the inequalities \eqref{eq:epsilon_less_than_G-H}, \eqref{eq:epsilon_and_G,H,V} and \eqref{eq:not_last_probably} are satisfied.

First we consider the case that $\acc < a < \frac{1}{2}V'(\redge)$, \ie, $c(a) > \redge$. We assume that
\begin{equation} \label{eq:epsilon_less_than_G-H}
  \Gfn_{\max}(a) - \Hfn(c(a);a) > \epsilon. 
\end{equation}
Then by \eqref{eq:asy_of_f_ximax_near} with $\bfepsilon = \epsilon$, for $u \in (\redge+\epsilon, c(a)-\epsilon)$ we have
\begin{equation} \label{eq:pdf_of_u_super_inner}
  f_{\ximax}(u) = \bfC_{n,\beta} n^{(1-\beta)/2} e^{\frac{\beta}{2}n \Hfn(c(a);a)} O(1) = \bfC_{n,\beta} e^{\frac{\beta}{2}n \Gfn_{\max}(a)} O(e^{-\frac{\beta}{2}\epsilon n}).
\end{equation}
We assume that for all $u \in [\redge-\epsilon, \redge+\epsilon]$
\begin{equation}
  \label{eq:4}
  \Hfn(c(a)+2\epsilon;a) + (-V(\redge+\epsilon) + 2\gfn(\redge+\epsilon) - \ell) + (V(\redge+\epsilon)-V(u)) < \Gfn_{\max}(a) - \epsilon.
\end{equation}
Then by \eqref{eq:est_of_f_ximax_around_b_2} with $\bfepsilon = \epsilon$, for $u \in [\redge-\epsilon, \redge+\epsilon]$ we have
\begin{equation} \label{eq:pdf_of_u_super_around_redge}
  f_{\ximax}(u) = \bfC_{n,\beta} e^{\frac{\beta}{2}n \Gfn_{\max}(a)} O(e^{-\frac{\beta}{2}\epsilon n}).
\end{equation}
We assume that for all $u \in [c(a)-\epsilon, c(a)+\epsilon]$
\begin{equation} \label{eq:epsilon_and_G,H,V}
  \Hfn(c(a)+2\epsilon;a) + (-V(c(a)+\epsilon) + 2\gfn(c(a)+\epsilon) - \ell) + (V(c(a)+\epsilon)-V(u)) < \Gfn_{\max} - \epsilon.
\end{equation}
Then by \eqref{eq:f_ximax_left_to_c(a)} with $\bfepsilon = \epsilon$, for $u \in [c(a)-\epsilon, c(a)+\epsilon]$ we have
\begin{equation} \label{eq:pdf_of_u_super_around_c(a)}
  f_{\ximax}(u) = \bfC_{n,\beta} e^{\frac{\beta}{2}n \Gfn_{\max}(a)} O(e^{-\frac{\beta}{2}\epsilon n}).
\end{equation}
We assume that for  $u \geq \redge + \epsilon^{-1}$, 
\begin{equation} \label{eq:not_last_probably}
-V(u) + (2+a)u + a + \tilde{C}_V + \int V(x) d\mu(x) < \Gfn_{\max}(a) + (\redge + \epsilon^{-1} -u) - 1.
\end{equation}
Then by \eqref{eq:estimate_of_prob_ximax_far_right} with $\bfepsilon = \epsilon$, for $u \geq \redge + \epsilon^{-1}$ we have uniformly in $u$
\begin{equation} \label{eq:pdf_of_u_super_outer_far}
f_{\ximax}(u) = \bfC_{n,\beta} e^{\frac{\beta}{2} n\Gfn_{\max}(n)} O(e^{\frac{\beta}{2}n(\redge + \epsilon^{-1}-u-1)}).
\end{equation}
Let $\tilde{C}$ be large enough such that
\begin{equation}
a(\redge+\epsilon) -\tilde{C} - V(u) + \int V(x) d\mu(x) < \Gfn_{\max}(a) + \redge -\epsilon -u -1.
\end{equation}
Then by \eqref{eq:estimate_of_prob_ximax_far_left} with $\bfepsilon = \epsilon$ and $C_{\bfepsilon} = \tilde{C}$, for $u \leq \redge -\epsilon$ we have uniformly in $u$ 
\begin{equation} \label{eq:pdf_of_u_super_left}
f_{\ximax}(u) = \bfC_{n,\beta} e^{\frac{\beta}{2} n\Gfn_{\max}(a)} O(e^{\frac{\beta}{2}n(\redge -\epsilon - u - 1)}).
\end{equation}
By \eqref{eq:pdf_of_u_super_inner}, \eqref{eq:pdf_of_u_super_around_redge}, \eqref{eq:pdf_of_u_super_around_c(a)}, \eqref{eq:pdf_of_u_super_outer_far} and \eqref{eq:pdf_of_u_super_left}, we find that

\begin{equation} \label{eq:super_Prob_other}
\Prob(\ximax \geq \redge+\epsilon^{-1} \textnormal{ or } \ximax \leq c(a)+\epsilon) = \bfC_{n,\beta} e^{\frac{\beta}{2}n \Gfn_{\max}(a)} O(e^{-\frac{\beta}{2}\epsilon n}).
\end{equation}
In the case that $a \geq \frac{1}{2}V'(\redge)$, \ie, $c(a)=\redge$, we find that inequalities \eqref{eq:pdf_of_u_super_around_redge}, \eqref{eq:pdf_of_u_super_outer_far} and \eqref{eq:pdf_of_u_super_left} still hold, and the estimate \eqref{eq:super_Prob_other} holds with $c(a) = \redge$.

For $u \in [c(a)+\epsilon, \redge+\epsilon^{-1}]$, we have the asymptotic formula \eqref{eq:f_ximax_right_to_c(a)} and $x_0(a)$ is the unique maximum of $\Gfn(x;a)$ in $[c(a)+\epsilon, \redge+\epsilon^{-1}]$. If we further assume that $\Gfn''(x_0(a);a) \neq 0$, by the standard Laplace's method we have that
\begin{equation} \label{eq:super_Prob_essential}
  \Prob(\ximax \in [c(a)+\epsilon, \redge+\epsilon^{-1}]) = \sqrt{\frac{2\pi}{-\frac{\beta}{2}\Gfn''(x_0(a);a)n}} \M_{\beta}(x_0(a)) \bfC_{n,\beta}e^{\frac{\beta}{2}n\Gfn_{\max}(a)} (1+o(1)),
\end{equation}
and for any $T \in \realR$
\begin{multline} \label{eq:super_Prob_essential_part}
  \Prob \left( \ximax \in \left[ c(a)+\epsilon, x_0(a)+ \frac{T}{\sqrt{-\frac{\beta}{2}\Gfn''(x_0(a);a)n}} \right] \right) = \\
  \sqrt{\frac{2\pi}{-\frac{\beta}{2}\Gfn''(x_0(a);a)n}} \M_{\beta}(x_0(a)) \bfC_{n,\beta}e^{\frac{\beta}{2}n\Gfn_{\max}(a)} (\erf(T)+o(1)).
\end{multline}
The probabilities \eqref{eq:super_Prob_other}, \eqref{eq:super_Prob_essential} and \eqref{eq:super_Prob_essential_part} imply Theorem \ref{thm:limiting_distr_nondegenerate}\ref{enu:thm:limiting_distr_nondegenerate:1}.

If the second derivative of $\Gfn(x;a)$ vanishes at $x_0(a)$, due to the analyticity of $\Gfn(x;a)$, there exists $k > 1$ such that $\Gfn^{(j)}(x_0(a);a) = 0$ for $j=1, \dots, 2k-1$ and $\Gfn^{(2k)}(x_0(a);a) \neq 0$. By the Laplace's method we have 
\begin{multline} \label{eq:super_Prob_essential_exceptional}
  \Prob(\ximax \in [c(a)+\epsilon, \redge+\epsilon^{-1}]) = \left( \frac{(2k)!}{-\frac{\beta}{2}\Gfn^{(2k)}(x_0(a);a)n} \right)^{1/(2k)} \\
  \times \int^{\infty}_{-\infty} e^{-x^{2k}} dx \M_{\beta}(x_0(a)) \bfC_{n,\beta}e^{\frac{\beta}{2}n\Gfn_{\max}(a)} (1+o(1)),
\end{multline}
and for any $T \in \realR$
\begin{multline} \label{eq:super_Prob_part_essential_exceptional}
  \Prob \left( \ximax \in \left[ c(a)+\epsilon, x_0 + \left( \frac{-\frac{\beta}{2}\Gfn^{(2k)}(x_0(a);a)n}{(2k)!} \right)^{-1/(2k)} T \right] \right) \\
  = \left( \frac{(2k)!}{-\frac{\beta}{2}\Gfn^{(2k)}(x_0(a);a)n} \right)^{1/(2k)} \int^{T}_{-\infty} e^{-x^{2k}} dx \M_{\beta}(x_0(a)) \bfC_{n,\beta}e^{\frac{\beta}{2}n\Gfn_{\max}(a)} (1+o(1)).
\end{multline}
The probabilities \eqref{eq:super_Prob_other}, \eqref{eq:super_Prob_essential_exceptional} and \eqref{eq:super_Prob_part_essential_exceptional} imply Theorem \ref{thm:limiting_distr_nondegenerate}\ref{enu:thm:limiting_distr_nondegenerate:2}. Finally, Theorem \ref{thm:limiting_distr_nondegenerate} implies Theorem \ref{thm:limiting_position}\ref{enu:thm:limiting_position:2}.

\subsection{Proofs of Theorem \ref{thm:two_normal_max_points}, \ref{thm:limiting_position}\ref{enu:thm:limiting_position:3} and \ref{thm:one_abnormal_max_points} when $a > \acc$ and $a \in \mathcal{J}_V$} \label{subsec:a>acc_and_a_in_J_V}

Let $a_0 > \acc$, and $\epsilon$ be a small positive constant such that the inequalities \eqref{eq:pdf_of_u_super_inner}, \eqref{eq:pdf_of_u_super_around_redge}, \eqref{eq:pdf_of_u_super_around_c(a)}, \eqref{eq:pdf_of_u_super_outer_far} and \eqref{eq:pdf_of_u_super_left} hold with $a = a_0$. It is easy to verify that there exists a positive number $\bar{\epsilon}$ depending on $\epsilon$ such that if we take $a = a_0 + \epsilon'$ with $\epsilon' \in [-\bar{\epsilon}, \bar{\epsilon}]$, the inequalities  \eqref{eq:pdf_of_u_super_inner}, \eqref{eq:pdf_of_u_super_around_redge}, \eqref{eq:pdf_of_u_super_around_c(a)}, \eqref{eq:pdf_of_u_super_outer_far} and \eqref{eq:pdf_of_u_super_left} still hold with the same $\epsilon$. Thus the estimate of probability \eqref{eq:super_Prob_other} still holds with $a = a_0 + \epsilon'$. If we further assume that
\begin{align}
  c(a) > & c(a_0) - \frac{\epsilon}{2}, \\
  \Gfn_{\max}(a) > & \Gfn_{\max}(a_0) - \frac{\epsilon}{2},
\end{align}
by \eqref{eq:super_Prob_other} we obtain that 
\begin{equation} \label{eq:super_Prob_other_two_normal_point}
  \Prob(\ximax \geq \redge+\epsilon^{-1} \textnormal{ or } \ximax \leq c(a_0)+\frac{\epsilon}{2}) = \bfC_{n,\beta} e^{\frac{\beta}{2}n \Gfn_{\max}(a_0)} O(e^{-\frac{\beta}{4}\epsilon n}).
\end{equation}

First we assume that $a_0 \in \mathcal{J}_V$ and $\Gfn(x;a_0)$ has $r \geq 2$ maximizers $x_1(a_0) < x_2(a_0) < \dots < x_r(a_0)$ in $(c(a_0), \infty)$, and all of them are less than $\redge+\epsilon^{-1}$. Further we assume that for all $i = 1, \dots, r$
\begin{equation}
  \Gfn''(x_i(a_0)) \neq 0.
\end{equation}
We take
\begin{equation} \label{eq:double_scaling_of_a}
  a = a_0 + \frac{\alpha}{n},
\end{equation}
where $\alpha$ is in a compact subset of $\realR$. Since $a \in [a_0-\bar{\epsilon}, a_0+\bar{\epsilon}]$ for sufficiently large $n$, the estimate \eqref{eq:super_Prob_other_two_normal_point} is applicable to $a$.

For $x$ around $x_i(a_0)$ ($i=1, \dots, r$), we denote
\begin{equation} \label{eq:relation_between_x_and_xi_two_normal_max}
  x = x_i(a_0) + \frac{\xi_i}{\sqrt{-\frac{\beta}{2} \Gfn''(x_i(a_0);a_0) n}}.
\end{equation}
For $\xi_i$ in a compact subset of $\realR$ and $x$ given by \eqref{eq:relation_between_x_and_xi_two_normal_max}, we have
\begin{equation} \label{eq:Taylor_expansion_two_normal_points}
  \begin{split}
    \frac{\beta}{2}\Gfn(x;a) = & \frac{\beta}{2}\Gfn(x_i(a);a) - \frac{\xi^2_i}{2}\frac{1}{n} + O(n^{-2}) \\
    = & \frac{\beta}{2}\Gfn_{\max}(a_0) + \frac{\beta}{2}\alpha x_i(a_0) \frac{1}{n} - \frac{\xi^2_i}{2}\frac{1}{n} + O(n^{-2}).
  \end{split}
\end{equation}
Let $\epsilon_i$ ($i=1, \dots, r$) be small enough constant numbers such that $x_i(a_0)$ is the unique maximum of $\Gfn(x;a_0)$ in $[x_i(a_0)-\epsilon_i, x_i(a_0)+\epsilon_i]$. Applying the standard Laplace's method to \eqref{eq:f_ximax_right_to_c(a)}, near $x_i(a_0)$ ($i=1,2$), we obtain that
\begin{multline} \label{eq:super_Prob_around_x_i_two_normal_point}
  \Prob(\ximax \in [x_i(a_0)-\epsilon_i, x_i(a_0)+\epsilon_i]) = e^{\frac{\beta}{2}\alpha x_i(a_0) + \frac{\xi^2_i}{2}} \\
  \times \sqrt{\frac{2\pi}{-\frac{\beta}{2} \Gfn''(x_i(a_0);a_0)n}} \M_{\beta}(x_i(a_0)) \bfC_{n,\beta} e^{\frac{\beta}{2}n\Gfn_{\max}(a_0)} (1+o(1)),
\end{multline}
and for any $T$ in a compact subset of $\realR$,
\begin{multline} \label{eq:super_Prob_around_x_i_two_normal_point_partial}
  \Prob \left( \ximax \in \left[ x_i(a_0)-\epsilon_i, x_i(a_0)+\frac{T}{\sqrt{-\frac{\beta}{2}\Gfn''(x_i(a_0);a_0)n}} \right] \right) = \\
  e^{\frac{\beta}{2}\alpha x_i(a_0) + \frac{\xi^2_i}{2}} \sqrt{\frac{2\pi}{-\frac{\beta}{2} \Gfn''(x_i(a_0);a_0)n}} \M_{\beta}(x_i(a_0)) \bfC_{n,\beta} e^{\frac{\beta}{2}n\Gfn_{\max}(a_0)} (\erf(T)+o(1)),
\end{multline}
There exists $\epsilon'' > 0$ depending on $\epsilon_1, \dots, \epsilon_r$ such that for sufficiently large $n$
\begin{equation} \label{eq:super_Prob_around_x_i_two_normal_point_not_near}
  \Gfn(x;a) < \Gfn_{\max}(a_0) - \epsilon'' \textnormal{ for $x \in [c(a_0)+\frac{\epsilon}{2}, \redge+\epsilon^{-1}]$ but $x \not\in (x_i(a_0)-\epsilon_i, x_i(a_0)+\epsilon_i)$.}
\end{equation}
Then we find that the probability
\begin{equation} \label{eq:small_prob_when_x_not_near_x_1_x_2_normal}
  \Prob \left( \ximax \in [c(a_0)+\frac{\epsilon}{2}, \redge+\epsilon^{-1}] \setminus \bigcup^r_{i=1} (x_i(a_0)-\epsilon_i, x_i(a_0)+\epsilon_i) \right) = \bfC_{n,\beta} e^{\frac{\beta}{2}n\Gfn_{\max}(a_0)} O(e^{-\frac{\beta}{2}\epsilon''n}).
\end{equation}

For $i=1, \dots, r$, let 
\begin{equation} \label{eq:defn_of_B_i_two_normal_points}
  B_{i,\beta}(\alpha) := e^{\frac{\beta}{2}\alpha x_i(a_0)} \sqrt{\frac{2\pi}{-\frac{\beta}{2} \Gfn''(x_i(a_0);a_0)}} \M_{\beta}(x_i(a_0))
\end{equation}
and
\begin{equation} \label{eq:defn_of_p_i_two_normal_points}
  p_{i,\beta}(\alpha) := \frac{B_{i,\beta}(\alpha)}{\sum^r_{j=1}B_{j,\beta}(\alpha)}.
\end{equation}
From \eqref{eq:defn_of_p_i_two_normal_points} we immediately find $\sum^r_{j=1} p_{j,\beta}(\alpha) = 1$. By \eqref{eq:defn_of_B_i_two_normal_points} and \eqref{eq:defn_of_p_i_two_normal_points} we find that $\lim_{\alpha \to \infty} p_{r,\beta}(\alpha) = 1$ and  $\lim_{\alpha \to -\infty} p_{1,\beta}(\alpha) = 1$.

The probabilities \eqref{eq:super_Prob_other_two_normal_point}, \eqref{eq:super_Prob_around_x_i_two_normal_point}, \eqref{eq:super_Prob_around_x_i_two_normal_point_partial} and \eqref{eq:small_prob_when_x_not_near_x_1_x_2_normal} show that for any $T \in \realR$ and $i = 1, \dots, r$
\begin{equation}
  \Prob \left( \ximax \leq x_i(a_0)+\frac{T}{\sqrt{-\frac{\beta}{2}\Gfn''(x_i(a_0);a_0)n}} \right) = \left( \sum^{i-1}_{j=1} p_{j,\beta}(\alpha) \right) + p_{i,\beta}(\alpha) (\erf(T)+o(1)).
\end{equation}
Therefore Theorem \ref{thm:two_normal_max_points} is proved. Theorem \ref{thm:limiting_position}\ref{enu:thm:limiting_position:3} is a consequence of Theorem \ref{thm:two_normal_max_points} with $\alpha = 0$.

\bigskip

Next we consider the case that $r = 2$ and for $k > 1$
\begin{gather}
  \Gfn''(x_1(a_0)) \neq 0, \\
  \Gfn^{(2k)}(x_2(a_0)) \neq 0, \quad \Gfn^{(j)}(x_2(a_0)) = 0 \quad j=1, \dots, 2k-1.
\end{gather}
and take (see \eqref{eq:defn_of_q_and_a})
\begin{equation}
  a = a_0 - q_{\beta}\frac{\log n}{n} + \frac{\alpha}{n}, \quad \textnormal{where} \quad q_{\beta} := \frac{2}{\beta}\frac{\frac{1}{2} - \frac{1}{2k}}{x_2(a_0) - x_1(a_0)},
\end{equation}
where $\alpha$ is in a compact subset of $\realR$. For $x$ around $x_1(a)$, we denote
\begin{equation} \label{eq:x_by_xi_around_x_1_one_abnormal}
  x = x_1(a_0) + \frac{\xi_1}{\sqrt{-\frac{\beta}{2}\Gfn''(x_1(a_0);a_0)n}},
\end{equation}
and for $x$ around $x_2(a)$, we denote
\begin{equation} \label{eq:x_by_xi_around_x_2_one_abnormal}
  x = x_2(a_0) + \left( \frac{-\frac{\beta}{2} \Gfn^{(2k)}(x_2(a_0);a_0) n}{(2k)!} \right)^{-1/(2k)} \xi_2.
\end{equation}
For $\xi_1$ in a compact subset of $\realR$ and $x$ given by \eqref{eq:x_by_xi_around_x_1_one_abnormal}, we have like \eqref{eq:Taylor_expansion_two_normal_points}
\begin{equation}
  \begin{split}
    \frac{\beta}{2}\Gfn(x;a) = & \frac{\beta}{2}\Gfn(x_1(a_0);a) - \frac{\xi^2_1}{2} \frac{1}{n} + o(n^{-1}) \\
    = & \frac{\beta}{2}\Gfn_{\max}(a_0) + \frac{\beta}{2}x_1(a_0) \frac{-q_{\beta}\log n + \alpha}{n} - \frac{\xi^2_1}{2} \frac{1}{n} + o(n^{-1}).
  \end{split}
\end{equation}
For $\xi_2$ in a compact subset of $\realR$ and $x$ given by \eqref{eq:x_by_xi_around_x_2_one_abnormal}, we have
\begin{equation}
  \begin{split}
    \frac{\beta}{2}\Gfn(x;a) = & \frac{\beta}{2}\Gfn(x_2(a_0);a) - \xi^{2k}_2 \frac{1}{n} + o(n^{-1}) \\
    = & \frac{\beta}{2}\Gfn_{\max}(a_0) + \frac{\beta}{2}x_2(a_0) \frac{-q_{\beta}\log n + \alpha}{n} - \xi^{2k}_2 \frac{1}{n} + o(n^{-1}).
  \end{split}
\end{equation}
Let $\epsilon_i$ ($i=1,2$) be small enough constant numbers such that $x_i(a_0)$ is the unique maximum of $\Gfn(x;a_0)$ in $[x_i(a_0)-\epsilon_i, x_i(a_0)+\epsilon_i]$. Applying the standard Laplace's method to \eqref{eq:f_ximax_right_to_c(a)}, near $x_1(a_0)$, we obtain similar to \eqref{eq:super_Prob_around_x_i_two_normal_point} and \eqref{eq:super_Prob_around_x_i_two_normal_point_partial} that

\begin{multline} \label{eq:super_Prob_around_x_1_one_bnormal_point}
  \Prob(\ximax \in [x_1(a_0)-\epsilon_1, x_1(a_0)+\epsilon_1]) = \frac{e^{-\frac{\beta}{2}x_1(a_0) q _{\beta}\log n}}{\sqrt{n}} e^{\frac{\beta}{2}x_1(a_0)\alpha} \\
  \times \sqrt{\frac{2\pi}{-\frac{\beta}{2} \Gfn''(x_1(a_0);a_0)}} \M_{\beta}(x_1(a_0)) \bfC_{n,\beta} e^{\frac{\beta}{2}n\Gfn_{\max}(a_0)} (1+o(1)),
\end{multline}
and for any $T$ in a compact subset of $\realR$,
\begin{multline} \label{eq:super_Prob_around_x_1_one_bnormal_point_partial}
  \Prob \left( \ximax \in \left[ x_1(a_0)-\epsilon_1, x_1(a_0)+\frac{T}{\sqrt{-\frac{\beta}{2}\Gfn''(x_i(a_0);a_0)n}} \right] \right) = \\
   \frac{e^{-\frac{\beta}{2}x_1(a_0) q_ {\beta}\log n}}{\sqrt{n}} e^{\frac{\beta}{2}x_1(a_0)\alpha} \sqrt{\frac{2\pi}{-\frac{\beta}{2} \Gfn''(x_2(a_0);a_0)}} \M_{\beta}(x_2(a_0)) \bfC_{n,\beta} e^{\frac{\beta}{2}n\Gfn_{\max}(a_0)} (\erf(T)+o(1)).
\end{multline}
Applying the Laplace's method to \eqref{eq:f_ximax_right_to_c(a)}, near $x_2(a_0)$, we obtain
\begin{multline} \label{eq:super_Prob_around_x_2_one_bnormal_point}
  \Prob(\ximax \in [x_2(a_0)-\epsilon_2, x_2(a_0)+\epsilon_2]) = \frac{e^{-\frac{\beta}{2}x_2(a_0) q_{\beta}\log n}}{n^{1/(2k)}} e^{\frac{\beta}{2}x_2(a_0)\alpha} \\
  \times \left( \frac{(2k)!}{-\frac{\beta}{2} \Gfn^{(2k)}(x_2(a_0);a_0)} \right)^{1/(2k)} \M_{\beta}(x_2(a_0)) \bfC_{n,\beta} e^{\frac{\beta}{2}n\Gfn_{\max}(a_0)} \left( \int^{\infty}_{-\infty} e^{\xi^{2k}_2} d\xi_2 +o(1) \right),
\end{multline}
and for any $T$ in a compact subset of $\realR$,
\begin{multline} \label{eq:super_Prob_around_x_2_one_bnormal_point_partial}
  \Prob \left( \ximax \in \left[ x_2(a_0)-\epsilon_2, x_2(a) + \left( \frac{-\frac{\beta}{2}\Gfn^{(2k)}(x_0(a);a)n}{(2k)!} \right)^{-1/(2k)} T \right] \right) = \\
  \frac{e^{-\frac{\beta}{2}x_2(a_0) q_{\beta}\log n}}{n^{1/(2k)}} e^{\frac{\beta}{2}x_2(a_0)\alpha} \left( \frac{(2k)!}{-\frac{\beta}{2} \Gfn^{(2k)}(x_1(a_0);a_0)} \right)^{1/(2k)} \M_{\beta}(x_2(a_0)) \bfC_{n,\beta} e^{\frac{\beta}{2}n\Gfn_{\max}(a_0)} \\
  \times \left( \int^{T}_{-\infty} e^{\xi^{2k}_2} d\xi_2 +o(1) \right).
\end{multline}
Also there exists $\epsilon'' > 0$ depending on $\epsilon_1$ and $\epsilon_2$ such that the estimate \eqref{eq:small_prob_when_x_not_near_x_1_x_2_normal} holds. The probabilities \eqref{eq:super_Prob_other_two_normal_point}, \eqref{eq:small_prob_when_x_not_near_x_1_x_2_normal}, \eqref{eq:super_Prob_around_x_1_one_bnormal_point}, \eqref{eq:super_Prob_around_x_1_one_bnormal_point_partial}, \eqref{eq:super_Prob_around_x_2_one_bnormal_point}, \eqref{eq:super_Prob_around_x_2_one_bnormal_point_partial} show that the probability that $\ximax$ is in $[x_1-\epsilon_1, x_1+\epsilon_1]$ or $[x_2-\epsilon_2, x_2+\epsilon_2]$ approaches $1$ as $n \to \infty$.

Let
\begin{align}
  D_{1,\beta} := & e^{\frac{\beta}{2}x_1(a_0)\alpha} \sqrt{\frac{2\pi}{-\frac{\beta}{2}\Gfn''(x_1(a_0);a_0)}} \M_{\beta}(x_1(a_0)), \label{eq:formula_of_D_1,beta} \\
  D_{2,\beta} := & e^{\frac{\beta}{2}x_2(a_0)\alpha} \left( \frac{(2k)!}{-\frac{\beta}{2} \Gfn^{(2k)}(x_2(a_0);a_0)} \right)^{1/(2k)} \M_{\beta}(x_2(a_0)) \int^{\infty}_{-\infty} e^{\xi^{2k}_2} d\xi_2, \label{eq:formula_of_D_2,beta}
\end{align}
and for $i = 1,2$
\begin{equation} \label{eq:defn_of_tilde_p}
  \tilde{p}_{i,\beta}(\alpha) := \frac{D_{i,\beta}(\alpha)}{D_{1,\beta}(\alpha) + D_{2,\beta}(\alpha)}.
\end{equation}
From \eqref{eq:defn_of_tilde_p} we immediately find that $\tilde{p}_{1,\beta}(\alpha) + \tilde{p}_{2,\beta}(\alpha) = 1$. By \eqref{eq:formula_of_D_1,beta}, \eqref{eq:formula_of_D_2,beta} and \eqref{eq:defn_of_tilde_p}, we find that $\lim_{\alpha \to \infty} \tilde{p}_{2,\beta}(\alpha) = 1$ and $\lim_{\alpha \to -\infty} \tilde{p}_{1,\beta}(\alpha) = 1$.

Because
\begin{equation}
  \frac{e^{-\frac{\beta}{2}x_1(a_0) q_{\beta} \log n}}{\sqrt{n}} = \frac{e^{-\frac{\beta}{2}x_2(a_0) q_{\beta} \log n}}{n^{1/(2k)}},
\end{equation}
We further find from \eqref{eq:super_Prob_around_x_1_one_bnormal_point}, \eqref{eq:super_Prob_around_x_1_one_bnormal_point_partial}, \eqref{eq:super_Prob_around_x_2_one_bnormal_point}, \eqref{eq:super_Prob_around_x_2_one_bnormal_point_partial} that
for any $T \in \realR$
\begin{equation}
  \Prob \left( \ximax \leq x_1(a_0)+\frac{T}{\sqrt{-\frac{\beta}{2}\Gfn''(x_1(a_0);a_0)n}} \right) = \tilde{p}_{1,\beta}(\alpha) (\erf(T)+o(1)),
\end{equation}
\begin{multline}
  \Prob \left( \ximax \leq x_2(a) + \left( \frac{-\frac{\beta}{2}\Gfn^{(2k)}(x_0(a);a)n}{(2k)!} \right)^{-1/(2k)} T \right) = \\
  \tilde{p}_{1,\beta}(\alpha) + \tilde{p}_{2,\beta}(\alpha) \left( \frac{\int^T_{-\infty} e^{-x^{2k}} dx}{\int^{\infty}_{-\infty} e^{-x^{2k}} dx} + o(1) \right).
\end{multline}
Therefore Theorem \ref{thm:one_abnormal_max_points} is proved.

\section{Proof of Proposition \ref{cor:outlier_micro}} \label{sec:proof_of_corollary_of_Johansson}

The proof of Proposition \ref{cor:outlier_micro} is based on a theorem of Johansson \cite{Johansson98}. For the convenience of readers we state it bellow.
\begin{prop}{\cite[Theorem 2.4]{Johansson98}} \label{prop:Johansson}
  Suppose the function $V$ satisfies Conditions \ref{condition:1}--\ref{condition:3}, and $\mu$ is the equilibrium measure associated to $V$. Let $f$ be a real function that satisfies conditions \ref{enu:Johansson_condition_1}--\ref{enu:Johansson_condition_3} below, with $s = 2$ if $\beta = 2$ and $s = 17/2$ if $\beta \neq 2$. Then there are a quadratic functional $A$ on $f$ and a signed measure $\nu$ on $\supp(\mu) = J = [b_1, b_2]$ which do not depend on $n$, such that as $n \to \infty$
  \begin{equation}
    \log \E_{n-1,\beta}(e^{\sum^{n-1}_{j=1} f(x_j)}) = (n-1) \int f(x) d\mu(x) + \left( \frac{2}{\beta} - 1 \right) \int f(x) d\nu(x) + \frac{2}{\beta}A(f) + o(1).
  \end{equation}
\end{prop}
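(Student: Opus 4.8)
The statement is Johansson's asymptotic expansion of the exponential moment of a linear statistic, so the plan is the partition-function/interpolation argument. Write $\constZ_{m,\beta}[W]$ for the normalization of the $\beta$-log-gas on $m$ points with potential $W$, and set $V_t := V - \frac{2t}{\beta(n-1)}f$ for $t \in [0,1]$. Absorbing $f$ into the potential gives
\begin{equation}
  \E_{n-1,\beta}\Big( e^{\sum_{j=1}^{n-1} f(x_j)} \Big) = \frac{\constZ_{n-1,\beta}[V_1]}{\constZ_{n-1,\beta}[V_0]},
\end{equation}
and since $\frac{d}{dt}\big(-\tfrac{\beta}{2}(n-1)V_t\big) = f$, differentiating in $t$ yields the Duhamel identity
\begin{equation} \label{eq:duhamel_plan}
  \log \E_{n-1,\beta}\Big( e^{\sum_j f(x_j)} \Big) = \int_0^1 \E^{V_t}_{n-1,\beta}\Big( \sum_{j=1}^{n-1} f(x_j) \Big)\, dt,
\end{equation}
where $\E^{W}_{n-1,\beta}$ is expectation in the $\beta$-log-gas with potential $W$. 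This reduces everything to the asymptotics of the \emph{mean of a linear statistic}, uniformly over the family $\{V_t\}$ of $O(1/n)$ analytic perturbations of $V$; by the openness of Conditions~\ref{condition:2}--\ref{condition:4} each $V_t$ remains one-band and edge-regular for all large $n$.

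The technical core is to prove that for any $W$ in a fixed neighbourhood of $V$,
\begin{equation} \label{eq:linstat_mean_plan}
  \E^{W}_{n-1,\beta}\Big( \sum_j f(x_j) \Big) = (n-1)\int f\, d\mu_W + \Big( \tfrac{2}{\beta} - 1 \Big) \int f\, d\nu_W + o(1),
\end{equation}
with $o(1)$ uniform, $\mu_W$ the equilibrium measure of $W$ and $\nu_W$ an explicit signed measure on $\supp(\mu_W)$. The proof of \eqref{eq:linstat_mean_plan} rests on the loop (Dyson--Schwinger) equations: integrating by parts in $\int |\Delta|^\beta \prod_j e^{-\frac{\beta}{2}(n-1)W(x_j)}\,dx$ produces, for the resolvent $G_m(z) := \E^W_{n-1,\beta}\sum_j (z-x_j)^{-1}$ with $m = n-1$, an approximate quadratic relation; inserting $G_m = m\,G^{(0)} + G^{(1)} + o(1)$ and matching orders identifies $G^{(0)}$ with the Stieltjes transform of $\mu_W$ and gives an explicit $G^{(1)}$ whose part coming from the Jacobian $|\Delta|^\beta$ carries the prefactor $\frac{2}{\beta}-1$ and produces $\nu_W$. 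One then reconstructs the linear statistic via $\E^W \sum_j f(x_j) = -\frac{1}{2\pi i}\oint f(z)\,G_m(z)\,dz$ along a contour around $J$. Here the regularity budget enters: for $\beta \neq 2$ the loop equation contains $G_m'$, costing an extra integration by parts against $f$, which is why one needs $s = 17/2$ derivatives rather than $s = 2$. The error terms are controlled by a priori concentration of the empirical measure and soft-edge rigidity estimates, for which the one-band condition, the non-real zeros of $h$ (genuine square-root vanishing at the edges), and the strict exterior variational inequality are exactly what is used to rule out outliers.

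It remains to assemble. Using $\mu_{V_t} = \mu - \frac{2t}{\beta(n-1)}\,\mathcal{D}f + O(n^{-2})$, where $\mathcal{D}$ gives the first variation of the equilibrium measure (essentially the inverse of the finite-support logarithmic-potential operator), one gets $(n-1)\int f\,d\mu_{V_t} = (n-1)\int f\,d\mu - \frac{2t}{\beta}Q(f) + O(n^{-1})$ with $Q(f) := \int f\, d(\mathcal{D}f)$ a $\beta$-independent quadratic form; the $t$-integral in \eqref{eq:duhamel_plan} turns $-\frac{2t}{\beta}Q(f)$ into $-\frac{1}{\beta}Q(f) = \frac{2}{\beta}A(f)$ with $A(f) := -\frac{1}{2}Q(f)$, which is the quadratic functional in the statement. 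The subleading term $(\frac{2}{\beta}-1)\int f\,d\nu_{V_t}$ is already $O(1)$ and $\nu_{V_t} = \nu_V + O(1/n)$, so it may be replaced by $(\frac{2}{\beta}-1)\int f\,d\nu$ with $\nu := \nu_V$ up to $o(1)$; combining, \eqref{eq:duhamel_plan} and \eqref{eq:linstat_mean_plan} give the asserted three-term formula with $n-1$ in front of $\int f\,d\mu$ and $\nu$, $A$ independent of $n$. Equivalently, read \eqref{eq:linstat_mean_plan} as the first cumulant, obtain $\operatorname{Var} = -\frac{2}{\beta}Q(f) + o(1)$ and all higher cumulants $o(1)$ from the $\beta$-ensemble CLT, and sum the cumulant series.

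The main obstacle is the \emph{uniform} rigorous control of the errors in the loop-equation expansion along the whole interpolation family $\{V_t\}$ and, especially, near the edges of $J$: this needs concentration and rigidity for $\beta$-log-gases with a general one-band regular polynomial potential, which is the delicate part of the analysis. For $\beta = 2$ one can bypass the loop equations entirely using orthogonal polynomials and Riemann--Hilbert steepest descent, giving \eqref{eq:linstat_mean_plan} (indeed the full expansion of $\log \constZ$) directly; and for general $\beta$ one may instead interpolate between $V$ and its ``equivalent'' quadratic potential, using the explicitly computable Gaussian ensemble as the base case — essentially the route of \cite{Johansson98}.
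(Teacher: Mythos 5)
The paper does not actually prove this proposition: it is stated as a citation of \cite[Theorem 2.4]{Johansson98}, and the surrounding text explicitly introduces it as ``a theorem of Johansson'' restated ``for the convenience of readers.'' So there is no argument of the paper's to compare against. Your sketch --- absorb $f$ into the potential, derive the Duhamel identity $\log \mathrm{E}_{n-1,\beta}(e^{\sum_j f(x_j)}) = \int_0^1 \mathrm{E}^{V_t}_{n-1,\beta}(\sum_j f(x_j))\,dt$, expand the mean linear statistic via the Dyson--Schwinger loop equations, and use the interpolation-to-Gaussian trick to anchor the constant --- is a correct high-level account of how Johansson's result is established, and you correctly identify the role of the extra $G_m'$ term in the loop equation for $\beta\neq 2$ as the reason for the heavier Sobolev requirement $s=17/2$ versus $s=2$. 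You also correctly produce the $\frac{2}{\beta}A(f)$ coefficient from the $t$-integration of the $O(1/n)$ shift of the equilibrium measure. However, as you yourself flag, the entire analytic content of the theorem --- uniform concentration, edge rigidity, and error control for the loop-equation hierarchy along the interpolating family $\{V_t\}$, using the one-band, non-real-zeros-of-$h$, and strict-exterior-inequality hypotheses --- is deferred as ``the delicate part of the analysis.'' What you have written is a plausible and essentially correct roadmap that, if carried out, would amount to reproducing a substantial fraction of \cite{Johansson98}; it is not a self-contained proof, and the paper under review does not attempt one either.
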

The quadratic functional $A$ is defined by
\begin{equation} \label{eq:defn_of_quadratic_A}
A(f) = \frac{1}{2} \int_J f(x)\delta^f(x) dx
\end{equation}
where $\delta^f$ is given by
\begin{equation}
\delta^f(x) = -\frac{1}{2\pi^2} \frac{1}{\sqrt{(x-b_1)(b_2-x)}} \pv \int_J \frac{f'(s) \sqrt{(s-b_1)(b_2-s)}}{s-x} ds.
\end{equation}
From the quadratic functional $A$, we define the inner product $\langle \cdot, \cdot \rangle_A$ by
\begin{equation}
\langle f, g \rangle_A := \frac{1}{2} (A(f+g)-A(f)-A(g)).
\end{equation}
The explicit formula of $\nu(x)$ is more complicated and is given in \cite[Formula (3.54)]{Johansson98}. The conditions mentioned in Proposition \ref{prop:Johansson} are (see \cite[Page 157]{Johansson98})
\begin{enumerate}[label=(\roman*)]
\item \label{enu:Johansson_condition_1}
$f(x) \leq C(V(x) + 1)$ for some constant $C$, all $x \in \realR$.

\item $\lvert f'(x) \rvert \leq q(x)$ for some polynomial $q(x)$ and all $x \in \realR$.

\item \label{enu:Johansson_condition_3}
For any $x_0 > 0$, there is an $\alpha > 0$ such that $h\psi_{x_0} \in H^{s+\alpha}$, where $H^s$, $s>0$, is the standard $L^2$ Sobolev space, and $\psi_{x_0} \in C^{\infty}$ is the function such that $\psi_{x_0}(x) = 1$ if $\lvert x \rvert \leq x_0$, $\psi_{x_0}(x) = 0$ if $\lvert x \rvert \geq x_0+1$ and $0 \leq \psi_{x_0}(x) \leq 1$.
\end{enumerate}

The function $R_{\beta}(u,w)$ appearing in Proposition \ref{cor:outlier_micro} is defined by
\begin{equation} \label{eq:defn_of_R_beta(u,w)}
  R_{\beta}(u,w) := \exp \left[ \frac{\beta}{2}\left( \left( \frac{2}{\beta}-1 \right) \int p(x;u,w)d\nu(x) - \int p(x;u,w)d\mu(x) + A(p(x;u,w)) \right) \right]
\end{equation}
for $u \in (\redge, \infty)$ and $w \in \compC \setminus (-\infty, u)$. If $w=u$, we denote
\begin{equation} \label{eq:defn_of_R_beta(u)}
R_{\beta}(u) := R_{\beta}(u,u).
\end{equation}

To facilitate the proof of Proposition \ref{cor:outlier_micro}, we define
\begin{align}
  Z_{m,\beta}(u,w;c) := & \E_{m,\beta} \left( P_{m,\beta}(x_1, \dots, x_m; u,w) \prod^m_{j=1} \chi_{(-\infty,c)}(x_j) \right), \label{eq:definition_of_Z_n-1_beta_with_c} \\
  \hat{Z}_{m,\beta}(u,w;c) := & \E_{m,\beta} \left( \lvert P_{m,\beta}(x_1, \dots, x_m; u,w) \rvert \prod^m_{j=1} \chi_{(-\infty,c)}(x_j) \right), \label{eq:defn_of_Z_hat_(uwc)_with_c}
\end{align}
where $c$ is a parameter no greater than $u$. When $c=u$, $Z_{m,\beta}(u,w;c)$ and $\hat{Z}_{m,\beta}(u,w;c)$ become $Z_{m,\beta}(u,w)$ and $\hat{Z}_{m,\beta}(u,w)$.

The proof of Proposition \ref{cor:outlier_micro} is as follows. Recall that $\redge$ is the right edge of $J$, the support of the equilibrium measure. Let $\bar{c} = (\redge + u)/2$. We write
\begin{equation}
  Z_{n-1,\beta}(u,w) = \hat{Z}_{n-1,\beta}(u,w;\bar{c}) \left( \frac{Z_{n-1,\beta}(u,w;\bar{c})}{\hat{Z}_{n-1,\beta}(u,w;\bar{c})} + \frac{Z_{n-1,\beta}(u,w)-Z_{n-1,\beta}(u,w;\bar{c})}{\hat{Z}_{n-1,\beta}(u,w;\bar{c})} \right).
\end{equation}

In case \ref{enu:cor:outlier_micro:a} where $w$ is given by \eqref{eq:condition_enu:cor:outlier_micro:a}, we assume the results
\begin{align} 
  \hat{Z}_{n-1,\beta}(u,w) = & e^{-\frac{\beta s}{2} \int \frac{d\mu(x)}{u-x}} R_{\beta}(u) e^{\frac{\beta}{2}n \int p(x;u)d\mu(x)} (1+o(1)), \label{eq:1st_support_of_cor:outlier_micro} \\
  \hat{Z}_{n-1,\beta}(u,w;\bar{c}) = & e^{-\frac{\beta s}{2} \int \frac{d\mu(x)}{u-x}} R_{\beta}(u) e^{\frac{\beta}{2}n \int p(x;u)d\mu(x)} (1+o(1)), \label{eq:1st_support_of_cor:outlier_micro_2} \\
  \frac{Z_{n-1,\beta}(u,w;\bar{c})}{\hat{Z}_{n-1,\beta}(u,w;\bar{c})} = & e^{-\frac{i\beta t}{2} \int \frac{d\mu(x)}{u-x}} + o(1).  \label{eq:2nd_support_of_cor:outlier_micro}
\end{align}
Then we have
\begin{equation} \label{eq:1st_support_of_cor:outlier_micro_derived}
\left\lvert \frac{Z_{n-1,\beta}(u,w)-Z_{n-1,\beta}(u,w;\bar{c})}{\hat{Z}_{n-1,\beta}(u,w;\bar{c})} \right\rvert \leq \frac{\hat{Z}_{n-1,\beta}(u,w)-\hat{Z}_{n-1,\beta}(u,w;\bar{c})}{\hat{Z}_{n-1,\beta}(u,w;\bar{c})} = o(1),
\end{equation}
and find that \eqref{eq:enu:cor:outlier_micro:a} is the consequence of \eqref{eq:1st_support_of_cor:outlier_micro}, \eqref{eq:1st_support_of_cor:outlier_micro_2}, \eqref{eq:2nd_support_of_cor:outlier_micro} and \eqref{eq:1st_support_of_cor:outlier_micro_derived}. 

In case \ref{enu:cor:outlier_micro:c} where $w$ is given by \eqref{eq:condition_enu:cor:outlier_micro:c}, we assume the results
\begin{align} 
  \hat{Z}_{n-1,\beta}(u,w) = & e^{-\frac{\beta t^2}{4} \int \frac{d\mu(x)}{(w_0-x)^2}} R_{\beta}(u,w_0)  e^{\frac{\beta}{2}n \int p(x;u,w_0) d\mu(x)} (1+o(1)), \label{eq:1st_support_of_cor:outlier_micro_c} \\
  \hat{Z}_{n-1,\beta}(u,w;\bar{c}) = & e^{-\frac{\beta t^2}{4} \int \frac{d\mu(x)}{(w_0-x)^2}} R_{\beta}(u,w_0)  e^{\frac{\beta}{2}n \int p(x;u,w_0) d\mu(x)} (1+o(1)), \label{eq:1st_support_of_cor:outlier_micro_2_c} \\
  \frac{Z_{n-1,\beta}(u,w;\bar{c})}{\hat{Z}_{n-1,\beta}(u,w;\bar{c})} = & e^{-i\frac{\beta t}{2} \sqrt{n} \int \frac{d\mu(x)}{w_0-x}} + o(1).  \label{eq:2nd_support_of_cor:outlier_micro_c}
\end{align}
We still have \eqref{eq:1st_support_of_cor:outlier_micro_derived}, and \eqref{eq:enu:cor:outlier_micro:c} is the consequence of \eqref{eq:1st_support_of_cor:outlier_micro_c}, \eqref{eq:1st_support_of_cor:outlier_micro_2_c}, \eqref{eq:2nd_support_of_cor:outlier_micro_c} and \eqref{eq:1st_support_of_cor:outlier_micro_derived}.

In case \ref{enu:cor:outlier_micro:b}, we assume
\begin{equation} \label{eq:1st_support_of_lemma:outlier_macro}
  \hat{Z}_{n-1,\beta}(u,w) = R_{\beta}(u,w) e^{\frac{\beta}{2}n \int p(x;u,w)d\mu(x)} (1+o(1)).
\end{equation}
Then we immediately obtain \eqref{eq:enu:cor:outlier_micro:b}.

Below we prove the asymptotic formulas \eqref{eq:1st_support_of_cor:outlier_micro}, \eqref{eq:1st_support_of_cor:outlier_micro_2}, \eqref{eq:2nd_support_of_cor:outlier_micro}, \eqref{eq:1st_support_of_cor:outlier_micro_c}, \eqref{eq:1st_support_of_cor:outlier_micro_2_c} and \eqref{eq:1st_support_of_lemma:outlier_macro}.

\begin{proof}[Proof of \eqref{eq:1st_support_of_cor:outlier_micro}, \eqref{eq:1st_support_of_cor:outlier_micro_2}, \eqref{eq:1st_support_of_cor:outlier_micro_c}, \eqref{eq:1st_support_of_cor:outlier_micro_2_c} and \eqref{eq:1st_support_of_lemma:outlier_macro}]
In the proof, $c$ stands for $\bar{c} = (\redge+u)/2$ or $u$.

For fixed $u$, $w$ and $c$, let $f_1(x)$ and $f_2(x)$ be two functions on $\realR$, such that
\begin{enumerate}[label=(\arabic*)]
\item $f_1(x)$ and $f_2(x)$ satisfy the conditions \ref{enu:Johansson_condition_1}--\ref{enu:Johansson_condition_3} mentioned in Proposition \ref{prop:Johansson}.
\item $f_1(x) = f_2(x) = p(x;u,w)$ for $x \leq \redge$.
\item $f_1(x) \geq p(x;u,w)$ for $x \in (\redge,c)$.
\item There exists $x_0 \in (\redge,c)$ such that $f_2(x) \leq p(x;u,w)$ for $x \in (\redge,x_0)$ and $f_1(x) - f_2(x) \geq \log 2$ for $x \geq x_0$.
\end{enumerate}
As a consequence of the properties of $f_1(x)$ and $f_2(x)$, we have
\begin{equation} \label{eq:sandwich_inequality}
  \begin{split}
    & 2\E_{n-1,\beta} \left( e^{\frac{\beta}{2} \sum^{n-1}_{j=1}f_2(\lambda_1)} \right) - \E_{n-1,\beta}(e^{\frac{\beta}{2} \sum^{n-1}_{j=1}f_1(\lambda_1)}) \\
    < & \E_{n-1,\beta} \left( \lvert P_{n-1,\beta}(x_1, \dots, x_{n-1}; u,w) \rvert \prod^{n-1}_{j=1} \chi_{(-\infty,c)}(x_j) \right) \\
    < & \E_{n-1,\beta} \left( e^{\frac{\beta}{2} \sum^{n-1}_{j=1}f_1(\lambda_1)} \right).
  \end{split}
\end{equation}
By Proposition \ref{prop:Johansson}, we have for both $i=1,2$ that
\begin{equation}
\begin{split}
  \E_{n-1,\beta}(e^{\frac{\beta}{2} \sum^{n-1}_{j=1}f_i(\lambda_1)}) = & \exp \left[ (n-1) \int \frac{\beta}{2} p(x;u,w) d\mu(x) + (\frac{2}{\beta} - 1) \int \frac{\beta}{2} p(x;u,w) d\nu(x) \right. \\
  & \left. + \frac{2}{\beta} A \left( \frac{\beta}{2} p(x;u,w) \right) \right] (1+o(1))\\
  = & R_{\beta}(u,w) \exp \left[ \frac{\beta}{2}n \int p(x;u,w) d\mu(x) \right] (1+o(1)).
\end{split}
\end{equation}
Thus by the sandwich inequality \eqref{eq:sandwich_inequality} and \eqref{eq:defn_of_Z_hat_(uwc)_with_c} we obtain
\begin{equation} \label{eq:average_pf_F_beta_by sandwich}
  \begin{split}
      \hat{Z}_{n-1,\beta}(u,w;c) = & \E_{n-1,\beta} \left( \lvert P_{n-1,\beta}(x_1, \dots, x_{n-1}; u,w) \rvert \prod^{n-1}_{j=1} \chi_{(-\infty,c)}(x_j) \right) \\
      = & R_{\beta}(u,w) \exp \left[ \frac{\beta}{2}n \int p(x;u,w) d\mu(x) \right] (1+o(1)).
  \end{split}
\end{equation}
By \eqref{eq:average_pf_F_beta_by sandwich}, we complete the proof of \eqref{eq:1st_support_of_lemma:outlier_macro} with $c=u$. Let $w$ be given in \eqref{eq:condition_enu:cor:outlier_micro:a}, we have uniformly for all $x \leq \redge$ that
\begin{equation} \label{eq:relation_between_p_and_tilde_p}
p(x;u,w) = p(x;u) - \frac{s}{u-x}n^{-1} + O(n^{-2}).
\end{equation}
Thus
\begin{equation}
R_{\beta}(u,w) = R_{\beta}(u) + O(n^{-1})
\end{equation}
and
\begin{equation}
\exp \left[ \frac{\beta}{2}n \int p(x;u,w) d\mu(x) \right] = e^{-\frac{\beta s}{2} \int \frac{d\mu(x)}{u-x}} \exp \left[ \frac{\beta}{2}n \int p(x;u) d\mu(x) \right] (1+O(n^{-1})),
\end{equation}
and we obtain the proof of \eqref{eq:1st_support_of_cor:outlier_micro_2} with $c = \bar{c}$ and the proof of \eqref{eq:1st_support_of_cor:outlier_micro} with $c=u$. Let $w$ be given in \eqref{eq:condition_enu:cor:outlier_micro:c}, we have uniformly for all $x \leq \redge$ that
\begin{equation} \label{eq:relation_between_tilde_p_and_tilde_p_w_0}
p(x;u,w) = p(x;u;w_0) - \frac{t^2}{2(w_0-x)^2}n^{-1} + O(n^{-2}).
\end{equation}
Thus
\begin{equation}
R_{\beta}(u,w) = R_{\beta}(u,w_0) + O(n^{-1})
\end{equation}
and
\begin{equation}
\exp \left[ \frac{\beta}{2}n \int p(x;u,w) d\mu(x) \right] = e^{-\frac{\beta t^2}{4} \int \frac{d\mu(x)}{(w_0-x)^2}} \exp \left[ \frac{\beta}{2}n \int p(x;u,w_0) d\mu(x) \right] (1+O(n^{-1})),
\end{equation}
and we obtain the proof of \eqref{eq:1st_support_of_cor:outlier_micro_2_c} with $c = \bar{c}$ and the proof of \eqref{eq:1st_support_of_cor:outlier_micro_c} with $c=u$.
\end{proof}

\begin{rmk} \label{rmk:evaluation_of_E(hat_F)}
  By the same method, we can evaluate $\E_{n-1,\beta}(\hat{F}_{n-1,\beta}(x_1, \dots, x_{n-1};u))$ where $\hat{F}_{n-1,\beta}$ is defined in \eqref{eq:defn_of_hat_F}.
\end{rmk}

\begin{proof}[Proof of \eqref{eq:2nd_support_of_cor:outlier_micro} and \eqref{eq:2nd_support_of_cor:outlier_micro_c}]
We consider $(-\infty, \bar{c})^{n-1}$ as a probability space with the probability measure
\begin{equation}
\frac{1}{\hat{Z}_{n-1,\beta}(u,w;\bar{c})} \lvert P_{n-1,\beta}(x_1, \dots, x_{n-1};u,w) \rvert d\mu_{n-1,\beta}(x_1, \dots, x_{n-1}),
\end{equation}
where $d\mu_{n-1,\beta}(x_1, \dots, x_{n-1})$ is defined in \eqref{eq:defn_of_measure_mu_{n-1,beta}}. Let $S^w_{n-1,\beta}$ be a random variable on $(-\infty, \bar{c})^{n-1}$ such that
\begin{equation}
S^w_{n-1,\beta}(x_1, \dots, x_{n-1}) = \left( \sum^{n-1}_{j=1} \arg \frac{1}{(w-\lambda_j)^{\beta/2}} \right),
\end{equation}
where the range of the argument is taken to be $(-\pi, \pi]$. 

We define 
\begin{equation}
\sigma_v := A \left( \frac{1}{v-x} \right)^{-1/2}
\end{equation}
for any $v > \redge$. For $w$ given in \eqref{eq:condition_enu:cor:outlier_micro:a}, we will show 
\begin{align}
  \E(S^w_{n-1,\beta}) = & -\frac{\beta t}{2} \int \frac{d\mu(x)}{u-x} + o(1), \label{eq:formula_of_mean_argument} \\
  \Prob(n(S^w_{n-1,\beta} - \E(S^w_{n-1,\beta})) < T) = & \frac{1}{\sqrt{2\pi/\beta}t\sigma_u} \int^T_{-\infty} e^{-\frac{x^2}{2t^2\sigma^2_u/\beta}} dx +o(1), \label{eq:formula_of_mean_fluctuation_argument}
\end{align}
and for $w$ given in \eqref{eq:condition_enu:cor:outlier_micro:c}, we will show
\begin{align}
\E(S^w_{n-1,\beta}) = & -\frac{\beta t}{2} \sqrt{n} \int \frac{d\mu(x)}{w_0-x} + o(1), \label{eq:formula_of_mean_argument_w_0} \\
\Prob(\sqrt{n}(S^w_{n-1,\beta} - \E(S^w_{n-1,\beta})) < T) = & \frac{1}{\sqrt{2\pi/\beta}t\sigma_{w_0}} \int^T_{-\infty} e^{-\frac{x^2}{2t^2\sigma^2_{w_0}/\beta}} dx + o(1). \label{eq:formula_of_mean_fluctuation_argument_w_0}
\end{align}
Assuming \eqref{eq:formula_of_mean_argument} and \eqref{eq:formula_of_mean_fluctuation_argument}, we find that $S^w_{n-1,\beta}$ converges in probability to $-\frac{\beta t}{2} \int \frac{d\mu(x)}{u-x}$, and \eqref{eq:2nd_support_of_cor:outlier_micro} is proved. Assuming \eqref{eq:formula_of_mean_argument_w_0} and \eqref{eq:formula_of_mean_fluctuation_argument_w_0}, we find that $S^w_{n-1,\beta} + \frac{\beta t}{2} \sqrt{n} \int \frac{d\mu(x)}{w_0-x}$ converges in probability to $0$, and \eqref{eq:2nd_support_of_cor:outlier_micro_c} is proved.

To prove \eqref{eq:formula_of_mean_argument}, we denote for $x < \bar{c}$ the function
\begin{equation} \label{eq:defn_of_g_beta}
g_{\beta}(x;w) = n \arg \frac{1}{(w-x)^{\beta/2}}.
\end{equation}
$g_{\beta}(x;w)$ depends on $n$, but we suppress that dependence to economize on notation. Let $w$ be given by \eqref{eq:condition_enu:cor:outlier_micro:a}, uniformly for all $x < \bar{c}$
\begin{equation} \label{eq:asy_of_g_beta}
g_{\beta}(x;w) = -\frac{\beta t}{2} \frac{1}{u-x} + O(n^{-1}).
\end{equation}
Define the $(n-1)$-variable function
\begin{equation}
  G_{r,\beta}(x_1, \dots, x_{n-1};u,w;\bar{c}) :=
  \begin{cases}
    P_{n-1,\beta}(x_1, \dots, x_{n-1};u,w) e^{r \sum^{n-1}_{j=1} g_{\beta}(x_j;w)} & \textnormal{if $\max_{1 \leq j \leq n-1} x_j < \bar{c}$,} \\
    0 & \textnormal{otherwise.}
  \end{cases}
\end{equation}
We have (comparing with \eqref{eq:defn_of_Z_hat_(uwc)_with_c})
\begin{align}
  \E(e^{rnS^w_{n-1,\beta}}) = & \frac{\E_{n-1,\beta}\left( G_{r,\beta}(x_1, \dots, x_{n-1};u,w;\bar{c}) \right)}{\hat{Z}_{n-1,\beta}(u,w; \bar{c})}, \label{eq:expression_of_exponential_mean} \\
  n\E(S^w_{n-1,\beta}) = & \left. \frac{d}{dr} \E(e^{rnS^w_{n-1,\beta}}) \right\rvert_{r=0}. \label{eq:expression_of_mean_by_derivative_of_ln}
\end{align}
For any $r \in \realR$, analogous to \eqref{eq:average_pf_F_beta_by sandwich} we have
\begin{equation} \label{eq:evaluation_of_ln_of_ave_for_r<>0}
\begin{split}
  & \log \E_{n-1,\beta}\left( G_{r,\beta}(x_1, \dots, x_{n-1};u,w;\bar{c}) \right) \\
  = & (n-1) \int r g_{\beta}(x;w) + p(x;u,w) d\mu(x) 
   + \left( \frac{2}{\beta} - 1 \right) \int r g_{\beta}(x;w) + p(x;u,w) d\nu(x) \\
  & + \frac{2}{\beta} A(r g_{\beta}(x;w) + p(x;u,w)) + o(1) \\
  = & \log \hat{Z}_{n-1,\beta}(u,w;\bar{c}) + r \left[ (n-1) \int g_{\beta}(x;w) d\mu(x) \right. \\
  & + \left. \left( \frac{2}{\beta} - 1 \right) \int g_{\beta}(x;w) d\nu(x) + \frac{4}{\beta} \langle g_{\beta}(x;w), p(x;u,w) \rangle_A\right] 
   + \frac{2r^2}{\beta} A(g_{\beta}(x;w)) + o(1),
\end{split}
\end{equation}
where we use \eqref{eq:1st_support_of_cor:outlier_micro_2_c} in the last line. Using the Cauchy-Schwartz inequality, we find (for notational simplicity, we write $G_{r,\beta}(x_1, \dots, x_{n-1};u,w; \bar{c})$ as $G_{r,\beta}$ if there is no confusion)
\begin{equation}
\begin{split}
& \frac{d^2}{dr^2} \log \E_{n-1,\beta}(G_{r,\beta}(x_1, \dots, x_{n-1};u,w; \bar{c})) \\
= & \E_{n-1,\beta}(G_{r,\beta})^{-2}  \left[ \E_{n-1,\beta} \left( \left( \sum^{n-1}_{k=1} g_{\beta}(x;w) \right)^2 G_{r,\beta} \right) \E_{n-1,\beta}(G_{r,\beta}) \right. \\
& - \left. \E_{n-1,\beta} \left( \left( \sum^{n-1}_{k=1} g_{\beta}(x;w) \right) G_{r,\beta} \right) \E_{n-1,\beta} \left( \left( \sum^{n-1}_{k=1} g_{\beta}(x;w) \right) G_{r,\beta} \right) \vphantom{\left( \left( \sum^{n-1}_{k=1} g_{\beta}(x;w) \right)^2 G_{r,\beta} \right) \E_{n-1,\beta}(G_{r,\beta}) } \right] \\
> & 0.
\end{split}
\end{equation}
Hence $\log \E_{n-1,\beta}(G_{r,\beta}(x_1, \dots, x_{n-1};u,w; \bar{c}))$ is a convex function in $r$. For any $\epsilon > 0$, by \eqref{eq:expression_of_mean_by_derivative_of_ln}
\begin{equation} \label{eq:sandwich_of_derivative_of_ln}
\begin{split}
  \frac{\log E_{n-1,\beta}(G_{-\epsilon,\beta}) - \hat{Z}_{n-1,\beta}(u,w; \bar{c})}{-\epsilon} < & \left. \frac{d \log E_{n-1,\beta}(G_{r,\beta})}{dr} \right\rvert_{r=0} = n\E(S^w_{n-1,\beta}) \\
  < & \frac{\log E_{n-1,\beta}(G_{\epsilon,\beta}) - \hat{Z}_{n-1,\beta}(u,w; \bar{c})}{\epsilon}.
\end{split}
\end{equation}
Taking $\epsilon \to 0$, by \eqref{eq:evaluation_of_ln_of_ave_for_r<>0}, \eqref{eq:sandwich_of_derivative_of_ln} and \eqref{eq:asy_of_g_beta} we have 
\begin{equation} \label{eq:asymptotics_of_n_E(Sigma)}
n \E(S^w_{n-1,\beta}) = (n-1) \int g_{\beta}(x;w) d\mu(x) + \left( \frac{2}{\beta} - 1 \right) \int g_{\beta}(x;w) d\nu(x) + \frac{4}{\beta} \langle g_{\beta}(x;w), p(x;u,w) \rangle_A + o(1), 
\end{equation}
and by \eqref{eq:defn_of_g_beta}
\begin{equation}
\E(S^w_{n-1,\beta}) = \int g_{\beta}(x;w) d\mu(x) + o(1) = -\frac{\beta t}{2} \int \frac{d\mu(x)}{u-x} + o(1).
\end{equation}

To prove \eqref{eq:formula_of_mean_fluctuation_argument}, we consider the moment-generating function of $n(S^w_{n-1,\beta} - \E(S^w_{n-1,\beta}))$. By \eqref{eq:expression_of_exponential_mean}  we have
\begin{equation} \label{eq:representation_of_moment_gen_u}
\begin{split}
M_{n(S^w_{n-1,\beta} - \E(S^w_{n-1,\beta}))}(\xi) = & \E(\exp[\xi n(S^w_{n-1,\beta} - \E(S^w_{n-1,\beta}))]) \\
= & \frac{\E_{n-1,\beta}(G_{\xi,\beta}(\lambda_1, \dots, \lambda_{n-1};u,w; \bar{c}))}{\exp[\xi n \E(S^w_{n-1,\beta})] \hat{Z}_{n-1,\beta}(u,w; \bar{c})}.
\end{split}
\end{equation}
Then by \eqref{eq:evaluation_of_ln_of_ave_for_r<>0} and \eqref{eq:asymptotics_of_n_E(Sigma)} we have 
\begin{equation} \label{eq_convergence_moment_generating_u}
\begin{split}
M_{n(S^w_{n-1,\beta} - \E(S^w_{n-1,\beta}))}(\xi) = & \exp \left[ \frac{2\xi^2}{\beta} A(g_{\beta}(x;w)) + o(1) \right] \\
= & \exp \left[ \frac{\beta t^2\xi^2}{2} A \left( \frac{1}{u-x} \right) + o(1) \right],
\end{split}
\end{equation}
where in the last step we use \eqref{eq:asy_of_g_beta}. The convergence of moment-generating function \eqref{eq_convergence_moment_generating_u} implies \eqref{eq:formula_of_mean_fluctuation_argument}.

To prove \eqref{eq:formula_of_mean_argument_w_0}, we denote for $x < \bar{c}$ the function
\begin{equation} \label{eq:defn_of_tilde_g_beta}
\tilde{g}_{\beta}(x;w) := \frac{1}{\sqrt{n}} g_{\beta}(x;w) = \sqrt{n} \arg \frac{1}{(w-x)^{\beta/2}}.
\end{equation}
$\tilde{g}_{\beta}(x;w)$ depends on $n$, and we suppress the dependence to economize on notation. Let $w$ be given by \eqref{eq:condition_enu:cor:outlier_micro:c}, uniformly for all $x < \bar{c}$
\begin{equation} \label{eq:asy_of_g_beta_away}
\tilde{g}_{\beta}(x;w) = -\frac{\beta t}{2} \frac{1}{w_0-x} + O(n^{-1}).
\end{equation}
Like \eqref{eq:asymptotics_of_n_E(Sigma)}, we have
\begin{multline} \label{eq:asymptotics_of_n_E(Sigma)_w_0}
\sqrt{n} \E(S^w_{n-1,\beta}) = (n-1) \int \tilde{g}_{\beta}(x;w) d\mu(x) + \left( \frac{2}{\beta} - 1 \right) \int \tilde{g}_{\beta}(x;w) d\nu(x) \\
+ \frac{4}{\beta} \langle \tilde{g}_{\beta}(x;w), p(x;u,w) \rangle_A + o(1), 
\end{multline}
and by \eqref{eq:asymptotics_of_n_E(Sigma)_w_0} we obtain
\begin{equation}
\E(S^w_{n-1,\beta}) = \sqrt{n} \int \tilde{g}_{\beta}(x;w) d\mu(x) + o(1) = -\frac{\beta t}{2}\sqrt{n} \int \frac{d\mu(x)}{w_0-x} + o(1)
\end{equation}
and complete the proof of \eqref{eq:formula_of_mean_argument_w_0}.

To prove \eqref{eq:formula_of_mean_fluctuation_argument_w_0}, we consider the moment-generating function of $\sqrt{n}(S^w_{n-1,\beta} - \E(S^w_{n-1,\beta}))$. Like \eqref{eq:representation_of_moment_gen_u} and \eqref{eq_convergence_moment_generating_u}, we have the convergence of moment-generating function
\begin{equation}
\begin{split}
M_{\sqrt{n}(S^w_{n-1,\beta} - \E(S^w_{n-1,\beta}))}(\xi) = & \E(\exp[\xi \sqrt{n}(S^w_{n-1,\beta} - \E(S^w_{n-1,\beta}))]) \\
= &  \exp \left[ \frac{2\xi^2}{\beta} A(\tilde{g}_{\beta}(x;w)) + o(1) \right] \\
= & \exp \left[ \frac{\beta t^2\xi^2}{2} A \left( \frac{1}{w_0-x} \right) + o(1) \right],
\end{split}
\end{equation}
which implies \eqref{eq:formula_of_mean_fluctuation_argument_w_0}.
\end{proof}

\paragraph{Acknowledgments}

The author thanks Mark Adler, Jinho Baik, Kenneth D.~T-R McLaughlin and Peter J.~Forrester for helpful comments, and anonymous referees for careful reading and valuable suggestions on presentation.

\appendix

\section{Contour integral formula of the joint \pdf{s} of the eigenvalues in $\beta$-external source models with $\beta = 1,2,4$, and $\beta$-external source model for all $\beta > 0$}  \label{sec:external_source_model_with_odd_n_and_general_beta_external_source_model}

The goal of this appendix is two-fold. We prove Proposition \ref{prop:contour_rep_of_pdf_of_largest_eigenvalue} and also propose the definition of the $\beta$-external source model. 

\begin{rmk} \label{rmk:relation_to_Mo}
  The strategy in this appendix has appeared in \cite[Appendix]{Mo11} independently for the purpose of proof of \cite[Theorem 1]{Mo11}. Since we are concerned with $\beta = 1, 2, 4$ cases and furthermore all $\beta > 0$, we give full detail in this appendix.
\end{rmk}

By change of variables and calculation of Jacobian (\cf\ \cite[Chapter 3]{Mehta04}), it follows from \eqref{eq:pdf_of_matrix_beta=1}, \eqref{eq:pdf_of_matrix_beta=2} and \eqref{eq:quaternion_form_of_pdf_of_beta=4} that the joint \pdf{s} of the eigenvalues $\lambda_1, \dots, \lambda_n$ of $M$ in the three $\beta$-external source models ($\beta = 1,2,4$) are given by
\begin{equation} \label{eq:pdf_of_eigenvalues}
  p_{n,\beta}(\lambda_1,\dots,\lambda_n) = \frac{1}{C_{n,\beta}} \lvert \Delta(\lambda_1, \dots, \lambda_n) \rvert^{\beta} \prod^n_{j=1} e^{-\frac{\beta}{2}nV(\lambda_j)} \int_{Q \in G_{\beta}(n)} e^{n\Re\Tr(\A_{n,\beta}Q\Lambda_nQ^{-1})} dQ,
\end{equation}
where $C_{n,\beta}$ is a normalization constant and $C_{n,\beta}/\tilde{C}_{n,\beta}$ is a constant depending only on $n$. The integral in \eqref{eq:pdf_of_eigenvalues} is with respect to the Haar measure of the compact group $G_{\beta}(n)$, which is the orthogonal group $O(n)$, the unitary group $U(n)$ and the compact symplectic group $Sp(n)$ for $\beta = 1,2,4$ respectively. The matrix $\A_{n,\beta}$ is defined in \eqref{eq:defination_of_matrix_A}, \eqref{eq:relation_of_A_beta_to_A} and \eqref{eq:diag_formula_of_A_n4}, and
\begin{equation} \label{eq:definition_of_Lambda_beta}
\Lambda_n = \diag(\lambda_1, \lambda_2, \dots, \lambda_n).
\end{equation} 

Recall that in combinatorics, a partition $\kappa = (\kappa_1, \kappa_2, \dots)$ is a sequence of non-negative integers in decreasing order, and containing only finitely many non-zero terms. We denote $l(\kappa)$ as the number of non-zero terms of $\kappa$, and write $\kappa \vdash k$ if $\sum^{l(\kappa)}_{i=1} \kappa_i = k$. 

Jack polynomials $C^{(\alpha)}_{\kappa}(x_1, \dots, x_n)$ are $n$-variable symmetric polynomials indexed by partition $\kappa$ and the parameter $\alpha$. For general references of Jack polynomials, see \cite{Macdonald95} and \cite{Stanley89}. In this paper, we take the ``C''-normalization of Jack polynomials \cite{Dumitriu-Edelman-Shuman07}, such that
\begin{equation} \label{eq:normalization_of_Jack_polynomial}
  \sum_{\kappa \vdash k, l(\kappa) \leq n} C^{(\alpha)}_{\kappa}(x_1, \dots, x_n) = (x_1 + \dots + x_n)^k.
\end{equation}
The Jack polynomials with parameters $2, 1, \frac{1}{2}$ are Zonal spherical functions. See \cite[Chapter VII]{Macdonald95}. $C^{(2)}_{\kappa}$ are the well known Zonal polynomial in statistics \cite{Muirhead82}, $C^{(1)}_{\kappa}$ are the complex Zonal polynomials, and are better known as Schur polynomials, and $C^{(1/2)}_{\kappa}$ are the quaternionic Zonal polynomials.

The integral in \eqref{eq:pdf_of_eigenvalues} can be expanded in Jack polynomials:
\begin{prop}
Let $\beta = 1,2,4$ and $G_{\beta}(n)$ be $O(n)$, $U(n)$ and $Sp(n)$ respectively. If $\A_{n,\beta}$ is defined by \eqref{eq:defination_of_matrix_A}, \eqref{eq:relation_of_A_beta_to_A} and \eqref{eq:diag_formula_of_A_n4} and $\Lambda_n$ is defined by \eqref{eq:definition_of_Lambda_beta}, then
\begin{equation} \label{eq:general_integral_formula_of_Jack}
\int_{Q \in G_{\beta}(n)} e^{n\Re\Tr(\A_{n,\beta}Q\Lambda_nQ^{-1})} dQ = \sum^{\infty}_{k=0} \frac{(\frac{\beta}{2}n)^k}{k!} \sum_{\kappa \vdash k, l(\kappa) \leq n} \frac{C^{(2/\beta)}_{\kappa}(a_1, \dots, a_n) C^{(2/\beta)}_{\kappa}(\lambda_1, \dots, \lambda_n)}{C^{(2/\beta)}_{\kappa}(1, \dots, 1)}.
\end{equation}
\end{prop}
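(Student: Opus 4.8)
The plan is to expand the exponential as a power series, convert each $k$-th power of the trace into a sum of Jack polynomials via the normalization \eqref{eq:normalization_of_Jack_polynomial}, integrate term by term over $G_\beta(n)$ using the classical integration formula for zonal spherical functions, and then reassemble the series. Interchanging the infinite sum with the Haar integral is harmless because the exponential series converges uniformly on the compact group $G_\beta(n)$. It is also enough to treat the case where the $a_j$ are generic (say distinct and positive): both sides of \eqref{eq:general_integral_formula_of_Jack} are real-analytic symmetric functions of $(a_1,\dots,a_n)$ and, separately, of $(\lambda_1,\dots,\lambda_n)$, so the identity extends from a dense set to all values, in particular to the rank-one matrices \eqref{eq:formula_of_rank_1_A}.

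First I would do the normalization bookkeeping. From $\A_{n,1} = \tfrac12\A_n$, $\A_{n,2} = \A_n$, and, in the quaternion form, $\A_{n,4} = 2\A_n$ and $V_4 = 2\hat V_4$, together with the factor $2$ relating the quaternion trace $\Re\Tr$ to the underlying $2n\times 2n$ complex trace, one checks in all three cases that
\[
 n\Re\Tr\bigl(\A_{n,\beta}Q\Lambda_n Q^{-1}\bigr) = \tfrac{\beta}{2}n\,\tau_\beta\bigl(\A_n Q\Lambda_n Q^{-1}\bigr),
\]
where $\tau_\beta$ is the ordinary trace for $\beta=1,2$ and the quaternion trace $\Re\Tr$ for $\beta=4$. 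For generic $\A_n$ the matrix $\A_n Q\Lambda_n Q^{-1}$ is conjugate to the self-adjoint matrix $\A_n^{1/2}Q\Lambda_n Q^{-1}\A_n^{1/2}$, whose $n$ eigenvalues $y_1,\dots,y_n$ therefore satisfy $\tau_\beta\bigl((\A_n Q\Lambda_n Q^{-1})^m\bigr) = p_m(y_1,\dots,y_n)$, the $m$-th power sum. Expanding the exponential and applying \eqref{eq:normalization_of_Jack_polynomial} with $\alpha = 2/\beta$ gives
\[
 e^{n\Re\Tr(\A_{n,\beta}Q\Lambda_n Q^{-1})}
 = \sum_{k=0}^{\infty}\frac{(\tfrac{\beta}{2}n)^k}{k!}\,p_1(y_1,\dots,y_n)^k
 = \sum_{k=0}^{\infty}\frac{(\tfrac{\beta}{2}n)^k}{k!}\sum_{\kappa\vdash k,\ l(\kappa)\le n}C^{(2/\beta)}_\kappa\bigl(\A_n Q\Lambda_n Q^{-1}\bigr),
\]
where $C^{(2/\beta)}_\kappa$ of a matrix means the Jack polynomial evaluated at its eigenvalues.

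The one nonelementary ingredient is the integration formula
\[
 \int_{Q\in G_\beta(n)}C^{(2/\beta)}_\kappa\bigl(\A_n Q\Lambda_n Q^{-1}\bigr)\,dQ
 = \frac{C^{(2/\beta)}_\kappa(a_1,\dots,a_n)\,C^{(2/\beta)}_\kappa(\lambda_1,\dots,\lambda_n)}{C^{(2/\beta)}_\kappa(1,\dots,1)}.
\]
For $\beta=2$, where $C^{(1)}_\kappa$ is a multiple of the Schur function, this is Schur's lemma applied to the irreducible $GL(n,\compC)$-module with highest weight $\kappa$; for $\beta=1$ it is James's zonal polynomial integral (see \cite{Muirhead82} or \cite[Ch.~VII]{Macdonald95}); and for $\beta=4$ it is the quaternionic analogue. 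A uniform derivation runs through the theory of zonal spherical functions (\cite[Ch.~VII]{Macdonald95}), whose normalized versions are precisely $C^{(2/\beta)}_\kappa/C^{(2/\beta)}_\kappa(1,\dots,1)$: by left- and right-invariance of the Haar measure the left-hand side is a symmetric polynomial separately in $(a_j)$ and $(\lambda_j)$, homogeneous of degree $k$ in each; the multiplicity-one property of the relevant irreducible summands (equivalently, the functional equation of zonal spherical functions) forces it to equal $\sum_{\mu\vdash k}c_\mu C^{(2/\beta)}_\mu(a_1,\dots,a_n)C^{(2/\beta)}_\mu(\lambda_1,\dots,\lambda_n)$ with no cross terms; and specializing $\Lambda_n=I_n$, where the integrand reduces to $C^{(2/\beta)}_\kappa(\lambda_1,\dots,\lambda_n)$, pins down $c_\mu=\delta_{\mu\kappa}/C^{(2/\beta)}_\kappa(1,\dots,1)$.

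Substituting this formula into the expansion of the exponential and interchanging the uniformly convergent sum over $k$ with the Haar integral then produces \eqref{eq:general_integral_formula_of_Jack}. I expect the principal obstacle to be the $\beta=4$ case: making precise the dictionary between the quaternionic and the $2n\times 2n$ complex descriptions, tracking the factor-of-two normalizations, and establishing (or correctly citing) the quaternionic versions both of the zonal integral and of the identity $\tau_4(M)^k = \sum_{\kappa\vdash k,\ l(\kappa)\le n}C^{(1/2)}_\kappa(M)$ for $M$ conjugate to a quaternion-Hermitian matrix (which uses the cyclicity of $\Re\Tr$ on quaternionic matrices). The $\beta=1$ and $\beta=2$ cases should follow directly from the cited classical results together with \eqref{eq:normalization_of_Jack_polynomial}.
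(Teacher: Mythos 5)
Your proposal is correct and follows essentially the same route as the paper: expand the exponential, use the normalization \eqref{eq:normalization_of_Jack_polynomial} to rewrite $k$-th powers of the trace as sums of Jack polynomials of the matrix $\A_n Q\Lambda_n Q^{-1}$, integrate term by term via the zonal spherical integral (which the paper simply cites as \cite[Proposition 5.5]{Gross-Richards87}), and reassemble. The only cosmetic differences are that you make the matrix-valued Jack polynomial concrete through the conjugation $\A_n^{1/2}Q\Lambda_nQ^{-1}\A_n^{1/2}$ and an analyticity argument to cover degenerate $\A_n$, whereas the paper invokes Macdonald's direct definition of symmetric-polynomial functions on $M_n(\realR)$, $M_n(\compC)$ and $M_n(\quatH)$, and that you sketch the multiplicity-one derivation of the zonal integral instead of just citing it.
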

\begin{proof}
  Any $n$-variable symmetric polynomial $f$ can be regarded as a polynomial function from the spaces of $n \times n$ matrices $M_n(F)$ to $F$, where $F$ stands for the division algebras $\realR$, $\compC$ and $\quatH$. For $F = \realR$ and $F = \compC$, the definition is simple: If $M \in M_n(\realR)$ or $M \in M_n(\compC)$ and the eigenvalues of $M$ are $\xi_1, \dots, \xi_n$, then \cite[Pages 420 and 443]{Macdonald95} 
  \begin{equation}
    f(M) = f(\xi_1, \dots, \xi_n).
  \end{equation}
  For $F = \quatH$, the definition is more complicated and the reader is referred to \cite[Page 452]{Macdonald95}. In all the three cases, identity \eqref{eq:normalization_of_Jack_polynomial} implies that
  \begin{equation} \label{eq:matrix_representation_of_Jack_normalization}
    \Re \Tr(\A_{n,\beta}Q\Lambda_nQ^{-1})^k = \left( \frac{\beta}{2} \right)^k \sum_{\kappa \vdash k, l(\kappa) \leq n} C^{(2/\beta)}_{\kappa}(\A_nQ\Lambda_nQ^{-1}),
  \end{equation}
  where $\A_n = \frac{2}{\beta} \A_{n,\beta} = \diag(a_1, \dots, a_n)$ as defined in \eqref{eq:defination_of_matrix_A}. Furthermore, by general theory of Zonal spherical functions (\eg\ \cite[Proposition 5.5]{Gross-Richards87})
  \begin{equation} \label{eq:standard_result_of_Zonal_spherical_poly}
    \int_{Q \in G_{\beta}(n)} C^{(2/\beta)}_{\kappa}(\A_nQ\Lambda_nQ^{-1}) = \frac{C^{(2/\beta)}_{\kappa}(a_1, \dots, a_n) C^{(2/\beta)}_{\kappa}(\lambda_1, \dots, \lambda_n)}{C^{(2/\beta)}_{\kappa}(1, \dots, 1)}.
  \end{equation}
  After expanding $e^{n\Re\Tr(\A_{n,\beta}Q\Lambda_nQ^{-1})}$ into power series of $\Re\Tr(\A_{n,\beta}Q\Lambda_{n,\beta}Q^{-1})$, we prove \eqref{eq:general_integral_formula_of_Jack} by \eqref{eq:matrix_representation_of_Jack_normalization} and \eqref{eq:standard_result_of_Zonal_spherical_poly}.
\end{proof}

In case that $\A_n = \diag(a, 0, \dots, 0)$, \eqref{eq:general_integral_formula_of_Jack} is much simplified by the property of Jack polynomials:
\begin{prop}{\cite[Proposition 2.5]{Stanley89}} \label{prop:Stanley_89}
If the number of nonzero variables among $a_1, \dots, a_n$ is less than $l(\kappa)$, then $C^{(\alpha)}_{\kappa}(a_1, \dots, a_n) = 0$ for any $\alpha > 0$.
\end{prop}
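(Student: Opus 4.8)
The plan is to deduce this vanishing from the triangular expansion of Jack polynomials in the basis of monomial symmetric polynomials. Recall (see \cite[Chapter VI]{Macdonald95}) that the Jack polynomial $P^{(\alpha)}_{\kappa}$ in $n$ variables has the form
\begin{equation}
  P^{(\alpha)}_{\kappa}(x_1, \dots, x_n) = m_{\kappa}(x_1, \dots, x_n) + \sum_{\mu} u_{\kappa\mu}(\alpha) \, m_{\mu}(x_1, \dots, x_n),
\end{equation}
where $m_{\mu}$ is the monomial symmetric polynomial and the sum is over partitions $\mu$ of $\lvert \kappa \rvert$ strictly below $\kappa$ in the dominance order. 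Since $C^{(\alpha)}_{\kappa}$ equals $P^{(\alpha)}_{\kappa}$ up to a scalar factor depending only on $\kappa$ and $\alpha$ but not on the variables, it is enough to show that $m_{\mu}(a_1, \dots, a_n) = 0$ for $\mu = \kappa$ and for every $\mu$ occurring in this expansion, whenever fewer than $l(\kappa)$ of the $a_i$ are nonzero.

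The first step is the combinatorial lemma that if $\mu$ and $\kappa$ are partitions of the same integer with $\mu$ dominated by $\kappa$, then $l(\mu) \geq l(\kappa)$. To see this, assume $l(\mu) < l(\kappa)$; summing the dominance inequalities $\sum_{i \leq j} \mu_i \leq \sum_{i \leq j} \kappa_i$ at $j = l(\mu)$ gives $\lvert \mu \rvert = \sum_{i=1}^{l(\mu)} \mu_i \leq \sum_{i=1}^{l(\mu)} \kappa_i \leq \lvert \kappa \rvert = \lvert \mu \rvert$, hence $\sum_{i=1}^{l(\mu)} \kappa_i = \lvert \kappa \rvert$, forcing $\kappa_i = 0$ for $i > l(\mu)$ and thus $l(\kappa) \leq l(\mu)$, a contradiction. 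The second step is the elementary observation that a monomial occurring in $m_{\mu}(x_1, \dots, x_n)$ has exactly $l(\mu)$ variables appearing with positive exponent; so if at most $m < l(\mu)$ of the arguments are nonzero, every monomial of $m_{\mu}$ has at least one of its $l(\mu)$ distinguished variables set to zero and therefore vanishes, giving $m_{\mu}(a_1, \dots, a_n) = 0$.

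Combining the two steps: if fewer than $l(\kappa)$ of $a_1, \dots, a_n$ are nonzero, then $m_{\kappa}(a_1, \dots, a_n) = 0$, and for each $\mu$ strictly below $\kappa$ in the dominance order the lemma gives $l(\mu) \geq l(\kappa)$, so $m_{\mu}(a_1, \dots, a_n) = 0$ as well; hence $P^{(\alpha)}_{\kappa}(a_1, \dots, a_n) = 0$ and therefore $C^{(\alpha)}_{\kappa}(a_1, \dots, a_n) = 0$. The only external input is the dominance-triangularity of Jack polynomials, which is part of their definition, so I do not anticipate any real obstacle; the one point to be careful about is that the constant relating the ``C''-normalization to the ``P''-normalization is independent of the variables, which it is.
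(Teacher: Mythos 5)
Your argument is correct. The paper itself does not prove this statement; it simply cites Stanley's Proposition 2.5 and uses it as a black box, so there is no in-paper proof to compare against. Your reconstruction is the standard one and is exactly how this fact is proved (dominance triangularity $P^{(\alpha)}_{\kappa} = m_{\kappa} + \sum_{\mu \prec \kappa} u_{\kappa\mu}(\alpha)\, m_{\mu}$, the observation that $\mu \preceq \kappa$ with $\lvert \mu \rvert = \lvert \kappa \rvert$ forces $l(\mu) \ge l(\kappa)$, and the vanishing of $m_{\mu}$ when fewer than $l(\mu)$ arguments are nonzero). Both sub-lemmas are argued correctly, and the passage from $P^{(\alpha)}_{\kappa}$ to $C^{(\alpha)}_{\kappa}$ is harmless since the conversion factor depends only on $\kappa$ and $\alpha$.
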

Therefore, in the case $\A_n = \diag(a, 0, \dots, 0)$,
\begin{equation} \label{eq:Jack_poly_representation_of_group_integral}
  \int_{Q \in G_{\beta}(n)} e^{n\Re\Tr(\A_{n,\beta}Q\Lambda_nQ^{-1})} dQ = \sum^{\infty}_{k=0} \frac{(\frac{\beta}{2}n)^k}{k!}\frac{C^{(2/\beta)}_{(k)}(a, 0, \dots, 0)C^{(2/\beta)}_{(k)}(\lambda_1, \dots, \lambda_n)}{C^{(2/\beta)}_{(k)}(1, \dots, 1)}.
\end{equation}
$C^{(2/\beta)}_{(k)}(a, 0, \dots, 0)$ and $C^{(2/\beta)}_{(k)}(1, \dots, 1)$ can be calculated explicitly \cite[Table 5]{Dumitriu-Edelman-Shuman07}
\begin{equation} \label{eq:special_values_of_Jack_polynomial}
  C^{(2/\beta)}_{(k)}(a, 0, \dots, 0) = a^k, \quad C^{(2/\beta)}_{(k)}(1, \dots, 1) = \prod^{k-1}_{j=0} \frac{n+\frac{2}{\beta}j}{1+\frac{2}{\beta}j}.
\end{equation}
Thus we have
\begin{equation}
\int_{Q \in G_{\beta}(n)} e^{n\Re\Tr(\A_{n,\beta}Q\Lambda_nQ^{-1})} dQ = \sum^{\infty}_{k=0} \prod^{k-1}_{j=0} \frac{1+\frac{2}{\beta}j}{n+\frac{2}{\beta}j} \frac{(\frac{\beta}{2}an)^k}{k!} C^{(2/\beta)}_{(k)}(\lambda_1, \dots, \lambda_n).
\end{equation}

By \cite[Proposition 2.1]{Stanley89} and the conversion between the ``J''-normalization and ``C''-normalization \cite[Table 6]{Dumitriu-Edelman-Shuman07}, we have the identity of formal series in $a$
\begin{equation} \label{eq:combinatorial_from_Stanley}
  \sum^{\infty}_{k=0} \frac{\prod^{k-1}_{j=0}(1+\frac{2}{\beta}j)}{(\frac{2}{\beta})^k k!} C^{(2/\beta)}_{(k)}(a, 0, \dots, 0) C^{(2/\beta)}_{(k)}(\lambda_1, \dots, \lambda_n) = \prod^n_{j=1} \frac{1}{(1-a\lambda_j)^{\beta/2}}.
\end{equation}
Hence we obtain by Cauchy's integral formula and \eqref{eq:special_values_of_Jack_polynomial}
\begin{equation} \label{eq:Jack_poly_evaluated_by_residue}
  \frac{\prod^{k-1}_{j=0}(1+\frac{2}{\beta}j)}{(\frac{2}{\beta})^k k!} C^{(2/\beta)}_{(k)}(\lambda_1, \dots, \lambda_n) = \frac{1}{2\pi i} \oint_0 \prod^n_{j=1} \frac{1}{(1-z\lambda_j)^{\beta/2}} \frac{dz}{z^{k+1}},
\end{equation}
where the contour is taken to be a small circle around $0$ such that all $\lambda^{-1}_j$ ($j=1, \dots, n$) are in the exterior of the contour. By \eqref{eq:special_values_of_Jack_polynomial} and \eqref{eq:Jack_poly_evaluated_by_residue}, we obtain
\begin{multline} \label{eq:integral_over_G_beta_into_contour_integral}
  \sum^{\infty}_{k=0} \frac{(\frac{\beta}{2}n)^k}{k!}\frac{C^{(2/\beta)}_{(k)}(a, 0, \dots, 0)C^{(2/\beta)}_{(k)}(\lambda_1, \dots, \lambda_n)}{C^{(2/\beta)}_{(k)}(1, \dots, 1)} = \\
  \frac{1}{2\pi i} \oint_0 \prod^n_{j=1} \frac{1}{(1-z\lambda_j)^{\beta/2}} \left( \sum^{\infty}_{k=0} (\frac{\beta}{2}an)^k \prod^{k-1}_{j=0} \left( \frac{1}{\frac{\beta}{2}n+j} \right) \frac{1}{z^{k+1}} \right) dz.
\end{multline}
Note that \eqref{eq:integral_over_G_beta_into_contour_integral} is valid for all $\beta > 0$.

Suppose $\beta$ is a positive number. Let $m$ be an integer and $\xi \in (0,1]$, such that
\begin{equation} \label{eq:defn_of_m,xi,alpha}
  \frac{\beta}{2} n = m+\xi,
\end{equation}
we have
\begin{equation} \label{eq:contour_integral_identity_of_general_beta}
  \begin{split}
    & \frac{1}{2\pi i} \oint_0 \prod^n_{j=1} \frac{1}{(1-z\lambda_j)^{\beta/2}} \left( \sum^{\infty}_{k=0} (\frac{\beta}{2} an)^k \prod^{k-1}_{j=0} \left( \frac{1}{\frac{\beta}{2} n+j} \right) \frac{1}{z^{k+1}} \right) dz \\
    = & \frac{1}{2\pi i} \oint_0 \prod^n_{j=1} \frac{1}{(1-z\lambda_j)^{\beta/2}} \left( \frac{(\xi)_m}{(\frac{\beta}{2} an)^m} z^{m-1} \sum^{\infty}_{k=m} \frac{1}{(\xi)_k} \left( \frac{\frac{\beta}{2} an}{z} \right)^k \right) dz \\
    = & \frac{1}{2\pi i} \oint_0 \prod^n_{j=1} \frac{1}{(1-z\lambda_j)^{\beta/2}} \left( \frac{(\xi)_m}{(\frac{\beta}{2} an)^m} z^{m-1} \left( M(1, \xi, \frac{\frac{\beta}{2} an}{z}) - \sum^{m-1}_{k=0} \frac{1}{(\xi)_k} \left( \frac{\frac{\beta}{2} an}{z} \right)^k \right) \right) dz \\
    = & \frac{(\xi)_m}{(\frac{\beta}{2} an)^m} \frac{1}{2\pi i} \oint_0 \prod^n_{j=1} \frac{1}{(1-z\lambda_j)^{\beta/2}} z^{m-1} M(1, \xi, \frac{\frac{\beta}{2} an}{z}) dz \\
  = & \frac{(\xi)_m}{(\frac{\beta}{2} an)^m} \frac{1}{2\pi i} \oint_{\infty} \left( \prod^n_{j=1} \frac{1}{(w-\lambda_j)^{\beta/2}} \right) w^{\xi-1} M(1, \xi, \frac{\beta}{2} anw) dw,
  \end{split}
\end{equation}
where the contour is large enough so that all $\lambda_j$ are in its interior, and $0$ is in its interior if $\xi \neq 1$. Here
\begin{equation}
  (c)_i := c(c+1)(c+2) \dots (c+n-1)
\end{equation}
is the Pochhammer symbol (``rising factorial''), and
\begin{equation}
  M(c_1,c_2,z) := \sum^{\infty}_{i=0} \frac{(c_1)_i}{(c_2)_i i!} z^i
\end{equation}
is the Kummer's (confluent hypergeometric) function. See \cite[13.1.2]{Abramowitz-Stegun64}. Alternatively, for $\xi \neq 1$
\begin{equation}
  M(1, \xi, z) = (\xi-1)z^{1-\xi} e^z \gamma(\xi-1, z)
\end{equation}
where $\gamma(s,z)$ is the incomplete gamma function (\cf\ \cite[6.5.12]{Abramowitz-Stegun64}), and for $\xi = 1$
\begin{equation}
  M(1,1,z) = e^z.
\end{equation}

We note that in the cases $\beta = 2,4$, or in the case that $\beta = 1$ and $n$ is even, $\xi = 1$ and $m = \frac{\beta}{2}n-1$. Thus by \eqref{eq:pdf_of_eigenvalues}, \eqref{eq:Jack_poly_representation_of_group_integral}, \eqref{eq:integral_over_G_beta_into_contour_integral} and \eqref{eq:contour_integral_identity_of_general_beta}, we have
\begin{equation} \label{eq:pdf_in_contour_integral_general}
  p_{n,\beta}(\lambda_1,\dots,\lambda_n) = \bar{C}_{n,\beta} \lvert \Delta(\lambda_1, \dots, \lambda_n) \rvert^{\beta} \prod^n_{j=1} e^{-\frac{\beta}{2}nV(\lambda_j)} \frac{1}{2\pi i} \oint_{\infty} \left( \prod^n_{j=1} \frac{1}{(w-\lambda_j)^{\beta/2}} \right) e^{\frac{\beta}{2}anw} dw
\end{equation}
where $\bar{C}_{n,\beta}$ is a constant, and the contour encloses all $\lambda_j$ in its interior. 

If $a > 0$ and $\max_{1 \leq j \leq n} \lambda_j \leq u$, the contour in \eqref{eq:pdf_in_contour_integral_general} can be taken as $\Sigma^z_{s_1,s_2}$ defined in \eqref{eq:parametrization_of_Sigma} or $\Pi^z_s$ defined in \eqref{eq:parametrization_of_Pi}, where $z > u$. From the joint \pdf\ of $\lambda_j$, it is straightforward to find the \pdf\ formula \eqref{eq:pdf_of_largest_eigen_supercritical} of the largest eigenvalue $\ximax$. Thus Proposition  \ref{prop:contour_rep_of_pdf_of_largest_eigenvalue} is proved.

\bigskip

If $\beta = 1$ and $n$ is odd, similarly we obtain that the joint \pdf\ of The eigenvalues in $n$-dimensional $1$-external source model is
\begin{multline} \label{eq:pdf_in_contour_integral_general_beta=1_n_odd}
  p_{n,1}(\lambda_1,\dots,\lambda_n) = \\
  \bar{C}_{n,1} \lvert \Delta(\lambda_1, \dots, \lambda_n) \rvert \prod^n_{j=1} e^{-\frac{1}{2}nV(\lambda_j)} \frac{1}{2\pi i}\oint_{\infty} \left( \prod^n_{j=1} \frac{1}{(w-\lambda_j)^{1/2}} \right) w^{-1/2} M(1, \frac{1}{2}, \frac{1}{2}anw).
\end{multline}

In Section \ref{sec:the_pdf_of_the_largest_eigenvalue} we compute the limiting distribution of $\ximax$ based on \eqref{eq:pdf_in_contour_integral_general}. Since the asymptotic property of $M(1, \frac{1}{2},z)$ is similar to that of $e^z = M(1,1,z)$ for large $z$, we can compute the limiting distribution of $\ximax$ based on \eqref{eq:pdf_in_contour_integral_general_beta=1_n_odd} by the same method that we use in Section \ref{sec:the_pdf_of_the_largest_eigenvalue}. Hence we can prove that Theorems \ref{thm:limiting_position}, \ref{thm:limiting_distr_nondegenerate}, \ref{thm:two_normal_max_points} and \ref{thm:one_abnormal_max_points} hold when $\beta = 1$ and $n$ is odd.

\bigskip

Inspired by the Coulomb gas interpretation of the distribution of eigenvalues in random matrix models (see \cite{Forrester10}), we generalize the $\beta$-external source model to any $\beta > 0$ as the probability distribution of $n$ points on the real line, such that 
\begin{multline} \label{eq:general_beta_external_source_model}
  p_{n,\beta}(\lambda_1, \dots, \lambda_n) = \frac{1}{C_{n,\beta}} \lvert \Delta(\lambda_1, \dots, \lambda_n) \rvert^{\beta} \prod^n_{j=1} e^{-\frac{\beta}{2}nV(\lambda_j)} \\
  \times \sum^{\infty}_{k=0} \frac{(\frac{\beta}{2}n)^k}{k!} \sum_{\kappa \vdash k, l(\kappa) \leq n} \frac{C^{(2/\beta)}_{\kappa}(a_1, \dots, a_n) C^{(2/\beta)}_{\kappa}(\lambda_1, \dots, \lambda_n)}{C^{(2/\beta)}_{\kappa}(1, \dots, 1)},
\end{multline}
where $V$ is the potential and $a_1, \dots, a_n$ are external source parameters. By \eqref{eq:pdf_of_eigenvalues} and \eqref{eq:general_integral_formula_of_Jack}, \eqref{eq:general_beta_external_source_model} gives the distribution of eigenvalues of the random matrix models with external source with $\beta = 1,2,4$. But for other value of $\beta$, it has no matrix interpretation. By Proposition \ref{prop:Stanley_89}, \eqref{eq:special_values_of_Jack_polynomial} \eqref{eq:combinatorial_from_Stanley} and \eqref{eq:contour_integral_identity_of_general_beta}, we find that if one external source parameter is $a$ and all others are $0$, the distribution of the right-most point in the general $\beta$-external source model is
\begin{multline} \label{eq:pdf_in_contour_integral_general_beta=1_n_odd}
  p_{n,\beta}(\lambda_1,\dots,\lambda_n) = \bar{C}_{n,\beta} \lvert \Delta(\lambda_1, \dots, \lambda_n) \rvert^{\beta} \prod^n_{j=1} e^{-\frac{\beta}{2}nV(\lambda_j)} \\
  \times \frac{1}{2\pi i}\oint_{\infty} \left( \prod^n_{j=1} \frac{1}{(w-\lambda_j)^{\beta/2}} \right) w^{\xi-1} M(1, \xi, \frac{\beta}{2} anw),
\end{multline}
where $\xi$ is defined by \eqref{eq:defn_of_m,xi,alpha}. It is of interest to compare this formula with the rank $1$ spiked Gaussian and Laguerre $\beta$ ensembles studied in \cite{Bloemendal-Virag11}.  In the very recent preprint \cite{Forrester11}, Forrester obtained similar formulas for $\beta$-Wishart ensembles.

\section{Computation of $\M_2(u)$} \label{sec:simplification_of_M_2(x)}

In this Appendix, we show that when $\beta = 2$, the $p_{j,2}(\alpha)$ in Theorem \ref{thm:two_normal_max_points} are the same as the $p^{(j)}_{1,n}(\alpha)$ in \cite[Formula (52)]{Baik-Wang10a}, and the $\tilde{p}_{j,2}(\alpha)$ in Theorem \ref{thm:one_abnormal_max_points} are the same as the $p^{(j)}_{1,n}(\alpha)$ in \cite[Formula (63)]{Baik-Wang10a}. Hence we verify that the result obtained in this paper agrees with the result in \cite{Baik-Wang10a}. Since $p_{j,2}(\alpha)$ and $\tilde{p}_{j,2}(\alpha)$ in our paper are defined by $\Gfn(u;a)$ and $\M_2(u)$ in the same way that $p_{j,2}(\alpha)$ in \cite[Formulas (52) and (63)]{Baik-Wang10a} are defined by $\Gfn(u;a)$ and $\M_{1,n}(u)$, (see \eqref{eq:defn_of_B_i_two_normal_points}, \eqref{eq:defn_of_p_i_two_normal_points}, \eqref{eq:formula_of_D_1,beta}, \eqref{eq:formula_of_D_2,beta} and \eqref{eq:defn_of_tilde_p} in this paper, and \cite[Formulas (177), (178) and (182)]{Baik-Wang10a} and the comments in \cite{Baik-Wang10a} below \cite[Formula (52)]{Baik-Wang10a}), we need only to show that $\M_2(u)/\M_{1,n}(u)$ is a nonzero constant for $u > \redge$. ($\redge$ is the right edge of $J$, the support of the equilibrium measure $\mu$.) The explicit formula of $\M_{1,n}(u)$ is given in \cite[Formula (315)]{Baik-Wang10a} in the case that the support of equilibrium measure is one interval. In this appendix we obtain that for $u > \redge$
\begin{equation} \label{eq:final_formula_of_M_2}
  \M_2(u) = C(\gamma(u)-\gamma(u)^{-1}),
\end{equation}
where $C$ is a constant independent of $u$, and, if $J = [b_1, b_2]$,
\begin{equation}
  \gamma(u) = \left( \frac{u-b_1}{u-b_2} \right)^{1/4}.
\end{equation}
Thus we prove the statements above.

To make the notations simpler, we assume $J = [-1, 1]$ in the proof of \eqref{eq:final_formula_of_M_2}. The generalization to arbitrary $J$ is straightforward.

From formula \eqref{eq:defn_of_M_beta}, \eqref{eq:defn_of_R_beta(u)} and \eqref{eq:defn_of_quadratic_A}, we have that
\begin{equation} \label{eq:explicit_formula_of_M_2}
  \begin{split}
    \M_2(u) = & \exp \left[ A(-V(x) + \log(u-x)) - \int^1_{-1} -V(x) + \log(u-x) d\mu(x) \right] \\
    = & e^{\int^1_{-1} V(x) d\mu(x) - \frac{1}{4\pi^2} \int^1_{-1} \frac{V(x)}{\sqrt{1-x^2}} \pv \int^1_{-1} \frac{V'(s)\sqrt{1-s^2}}{s-x} dsdx} \\
  & \times e^{ \left\{ \begin{split} & \scriptstyle \frac{1}{4\pi^2} \int^1_{-1} \frac{\log(u-x)}{\sqrt{1-x^2}} \pv \int^1_{-1} \frac{V'(s)\sqrt{1-s^2}}{s-x} dsdx \\ & \scriptstyle - \frac{1}{4\pi^2} \int^1_{-1} \frac{V(x)}{\sqrt{1-x^2}} \pv \int^1_{-1} \frac{\sqrt{1-s^2}}{(u-s)(s-x)} dsdx  - \int^1_{-1} \log(u-x) d\mu(x) \end{split} \right\}} \\
  & \times e^{\frac{1}{4\pi^2} \int^1_{-1} \frac{\log(u-x)}{\sqrt{1-x^2}} \pv \int^1_{-1} \frac{\sqrt{1-s^2}}{(u-s)(s-x)} dsdx}.
  \end{split}
\end{equation}
The right-hand side of \eqref{eq:explicit_formula_of_M_2} is divided into the product of three terms. The first one is a constant, and we compute the other two terms below.

First we compute the third term in \eqref{eq:explicit_formula_of_M_2}. Exchanging the order of integration, we have
\begin{equation} \label{eq:integral-written_in_F(us)}
  \frac{1}{4\pi^2} \int^1_{-1} \log(u-x) \frac{1}{\sqrt{1-x^2}} \pv \int^1_{-1} \frac{\sqrt{1-s^2}}{(u-s)(s-x)} dsdx =\frac{1}{4\pi^2} \int^1_{-1} \frac{\sqrt{1-s^2}}{u-s} F(u,s) ds,
\end{equation}
where
\begin{equation}
F(u,s) = \pv \int^1_{-1} \log(u-x) \frac{1}{(s-x)\sqrt{1-x^2}} dx.
\end{equation}
To evaluate $F(u,s)$, we note (with the change of variable $x = \sin\theta$)
\begin{equation} \label{eq:evaluation_of_partial_F(us)}
\begin{split}
  \frac{\partial}{\partial u} F(u,s) = & \pv \int^1_{-1} \frac{1}{(u-x)(s-x)\sqrt{1-x^2}} dx \\
  = & \frac{1}{s-u} \left[ \int^{\frac{\pi}{2}}_{-\frac{\pi}{2}} \frac{1}{u-\sin\theta} d\theta - \pv \int^{\frac{\pi}{2}}_{-\frac{\pi}{2}} \frac{1}{s-\sin\theta} d\theta \right] \\
  = & \frac{1}{s-u} \left[ \frac{\pi}{\sqrt{u^2-1}} - 0 \right] = \frac{\pi}{(s-u)\sqrt{u^2-1}}.
\end{split}
\end{equation}
On the other hand,
\begin{equation} \label{eq:separation_of_F(us)}
\begin{split}
F(u,s) = & \log u \pv \int^1_{-1} \frac{1}{(s-x)\sqrt{1-x^2}} dx + \pv \int^1_{-1} \log\frac{u-x}{u} \frac{1}{(s-x)\sqrt{1-x^2}} dx \\
= & 0 + \pv \int^1_{-1} \log\frac{u-x}{u} \frac{1}{(s-x)\sqrt{1-x^2}} dx,
\end{split}
\end{equation}
and it implies that $F(u,s) \to 0$ as $u \to \infty$. Thus from \eqref{eq:evaluation_of_partial_F(us)} and \eqref{eq:separation_of_F(us)}
\begin{equation} \label{eq:explicit_formula_of_F(u,s)}
\begin{split}
F(u,s) = \frac{\pi}{\sqrt{1-s^2}} \left[ \arcsin s - \arcsin \frac{us-1}{u-s} \right].
\end{split}
\end{equation}
Because
\begin{align} 
  \int^1_{-1} \frac{\arcsin s}{u-s} ds = & \frac{\pi}{2} \log(u^2-1) + \pi \log \frac{u+\sqrt{u^2-1}}{2}, \label{eq:evaluation_of_easier_integral_involving_arcsin} \\
  \int^1_{-1} \frac{\arcsin \frac{us-1}{u-s}}{u-s} ds = & -\frac{\pi}{2} \log(u^2-1) - \pi \log \frac{u+\sqrt{u^2-1}}{2}. \label{eq:evaluation_of_harder_integral_involving_arcsin}
\end{align}
By \eqref{eq:integral-written_in_F(us)}, \eqref{eq:explicit_formula_of_F(u,s)}, \eqref{eq:evaluation_of_easier_integral_involving_arcsin} and \eqref{eq:evaluation_of_harder_integral_involving_arcsin} we find
\begin{equation} \label{eq:the_2nd_term_of_M_2}
  e^{\frac{1}{4\pi^2} \int \frac{\log(u-x)}{\sqrt{1-x^2}} \pv \int \frac{\sqrt{1-s^2}}{(u-s)(s-x)} dsdx} = \frac{1}{2} \left[ \left( \frac{u+1}{u-1} \right)^{1/4} + \left( \frac{u-1}{u+1} \right)^{1/4} \right].
\end{equation}

Next we compute the second term in \eqref{eq:explicit_formula_of_M_2}. 
For the equilibrium measure $d\mu(x) = \Psi(x)dx$ on its support $[-1,1]$, By \cite[Formula 6.135]{Deift99}, we have 
\begin{equation} \label{eq:relation_between_psi_and_G}
\Psi(x) = \Re \tilde{G}_+(x) = -\Re \tilde{G}_-(x) = \frac{1}{2}(\tilde{G}_+(x) - \tilde{G}_-(x)),
\end{equation}
where $\tilde{G}(z)$ is an analytic function \cite[Formula 6.141]{Deift99}
\begin{equation} \label{eq:formula_of_tilde_G}
  \tilde{G}(z) = \frac{\sqrt{z^2-1}}{2\pi^2 i} \int^1_{-1} \frac{V'(s)}{(s-z)\sqrt{1-s^2}} ds
\end{equation}
is an analytic function in $\compC \setminus [-1,1]$ and $\sqrt{z^2-1}$ is analytic in $\compC \setminus [-1,1]$ with $\sqrt{z^2-1} \sim z$, $z \to \infty$.

Thus by \eqref{eq:relation_between_psi_and_G}, \eqref{eq:formula_of_tilde_G} and exchanging the order of integration,
\begin{equation}
  \begin{split}
    \int^1_{-1} \log(u-s) d\mu(x) = & \frac{1}{2\pi^2} \int^1_{-1} \log(u-s) \sqrt{1-x^2} \pv \int^1_{-1} \frac{V'(s)}{\sqrt{1-s^2}(s-x)} dsdx \\
    = & \frac{1}{2\pi^2} \int^1_{-1} \frac{V'(s)}{\sqrt{1-s^2}} G(u,s) ds,
  \end{split}
\end{equation}
where
\begin{equation}
  G(u,s) := \pv \int \log(u-x)\sqrt{1-x^2} \frac{1}{s-x} dx.
\end{equation}
Similar to \eqref{eq:explicit_formula_of_F(u,s)}, we have
\begin{multline} \label{eq:evaluation_of_G(u,s)}
  G(u,s) = \pi \left[ \sqrt{u^2-1}-u  + s\log \frac{u+\sqrt{u^2-1}}{2} \right. \\
\left. -\sqrt{1-s^2} \arctan \frac{su-1}{\sqrt{(1-s^2)(1-u^2)}} + \sqrt{1-s^2} \arctan \frac{s}{\sqrt{1-s^2}} \right]
\end{multline}
Therefore
\begin{multline} \label{eq:evaluation_of_integral_of_equi_measure}
  \int^1_{-1} \log(u-s) d\mu(x)  = \frac{1}{2\pi} \sqrt{u^2-1} \int^1_{-1} \frac{V'(s)}{\sqrt{1-s^2}} ds + \frac{1}{2\pi} \log \frac{u+\sqrt{u^2-1}}{2} \int^1_{-1} \frac{sV'(s)}{\sqrt{1-s^2}} ds \\
  + \frac{1}{2\pi} \int^1_{-1} V'(s) \left( \arctan \frac{s}{\sqrt{1-s^2}} - \arctan \frac{su-1}{\sqrt{(1-s^2)(1-u^2)}} \right) ds.
\end{multline}
Using \eqref{eq:evaluation_of_G(u,s)} and integration by parts,
\begin{equation} \label{eq:evaluation_of_integral_in_V}
  \begin{split}
    & \frac{1}{4\pi^2} \int^1_{-1} \frac{V(x)}{\sqrt{1-x^2}} \pv \int^1_{-1} \frac{\sqrt{1-s^2}}{(u-s)(x-s)} dsdx \\
    = & \frac{1}{4\pi^2} \int^1_{-1} \frac{V(s)}{\sqrt{1-s^2}} \frac{\partial}{\partial u} G(u,s) ds \\
    = & -\frac{1}{8}(V(1)-V(-1)) + \frac{1}{4\pi} \int^1_{-1} V'(s) \left( \arctan \frac{s}{\sqrt{1-s^2}} + 2 \arctan \sqrt{\frac{(u+1)(1-s)}{(u-1)(1+s)}} \right) ds.
  \end{split}
\end{equation}
Similarly,
\begin{equation} \label{eq:evaluation_of_integral_in_V'}
  \begin{split}
    \frac{1}{4\pi^2} \int^1_{-1} \frac{\log(u-x)}{\sqrt{1-x^2}} \pv \int^1_{-1} \frac{V'(s)\sqrt{1-s^2}}{s-x} dsdx = & \frac{1}{4\pi^2} \int^1_{-1} V'(s)\sqrt{1-s^2} F(u,s) ds \\
    = & \frac{1}{4\pi^2} \int^1_{-1} V'(s) \left( \arcsin s - \arcsin \frac{us-1}{u-s} \right) ds.
  \end{split}
\end{equation}
Thus by \eqref{eq:evaluation_of_integral_of_equi_measure}, \eqref{eq:evaluation_of_integral_in_V}, \eqref{eq:evaluation_of_integral_in_V'} and the identity
\begin{multline}
  \arcsin s - \arcsin \frac{us-1}{u-s} + \arctan \frac{s}{\sqrt{1-s^2}} + 2\arctan \sqrt{\frac{(u+1)(1-s)}{(u-1)(1+s)}} \\
  - 2\arctan \frac{s}{\sqrt{1-s^2}} + 2 \arctan \frac{su-1}{\sqrt{(1-s^2)(u^2-1)}} = -\frac{\pi}{2},
\end{multline}
we have
\begin{equation}
  \begin{split}
    & \frac{1}{4\pi^2} \int^1_{-1} \frac{\log(u-x)}{\sqrt{1-x^2}} \pv \int^1_{-1} \frac{V'(s)\sqrt{1-s^2}}{s-x} dsdx \\
    & - \frac{1}{4\pi^2} \int^1_{-1} \frac{V(x)}{\sqrt{1-x^2}} \pv \int^1_{-1} \frac{\sqrt{1-s^2}}{(u-s)(s-x)} dsdx  - \int^1_{-1} \log(u-x) d\mu(x) \\
    = & -\frac{1}{2\pi} \sqrt{u^2-1} \int^1_{-1} \frac{V'(s)}{\sqrt{1-s^2}} ds - \frac{1}{2\pi} \log \frac{u+\sqrt{u^2-1}}{2} \int^1_{-1} \frac{sV'(s)}{\sqrt{1-s^2}} ds.
  \end{split}
\end{equation}
By the property \cite[Formulas 6.143 and 6.144]{Deift99}
\begin{gather}
\int^1_{-1} \frac{V'(x)}{\sqrt{1-x^2}} d\mu(x) = 0, \\
\int^1_{-1} \frac{xV'(x)}{\sqrt{1-x^2}} d\mu(x) = 2\pi,
\end{gather}
we further simplify the second factor on the right-hand of \eqref{eq:explicit_formula_of_M_2} as
\begin{equation} \label{eq:the_3rd_term_of_M_2}
  e^{ \left\{ \begin{split} & \scriptstyle \frac{1}{4\pi^2} \int^1_{-1} \frac{\log(u-x)}{\sqrt{1-x^2}} \pv \int^1_{-1} \frac{V'(s)\sqrt{1-s^2}}{s-x} dsdx \\ & \scriptstyle - \frac{1}{4\pi^2} \int^1_{-1} \frac{V(x)}{\sqrt{1-x^2}} \pv \int^1_{-1} \frac{\sqrt{1-s^2}}{(u-s)(s-x)} dsdx  - \int^1_{-1} \log(u-x) d\mu(x) \end{split} \right\}} = 2(u-\sqrt{u^2-1}).
\end{equation}

Substituting \eqref{eq:the_2nd_term_of_M_2} and \eqref{eq:the_3rd_term_of_M_2} into \eqref{eq:explicit_formula_of_M_2}, we obtain
\begin{equation}
  M_2(u) = \left[ \left( \frac{u+1}{u-1} \right)^{1/4} - \left( \frac{u-1}{u+1} \right)^{1/4} \right] 2 e^{\int^1_{-1} V(x) d\mu(x) - \frac{1}{4\pi^2} \int^1_{-1} \frac{V(x)}{\sqrt{1-x^2}} \pv \int^1_{-1} \frac{V'(s)\sqrt{1-s^2}}{s-x} dsdx},
\end{equation}
and prove \eqref{eq:final_formula_of_M_2} in the case that $J = [-1, 1]$. The general case can be proved by a simple rescaling.

\bibliographystyle{abbrv}
\bibliography{bibliography}

\def\cydot{\leavevmode\raise.4ex\hbox{.}}
\begin{thebibliography}{10}

\bibitem{Abramowitz-Stegun64}
M.~Abramowitz and I.~A. Stegun.
\newblock {\em Handbook of mathematical functions with formulas, graphs, and
  mathematical tables}, volume~55 of {\em National Bureau of Standards Applied
  Mathematics Series}.
\newblock For sale by the Superintendent of Documents, U.S. Government Printing
  Office, Washington, D.C., 1964.

\bibitem{Baik-Silverstein06}
J.~Baik and J.~W. Silverstein.
\newblock Eigenvalues of large sample covariance matrices of spiked population
  models.
\newblock {\em J. Multivariate Anal.}, 97(6):1382--1408, 2006.

\bibitem{Baik-Wang10a}
J.~{Baik} and D.~{Wang}.
\newblock {On the largest eigenvalue of a Hermitian random matrix model with
  spiked external source I. Rank 1 case}.
\newblock {\em Int.\ Math.\ Res.\ Not.\ IMRN}, (22):5164--5240, 2011.

\bibitem{Benaych_Georges-Nadakuditi11}
F.~Benaych-Georges and R.~R. Nadakuditi.
\newblock The eigenvalues and eigenvectors of finite, low rank perturbations of
  large random matrices.
\newblock {\em Adv. Math.}, 227(1):494--521, 2011.

\bibitem{Bertola-Buckingham-Lee-Pierce11}
M.~{Bertola}, R.~{Buckingham}, S.~{Lee}, and V.~U. {Pierce}.
\newblock {Spectra of random Hermitian matrices with a small-rank external
  source: supercritical and subcritical regimes}, 2010.
\newblock arXiv:1009.3894.

\bibitem{Bertola-Buckingham-Lee-Pierce11a}
M.~{Bertola}, R.~{Buckingham}, S.~{Lee}, and V.~U. {Pierce}.
\newblock {Spectra of Random Hermitian Matrices with a Small-Rank External
  Source: The critical and near-critical regimes}, 2010.
\newblock arXiv:1011.4983.

\bibitem{Bleher-Delvaux-Kuijlaars10}
P.~Bleher, S.~Delvaux, and A.~B.~J. Kuijlaars.
\newblock Random matrices model with external source and a constrained vector
  equilibrium problem.
\newblock {\em Comm. Pure Appl. Math.}, 64(1):116--160, 2011.

\bibitem{Bloemendal-Virag11}
A.~{Bloemendal} and B.~{Vir{\'a}g}.
\newblock {Limits of spiked random matrices I}, 2010.
\newblock arXiv:1011.1877.

\bibitem{Brezin-Hikami96}
E.~Br{\'e}zin and S.~Hikami.
\newblock Correlations of nearby levels induced by a random potential.
\newblock {\em Nuclear Phys. B}, 479(3):697--706, 1996.

\bibitem{Brezin-Hikami98}
E.~Br{\'e}zin and S.~Hikami.
\newblock Level spacing of random matrices in an external source.
\newblock {\em Phys. Rev. E (3)}, 58(6, part A):7176--7185, 1998.

\bibitem{Capitaine-Donati_Martin-Feral09}
M.~Capitaine, C.~Donati-Martin, and D.~F{\'e}ral.
\newblock The largest eigenvalues of finite rank deformation of large {W}igner
  matrices: convergence and nonuniversality of the fluctuations.
\newblock {\em Ann. Probab.}, 37(1):1--47, 2009.

\bibitem{Deift-Gioev07a}
P.~Deift and D.~Gioev.
\newblock Universality at the edge of the spectrum for unitary, orthogonal, and
  symplectic ensembles of random matrices.
\newblock {\em Comm. Pure Appl. Math.}, 60(6):867--910, 2007.

\bibitem{Deift-Kriecherbauer-McLaughlin98}
P.~Deift, T.~Kriecherbauer, and K.~T.-R. McLaughlin.
\newblock New results on the equilibrium measure for logarithmic potentials in
  the presence of an external field.
\newblock {\em J. Approx. Theory}, 95(3):388--475, 1998.

\bibitem{Deift-Kriecherbauer-McLaughlin-Venakides-Zhou99}
P.~Deift, T.~Kriecherbauer, K.~T.-R. McLaughlin, S.~Venakides, and X.~Zhou.
\newblock Uniform asymptotics for polynomials orthogonal with respect to
  varying exponential weights and applications to universality questions in
  random matrix theory.
\newblock {\em Comm. Pure Appl. Math.}, 52(11):1335--1425, 1999.

\bibitem{Deift99}
P.~A. Deift.
\newblock {\em Orthogonal polynomials and random matrices: a
  {R}iemann-{H}ilbert approach}, volume~3 of {\em Courant Lecture Notes in
  Mathematics}.
\newblock New York University Courant Institute of Mathematical Sciences, New
  York, 1999.

\bibitem{Dumitriu-Edelman-Shuman07}
I.~Dumitriu, A.~Edelman, and G.~Shuman.
\newblock M{OPS}: multivariate orthogonal polynomials (symbolically).
\newblock {\em J. Symbolic Comput.}, 42(6):587--620, 2007.

\bibitem{Feral-Peche07}
D.~F{\'e}ral and S.~P{\'e}ch{\'e}.
\newblock The largest eigenvalue of rank one deformation of large {W}igner
  matrices.
\newblock {\em Comm. Math. Phys.}, 272, 2007.

\bibitem{Forrester10}
P.~J. Forrester.
\newblock {\em Log-gases and random matrices}, volume~34 of {\em London
  Mathematical Society Monographs Series}.
\newblock Princeton University Press, Princeton, NJ, 2010.

\bibitem{Forrester11}
P.~J. Forrester.
\newblock Probability densities and distributions for spiked {W}ishart
  $\beta$-ensembles.
\newblock 2011.
\newblock arXiv:1101.2261.

\bibitem{Gross-Richards87}
K.~I. Gross and D.~S.~P. Richards.
\newblock Special functions of matrix argument. {I}. {A}lgebraic induction,
  zonal polynomials, and hypergeometric functions.
\newblock {\em Trans. Amer. Math. Soc.}, 301(2):781--811, 1987.

\bibitem{Johansson98}
K.~Johansson.
\newblock On fluctuations of eigenvalues of random {H}ermitian matrices.
\newblock {\em Duke Math. J.}, 91(1):151--204, 1998.

\bibitem{Kriecherbauer-Shcherbina11}
T.~{Kriecherbauer} and M.~{Shcherbina}.
\newblock {Fluctuations of eigenvalues of matrix models and their
  applications}, 2010.
\newblock arXiv:1003.6121.

\bibitem{Macdonald95}
I.~G. Macdonald.
\newblock {\em Symmetric functions and {H}all polynomials}.
\newblock Oxford Mathematical Monographs. The Clarendon Press Oxford University
  Press, New York, second edition, 1995.
\newblock With contributions by A. Zelevinsky, Oxford Science Publications.

\bibitem{Mehta04}
M.~L. Mehta.
\newblock {\em Random matrices}, volume 142 of {\em Pure and Applied
  Mathematics (Amsterdam)}.
\newblock Elsevier/Academic Press, Amsterdam, third edition, 2004.

\bibitem{Mo11}
M.~Y. {Mo}.
\newblock {The rank 1 real Wishart spiked model I. Finite N analysis}.
\newblock 2010.
\newblock arXiv:1011.5404.

\bibitem{Mo11a}
M.~Y. {Mo}.
\newblock The rank 1 real {W}ishart spiked model.
\newblock 2011.
\newblock arXiv:1101:5144.

\bibitem{Muirhead82}
R.~J. Muirhead.
\newblock {\em Aspects of multivariate statistical theory}.
\newblock John Wiley \& Sons Inc., New York, 1982.
\newblock Wiley Series in Probability and Mathematical Statistics.

\bibitem{Paul07}
D.~Paul.
\newblock Asymptotics of sample eigenstructure for a large dimensional spiked
  covariance model.
\newblock {\em Statist. Sinica}, 17(4):1617--1642, 2007.

\bibitem{Saff-Totik97}
E.~B. Saff and V.~Totik.
\newblock {\em Logarithmic potentials with external fields}, volume 316 of {\em
  Grundlehren der Mathematischen Wissenschaften [Fundamental Principles of
  Mathematical Sciences]}.
\newblock Springer-Verlag, Berlin, 1997.
\newblock Appendix B by Thomas Bloom.

\bibitem{Shcherbina09}
M.~Shcherbina.
\newblock Edge universality for orthogonal ensembles of random matrices.
\newblock {\em J. Stat. Phys.}, 136(1):35--50, 2009.

\bibitem{Stanley89}
R.~P. Stanley.
\newblock Some combinatorial properties of {J}ack symmetric functions.
\newblock {\em Adv. Math.}, 77(1):76--115, 1989.

\bibitem{Wang08}
D.~Wang.
\newblock The largest sample eigenvalue distribution in the rank 1 quaternionic
  spiked model of {W}ishart ensemble.
\newblock {\em Ann. Probab.}, 37(4):1273--1328, 2009.

\end{thebibliography}

\end{document}